\theoremstyle{definition}
\newtheorem{theorem}{Theorem}[section]
\newtheorem{example}[theorem]{Example}
\newtheorem{corollary}[theorem]{Corollary}
\newtheorem{lemma}[theorem]{Lemma}
\newtheorem{proposition}[theorem]{Proposition}
\newtheorem{definition}[theorem]{Definition}
\newcommand{\rn}{\mathbb{R}^n}
\newcommand{\rd}{\mathbb{R}^d}
\newcommand{\rk}{\mathbb{R}^k}
\newcommand{\rr}{\mathbb{R}}
\renewcommand{\rm}{\mathbb{R}^m}
\newcommand{\E}[2][]{\mathbb{E}_{#1} \left[ #2 \right]}
\newcommand{\conE}[3][]{\E[#1]{#2 | #3}}
\newcommand{\Prob}[1]{\mathbb{P}\left(#1\right)}
\newcommand{\norm}[1][\cdot]{\left\| \kern.05em #1 \kern.05em \right\|}
\newcommand{\var}{\beta\operatorname{-VaR}}
\newcommand{\cvar}{\beta\operatorname{-CVaR}}
\newcommand{\conic}[1]{\operatorname{conic}\left( #1 \right)}
\newcommand{\elliptical}[3]{\operatorname{Elliptical}\left(#1, #2, #3\right)}
\newcommand{\interior}[1]{\operatorname{int}\left(#1\right)}
\newcommand{\minimize}[1][]{\underset{#1}{\operatorname{minimize}}\ }
\newcommand{\maximize}[1][]{\underset{#1}{\operatorname{maximize}}\ }
\newcommand{\aggregion}{\mathcal{A}}
\newcommand{\riskregion}{\mathcal{R}}
\title{Scenario generation for single-period portfolio selection problems with tail risk measures: coping with high dimensions and integer variables}
\author[*]{Jamie Fairbrother}
\author[*]{Amanda Turner}
\author[**]{Stein W. Wallace}
\affil[*]{STOR-i Centre for Doctoral Training, Lancaster University. United Kingdom}
\affil[**]{Department of Business and Management Science, Norwegian School of Economics. Norway}
\begin{document}
\maketitle

\begin{abstract}
  In this paper we propose a problem-driven scenario generation
  approach to the single-period portfolio selection problem which use
  tail risk measures such as conditional value-at-risk.  Tail risk
  measures are useful for quantifying potential losses in worst cases.
  However, for scenario-based problems these are problematic: because
  the value of a tail risk measure only depends on a small subset of the
  support of the distribution of asset returns, traditional scenario
  based methods, which spread scenarios evenly across the whole support
  of the distribution, yield very unstable solutions unless we use a
  very large number of scenarios. The proposed approach works by
  prioritizing the construction of scenarios in the areas of a
  probability distribution which correspond to the tail losses of
  feasible portfolios.

  The proposed approach can be applied to difficult instances of the
  portfolio selection problem characterized by high-dimensions,
  non-elliptical distributions of asset returns, and the presence of
  integer variables. It is also observed that the methodology works better
  as the feasible set of portfolios becomes more constrained. Based on
  this fact, a heuristic algorithm based on the sample average approximation
  method is proposed. This algorithm works by adding artificial constraints
  to the problem which are gradually tightened, allowing one to telescope
  onto high quality solutions.

\end{abstract}

\section{Introduction}
\label{sec:portfolio-intro}

In the portfolio selection problem one must decide how to invest in a
collection of financial instruments with uncertain returns which in
some way balances portfolio return against risk. There are many ways of modeling
this problem. In the robust optimization setting, the returns
of the portfolio are assumed to fall within some uncertainty set, and
one minimizes the worst-case of loss \cite{bertsimas2006robust}. This
approach is sometimes considered too conservative as
it does not make effective use of available information. 
In stochastic programming, the user uses their knowledge and available data to 
explicitly model asset returns as random vectors, and then optimize
some combination of expected return and risk measure. In between these two
paradigms is distributionally-robust optimization \cite{Huang2010185}
where the returns are modeled by a random vector whose distribution lies in some
uncertainty set, and the worst-case expected loss is minimized. 
This approach is particularly useful when only limited or unreliable data is available to model the uncertainty, but can lead to intractable problems. Besides the
optimization paradigm employed, portfolio selection problems
can be further categorized into single-period problems, where only
one portfolio selection is made, and multiperiod problems where the portfolio
may be rebalanced several times.

The work in this paper applies to the stochastic programming
single-period formulation of the portfolio selection problem. This
approach is popular as it allows one to flexibly model the return of
the distributions, can easily incorporate problem details such as
transactions costs \cite{lobo2007portfolio}, while remaining generally
more tractable than other types of models. We deal in particular with
a difficult variety of this problem type which involve tail risk
measures and potentially integer variables.

In the typical set-up the uncertain returns are modeled by random
variables, the total return of a portfolio is some linear combination
of these, and riskiness is measured by a real-valued function of the
total return which should in some way penalize potential large
losses. This approach was first proposed by Markowitz \cite{Mark52}
who used variance as a risk measure. 

Although the use of variance
has remained popular \cite{lobo2007portfolio} because it leads to tractable
convex programs, its use as a measure of risk is problematic for a few
reasons.  The foremost of these is perhaps that variance penalizes
large profits as well as large losses. As a consequence, in the case
where the returns of financial assets are not normally distributed,
using the variance can lead to potentially bad decisions; for instance, a
portfolio can be chosen in favor of one which always has higher
returns (see \cite{Young98} for an example of this).  This particular
issue can be overcome by using a ``downside'' risk measure, that is
one which only depends on losses greater than the mean, or some other
specified threshold, for example the semi-variance 
\cite[Chapter~9]{Markowitz}, mean regret \cite{DeRo99}, or value-at-risk
\cite{Jo96}. More recently, much research has been given
to coherent risk measures, a concept introduced in \cite{Artzner98}.
These are risk measures which have sensible properties such as subadditivity,
which in particular ensures that a risk measure incentivizes diversification
of a portfolio. Using a coherent risk measure in a portfolio selection problem should avoid flawed
decisions, such as the one cited in the case of variance.

In this work, we are interested in portfolio selection problems
involving \emph{tail risk measures}.  These can be thought of as risk
measures which only depend on the upper tail of a distribution above
some specified quantile. A canonical example of a tail risk measure
is the value-at-risk (VaR) \cite{Jo96}. The
$\beta$-VaR is defined to be the $\beta$-quantile of a random
variable. In portfolio selection problems this has the appealing
interpretation as the amount of capital required to cover up to
$\beta\times 100 \%$ of potential losses. Thus, tail risk measures, in particular
those which dominate the $\var$, are useful as they can give us some idea of
the amount capital at risk in the worst  $(1-\beta) \times 100 \%$ of potential
losses. Like variance, the value-at-risk is also problematic as it
is not a coherent measure of risk. Specifically, it is not
subadditive (see \cite{tasche2002expected} for example). Moreover, $\var$ leads to difficult and intractable problems
when used in an optimization context. The conditional value-at-risk (CVaR),
sometimes referred to as the expected shortfall, is another tail risk measure and can be roughly thought as
the conditional expectation of a random variable above the $\beta$-VaR.
It is both coherent \cite{acerbi2002coherence}, and more tractable in an optimization setting \cite{Rockafellar00}.

However, the use of risk measures, even coherent ones such
as $\cvar$, is still problematic in portfolio selection
problems where the asset returns are modeled with continuous
probability distributions. This is because the evaluation
of many risk measures for arbitrary continuously distributed returns
would involve the evaluation of a multidimensional integral, and
this becomes computationally infeasible when
our problems involve many assets. On the other hand, the evaluation
of such an integral reduces to a summation if the returns have a discrete
distribution. 

Scenario generation is the
construction of a finite discrete distribution to be used in a
stochastic optimization problem. This may involve fitting a parametric
model to asset returns
and then discretizing this distribution, or directly modeling them
with a discrete distribution, for example via moment-matching
\cite{HoylandEA00}.  In either case, standard scenario generation
methods struggle to adequately represent the uncertainty in problems
using tail risk measures. This is because the value of a tail risk
measure, by definition, only depends on a small subset of the support
of a random variable, and typical scenario generation methods will
spread their scenarios evenly across the whole support of the
distribution. This means that the region on which the value of the
tail risk depends, is represented by relatively few scenarios. Hence,
unless there is a very large number of scenarios, the value of of
tail risk measure is very unstable (see \cite{KautEA03} for example).

The natural remedy to this problem is to represent the regions of the distribution
on which the tail risk measure depends
with more scenarios. Intuition would tell us that these
correspond to the ``tails'' of the distribution. However, for a multivariate
distribution there is no canonical definition of the tails. If by tails,
we simply mean the region where at least one of the components exceeds
a large value, then the probability of this region quickly converges to one with the problem dimension,
and thus prioritizing scenarios in this region will be of little benefit.
Finding the relevant tails of the distribution is a non-trivial problem.

In our previous paper \cite{Fairbrother15a} we addressed the problem of
scenario generation for stochastic programs with an arbitrary loss function
functions which use tail risk measures,
and for this we defined the concept of a $\beta$-risk region. 
In portfolio selection, to each valid portfolio there is a distribution
of losses (or returns). The $\beta$-risk region consists of all
potential asset returns which lead to a loss in the $\beta$-tail for some
portfolio. We have shown that under mild conditions the value of a tail risk measure in effect only depends
on the distribution of returns in the risk region. Although characterizing this region in a convenient way is
generally not possible, we have been able to do this for the portfolio selection
problem when the asset returns are elliptically distributed. We have also
proposed a sampling approach to scenario generation using these risk regions
which prioritizes the generation of scenarios in the risk region.  We
demonstrated for simple examples that this methodology can produce
scenario sets which yield better and more stable solutions than does
basic sampling. 

In this paper we address issues related to the application of this
methodology to realistic portfolio selection problems. The first
major contribution of this work is the application of the methodology to
problems where the asset returns have non-elliptical distributions. In this
case the distribution of returns for a portfolio will in general
not have a convenient closed form, and so it is necessary to represent
the asset returns with a scenario set. In order to apply the methodology, we approximate the risk region for a non-elliptical distribution with the risk region
of an elliptical distribution. Moreover, we demonstrate
that the methodology is effective on difficult problems which have
high dimensions and integer variables.

In the paper \cite{Fairbrother15a} it was shown that our methodology
is more effective as the problem becomes more constrained. The second
major contribution of this work is the proposal of an heuristic
algorithm based on the stochastic average approximation (SAA) method
\cite{KleywegtEA01} which exploits this fact. This algorithm works by
adding artificial constraints to the problem which are gradually
tightened, allowing one to telescope onto high quality solutions.
This algorithm is presented in a general way and could be potentially
used on problems other than portfolio selection.

The paper is organized as follows. In Section~\ref{sec:portfolio-recap}
we formally define risk regions for general stochastic programs, and
present some new results on the use of approximate risk regions. 
In Section~\ref{sec:risk-regi-portf} we define the portfolio selection 
problem and recall some results on the use of the risk region methodology
for this problem. We also provide some new technical results related
exploiting risk regions of elliptical distributions.
In Section~\ref{sec:portfolio-scengen} we describe how risk regions
are exploited for the purpose of scenario generation, and present
a heuristic based on the SAA method based which uses of artificial
constraints. In Section~\ref{sec:portfolio-probnonrisk} we make some
empirical observations on how the probability of the non-risk region,
a quantity which determines the effectiveness of the methodology,
varies with the type of distribution of asset returns. 
In Section~\ref{sec:portfolio-numtests} we present results
for a broad range of numerical tests on the effectiveness of our
sampling and reduction algorithm using distributions constructed
from real asset return data. In Section~\ref{sec:portfolio-case}
we demonstrate the performance of the proposed heuristic on a
difficult case study problem. Finally, in Section~\ref{sec:portfolio-conclusions}
we make some concluding remarks.

\section{Risk regions for general stochastic programs}
\label{sec:portfolio-recap}

In this section we formally define the concept of risk region and present
results related to these. This theory does not just apply to portfolio selection
problems but more generally to stochastic programs with a tail risk measure.
In Section~\ref{sec:portfolio-recap-tail-risk} we recall the basic definitions
and fundamental results for risk regions which appeared in our previous
paper \cite{Fairbrother15a}. In Section~\ref{sec:appr-risk-regi} we present
some new results related to the approximation of risk regions.

\subsection{Tail risk measures and risk regions}
\label{sec:portfolio-recap-tail-risk}

A risk measure is
a function of a real-valued random variable representing
a loss. For $0 < \beta \leq 1$, a $\beta$-tail risk measure can be thought of as a
function of a random variable which depends only on the upper
$(1 - \beta)$-tail of the distribution. The precise definition uses the
\emph{generalized inverse distribution function} or (lower) \emph{quantile function}.

\begin{definition}[Quantile function and $\beta$-tail risk measure]
  Suppose $Z$ is a random variable with distribution function $F_Z$. Then 
  the generalized inverse distribution function, or \emph{quantile function}
  is defined as follows:
  \begin{align*}
    F^{-1}_Z : (0, 1] &\rightarrow \rr\cup\{\infty\}\\
    \beta &\mapsto \inf\{ z\in\rr: F_{Z}(x) \geq \beta \}
  \end{align*}
  Now a $\beta$-tail risk measure is any function of a random variable, $\rho_\beta(Z)$, which
  depends only on the quantile function of a random variable above $\beta$.
\end{definition}

\begin{example}[Value at risk (VaR)]
  Let $Z$ be a random variable, and $0< \beta < 1$. Then, the $\beta-$VaR for $Z$ is defined to be the $\beta$-quantile of $Z$:
  \begin{equation*}
    \var(Z) := F_Z^{-1}(\beta)
  \end{equation*}.
\end{example}

\begin{example}[Conditional value at risk (CVaR)]
  Let $Z$ be a random variable, and $0 < \beta < 1$. Then, the
  $\cvar$ can be thought roughly as the conditional
  expectation of a random variable above its $\beta$-quantile.
  The following alternative characterization of $\cvar$
  \cite{acerbi2002coherence} shows directly that it is
  a $\beta$-tail risk measure.
  \begin{equation*}
    \cvar(Z) = \int_{\beta}^1 F^{-1}_Z(u)\ du
  \end{equation*}
\end{example}

The observation that we exploit for this work is that very
different random variables will have the same $\beta$-tail risk
measure as long as their $\beta$-tails are the same. 

In the optimization context we suppose that the loss depends on some
decision $x\in \mathcal{X} \subseteq \rk$ and the outcome of some
latent random vector $Y$ with support $\mathcal{Y} \subseteq \rd$,
defined on a probability space $(\Omega, \mathcal{F}, \mathbb{P})$,
and which is independent of $x$. That is, we suppose our loss is
determined by some function, $f: \mathcal{X}\times\rd \rightarrow
\rr$, which we refer to as the \emph{loss function}.  For a given
decision $x\in\mathcal{X}$, the random variable associated with the
loss is thus $f(x, Y)$.

To avoid repeated use of cumbersome notation we introduce the following
short-hand for distribution and quantile functions:
 \begin{align*}
   F_x(z) &:= F_{f(x,Y)}(z) = \Prob{f(x,Y)\leq z},\\
   F_x^{-1}(\beta) &:= F_{f(x,Y)}^{-1}(\beta) = \inf\{z\in\rr:\ F_x(z) \geq \beta\}.
 \end{align*}

In \cite{Fairbrother15a} we introduced the concept of a
risk region for a stochastic program using a tail-risk measure.
We define this now for a general stochastic program.

\begin{definition}[Risk region]
  \label{def:risk-region}
The $\beta$-risk region associated associated with the random vector
$Y$ and the feasible region $\mathcal{X}\subseteq\rd$ is as follows:
\begin{equation}
  \label{eq:portfolio-riskregion-1}
  \mathcal{R}_{Y, \mathcal{X}}(\beta) := \bigcup_{x\in\mathcal{X}}\{y\in\rd :\ f(x,y) \geq F^{-1}_{x}\left(\beta\right)\}.
\end{equation}
\end{definition}

The risk region consists precisely of those outcomes of $Y$ which have
a loss in the $\beta$-tail of the loss distribution for \emph{some}
feasible decision. We refer to the complement of the risk
region as the non-risk region and this consists of outcomes which
never lead to a loss in the $\beta$-tail; it can be written as follows:
\begin{equation}
  \label{eq:portfolio-non-risk-region}
  \riskregion_{Y,\mathcal{X}}(\beta)^{c} = \bigcap_{x\in\mathcal{X}} \{y \in \rd : f(x,y) < F_{x}^{-1}(\beta)\}.
\end{equation}

The following theorem was proved in \cite{Fairbrother15a} 
and states that under mild conditions the
value of a tail risk measure is completely determined by the the
distribution of the random vector $Y$ in the risk region. That is, the
values of the tail risk measure of any two random vectors with
identical distributions in the risk region will be the same for all
feasible decisions. The technical condition in
\eqref{eq:portfolio-agg-condition}, which we call the
aggregation condition for reasons explained below, 
precludes certain degenerate cases. In essence, this condition ensures
that there is enough mass in the set to ensure that the
$\beta$-quantile does not depend on the probability distribution
outside of it.

\begin{theorem}
  \label{thr:portfolio-equiv-rv}
  Let $\riskregion \supseteq \riskregion_{Y,\mathcal{X}}(\beta)$ be such that for all $x\in\mathcal{X}$ the following condition holds:
  \begin{equation}
    \label{eq:portfolio-agg-condition}
    \Prob{Y \in \{y: z' < f(x,y) \leq F_{x}^{-1}\left(\beta\right)\}\cap\riskregion} > 0 \qquad \forall\ z' < F^{-1}_{x}\left(\beta\right).
  \end{equation}
  If $\tilde{Y}$ is a random vector for which the following holds:
  \begin{equation}
    \label{eq:portfolio-agg-condition-2}
    \Prob{Y \in \mathcal{A}} = \Prob{\tilde{Y} \in \mathcal{A}}\qquad\text{for any } \mathcal{A}\subseteq \riskregion,
  \end{equation}
  then $\rho_\beta\left(f(x,Y)\right) = \rho_\beta\left(f(x,\tilde{Y})\right)$ for all $x\in \mathcal{X}$, for any $\beta$-tail risk measure $\rho_\beta$.
\end{theorem}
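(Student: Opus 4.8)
The plan is to fix a feasible decision $x\in\mathcal{X}$, write $q:=F_x^{-1}(\beta)$, let $\tilde F_x$ denote the distribution function of $f(x,\tilde Y)$, and reduce the whole statement to a comparison of quantile functions. Since by definition a $\beta$-tail risk measure $\rho_\beta$ depends only on the quantile function at arguments $u\ge\beta$, it suffices to show that $f(x,Y)$ and $f(x,\tilde Y)$ have the same quantile function there, i.e. that $F_x^{-1}(u)=\tilde F_x^{-1}(u)$ for all $u\ge\beta$. The claimed identity $\rho_\beta(f(x,Y))=\rho_\beta(f(x,\tilde Y))$ then follows at once, uniformly in $x$, so the entire argument is carried out for a single fixed $x$.

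The first step I would take is to transfer the set-level agreement \eqref{eq:portfolio-agg-condition-2} to the upper tail of the loss distribution. By the definition of the risk region \eqref{eq:portfolio-riskregion-1}, the super-level set $\{y:f(x,y)\ge q\}$ is contained in $\mathcal{R}_{Y,\mathcal{X}}(\beta)\subseteq\mathcal{R}$. Hence for every $z\ge q$ the event $\{y:f(x,y)>z\}$ is a subset of $\mathcal{R}$, so \eqref{eq:portfolio-agg-condition-2} gives $\mathbb{P}(f(x,Y)>z)=\mathbb{P}(f(x,\tilde Y)>z)$ and therefore $F_x(z)=\tilde F_x(z)$ for all $z\ge q$. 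In particular $\tilde F_x(q)=F_x(q)\ge\beta$, which already yields $\tilde F_x^{-1}(\beta)\le q$.

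The second step is to control $\tilde F_x$ below $q$. From \eqref{eq:portfolio-non-risk-region} every $y\in\mathcal{R}^c$ satisfies $f(x,y)<F_x^{-1}(\beta)=q$, so the whole non-risk mass sits at losses strictly below $q$; and taking $\mathcal{A}=\mathcal{R}$ in \eqref{eq:portfolio-agg-condition-2} shows that $Y$ and $\tilde Y$ carry the same total non-risk probability. Splitting $\tilde F_x(z)$ into its risk-region and non-risk contributions and using monotonicity, I would obtain $\tilde F_x(z)\le F_x(q^-)\le\beta$ for every $z<q$. Together with the first step this confines the level sets $\{z:F_x(z)\ge u\}$ and $\{z:\tilde F_x(z)\ge u\}$ to $[q,\infty)$ for every $u>\beta$, where the two distribution functions coincide; hence $F_x^{-1}(u)=\tilde F_x^{-1}(u)$ for all $u>\beta$.

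What remains is the single endpoint $u=\beta$, and this is exactly where the aggregation condition \eqref{eq:portfolio-agg-condition} enters and where I expect the main obstacle to lie. The bound of the second step only gives $\tilde F_x(z)\le\beta$ for $z<q$, so I must still rule out that an admissible relocation of the non-risk mass of $\tilde Y$ accumulates just enough probability below $q$ to attain the level $\beta$ and pull the $\beta$-quantile strictly below $q$. Condition \eqref{eq:portfolio-agg-condition} states precisely that $\mathbb{P}(Y\in\{z'<f(x,\cdot)\le q\}\cap\mathcal{R})>0$ for every $z'<q$, i.e. the loss mass carried by the risk region increases strictly as $q$ is approached from below. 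This anchors the $\beta$-quantile inside $\mathcal{R}$: the crossing of level $\beta$ is driven by the risk-region part of the distribution, which is common to $Y$ and $\tilde Y$, so no reshuffling of the non-risk mass (confined below $q$) can manufacture a crossing before $q$, giving $\tilde F_x^{-1}(\beta)\ge q$ and hence $\tilde F_x^{-1}(\beta)=q$. Making this last point fully rigorous, in particular handling atoms of the loss distribution at $q$, is the delicate part of the argument, and it is the sole reason the aggregation condition is imposed.
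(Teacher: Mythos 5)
The paper itself does not prove this theorem (it is imported verbatim from \cite{Fairbrother15a}), so I can only assess your argument on its own terms. Your reduction to quantile functions and your first three steps are correct and are the natural route: for $z\geq q$ the event $\{f(x,\cdot)>z\}$ lies in $\riskregion_{Y,\mathcal{X}}(\beta)\subseteq\riskregion$, so $F_x$ and $\tilde F_x$ agree on $[q,\infty)$; and since $\riskregion^{c}\subseteq\riskregion_{Y,\mathcal{X}}(\beta)^{c}$ forces all non-risk mass to losses strictly below $q$ while $\mathcal{A}=\riskregion$ equalizes the total non-risk probabilities, you correctly get $\tilde F_x(z)\leq F_x(q^-)\leq\beta$ for $z<q$, which settles every $u>\beta$. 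The genuine gap is exactly where you place it, at $u=\beta$, but it is worse than ``delicate'': the anchoring you describe cannot be derived from \eqref{eq:portfolio-agg-condition} as stated. Your own decomposition gives, for $z<q$, the bound $\tilde F_x(z)\leq F_x(q^-)-\Prob{Y\in\riskregion\cap\{z<f(x,Y)<q\}}$, whereas the aggregation condition only supplies $\Prob{Y\in\riskregion\cap\{z<f(x,Y)\leq q\}}>0$; the two differ by a possible atom of the loss at $q$ carried by $\riskregion$. If $F_x(q^-)=\beta$, the mass accumulating to level $\beta$ from below lies entirely in $\riskregion^{c}$, and \eqref{eq:portfolio-agg-condition} holds only through that atom, then a $\tilde Y$ pushing all its non-risk mass to losses below some $z_0<q$ attains $\tilde F_x(z_0)=\beta$ and hence $\tilde F_x^{-1}(\beta)<q$. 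A one-dimensional instance: loss uniform on $(-1,0)$ with mass $0.5$ (all in $\riskregion^{c}$), atoms of mass $0.1$ at loss $0$ and $0.4$ at loss $1$ (in $\riskregion$), $\beta=0.5$; then $q=0$ and the aggregation condition holds via the atom at $0$, yet relocating the uniform mass to loss $-10$ moves the $\beta$-quantile to $-10$.

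The repair is a non-atomicity hypothesis at the quantile, which is implicitly available in the setting the paper actually works in. If either $F_x(q^-)<\beta$ (in which case the endpoint case is immediate and \eqref{eq:portfolio-agg-condition} is not even needed there), or $\Prob{Y\in\riskregion\cap\{f(x,Y)=q\}}=0$ (guaranteed, e.g., by assumption (A) of Section 2.2 or by the continuity hypotheses of Corollary 3.1), then the aggregation condition does yield $\Prob{Y\in\riskregion\cap\{z<f(x,Y)<q\}}>0$ for every $z<q$, your inequality becomes strict, $\tilde F_x(z)<\beta$ for all $z<q$, and $\tilde F_x^{-1}(\beta)=q$ follows. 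You should state explicitly which of these cases you are invoking and carry out the displayed inequality rather than appealing to the informal ``crossing is driven by the risk-region part'' picture, which is precisely what fails in the atomic configuration above.
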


With regards to scenario generation, this theorem says that any
scenarios in the non-risk region can be aggregated into a single point, reducing
the size of the problem, without affecting the value of the
tail risk measure. This motivates the term \emph{aggregation condition}
for \eqref{eq:portfolio-agg-condition}. 
The transformed random vector where all mass in a region
has been concentrated into its conditional expectation plays a special role
in this work. We call this the \emph{aggregated random vector}.

\begin{definition}[Aggregated Random Vector]
  \label{def:tailrisk-agg-rv}
  For some set $\riskregion\supseteq\riskregion_{Y,\mathcal{X}}$ the
  \emph{aggregated random vector} is defined as follows:
\begin{equation*}
  \psi_{\riskregion}(Y) := \begin{cases} Y &\text{if } Y \in \riskregion,\\
    \E{\ Y | Y\in\riskregion^c\ } & \text{otherwise.} \end{cases}
\end{equation*}
\end{definition}

The conditional expectation $\E{\ Y | Y\in\riskregion^c\ }$ is guaranteed to
fall in the non-risk region if, for example, the loss function
is convex \cite[Proposition~3]{Fairbrother15a}.

\subsection{Approximation of Risk Regions}
\label{sec:appr-risk-regi}

The methodology proposed in this paper requires a characterization
of a risk region which allows one to easily test membership. A
convenient characterization for the exact risk region
as defined in \eqref{eq:portfolio-riskregion-1} is in general
not possible as this set is determined by the loss function,
distribution of the random vector and the problem constraints.
Even for the portfolio selection which has a simple loss function,
we cannot find a convenient form for arbitrary distributions
of asset returns.

Recall that Theorem~\ref{thr:portfolio-equiv-rv}
applies to any set containing the risk region.
Therefore, one way to circumvent the problem of
finding the exact risk region would be to use
a \emph{conservative risk region}, that is, a set
which contains the exact risk region. This approach
is particularly useful for problems which have
constraints which cannot be easily taken into
account, such as constraints involving
integer variables in the portfolio selection problem.
By Definition~\ref{def:risk-region}, if $\mathcal{X}\subseteq\mathcal{X}'$
then $\riskregion_{Y,\mathcal{X}} \subseteq \riskregion_{Y,\mathcal{X}'}$. Therefore, ignoring
some constraints will yield a risk region which is conservative.

In the case where one cannot construct a risk region
for a given loss function or distribution, it may be
difficult to find a conservative risk region. Moreover, it could
be the case that a conservative risk region may be too conservative
to be of any use. Instead one might try to use an
\emph{approximate risk region}. For the portfolio selection
problem we handle distributions for which we cannot conveniently characterize
the exact risk region by using the risk region of a surrogate distribution
which is similar to the true distribution.

Denote by $\riskregion\subset\rd$ an approximate risk region. 
When using an approximate risk region
the value of the tail risk measure may be distorted
for a decision $x\in\mathcal{X}$ if $\riskregion$ does not contain
all outcomes in the $\beta$-tail, that is, unless the following condition
holds:
\begin{equation}
  \label{eq:approx-risk-risk}
  \sup_{y\in\riskregion^{c}} f(x,y) \leq F^{-1}_{x}(\beta).
\end{equation}
If \eqref{eq:approx-risk-risk} holds then we say that
that the approximate risk region $\riskregion$ is valid for decision $x$.

We show in this section that if the approximate risk region is not valid
for a particular decision, then, under mild assumptions, the values of
$\var$ and $\cvar$ tail risk measures are distorted downwards. In
Section~\ref{sec:risk-regi-portf}, we will exploit this observation to
show that for the problems in which we are interested, if there is no
distortion of the value of the tail risk measure at the optimal
solution, then this solution is also optimal with respect to the true
problem.

For the results in this section we employ the following notation:
$\hat{F}_{x}$ and $\hat{F}^{-1}_{x}$ denote respectively the
distribution and quantile functions of $\psi_{\riskregion}(Y)$. We
require the following conditions:

\begin{enumerate}[(A)]
\item $z\mapsto F_{x}(z)$ is continuous for all $x\in\mathcal{X}$
\item $\conE{Y}{Y\in\riskregion^{c}} \in \riskregion_{Y,\mathcal{X}}^{c}$
\end{enumerate}

Assumption (B) requires that conditional expectation of $Y$ in the
complement of approximate risk region belongs to the exact non-risk
region.  This means that the loss at the aggregated point will have a loss
below the $\beta$-quantile of $f(x,Y)$ for all feasible decisions $x\in\mathcal{X}$. 
Before stating and proving the key result, we require the following lemma.

\begin{lemma}
  \label{lem:cont-cdf}
  Under assumptions (A) and (B), the approximate distribution function $\hat{F}_{x}$ is continuous for all $x\in\mathcal{X}$ at $z$ for ${z > f\left(x,\conE{Y}{Y\in\riskregion}\right)}$.
\end{lemma}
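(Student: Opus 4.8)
The plan is to compute $\hat{F}_{x}$ explicitly and to observe that aggregation introduces exactly one atom, located at the image under $f(x,\cdot)$ of the aggregation point. First I would condition on whether $Y$ falls in the approximate risk region. By the definition of $\psi_{\riskregion}(Y)$, on the event $\{Y\in\riskregion\}$ the loss equals $f(x,Y)$, whereas on $\{Y\in\riskregion^{c}\}$ it takes the single value $c_{x}:=f\bigl(x,\conE{Y}{Y\in\riskregion^{c}}\bigr)$, the loss at the aggregation point. Writing $p:=\Prob{Y\in\riskregion^{c}}$, this yields the decomposition
\begin{equation*}
  \hat{F}_{x}(z)=\Prob{f(x,Y)\leq z,\ Y\in\riskregion}+p\,\indi{\{z\geq c_{x}\}},
\end{equation*}
valid for every $z\in\rr$. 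The second summand is a step function whose only discontinuity is the jump of size $p$ at $z=c_{x}$, so the argument reduces to showing that the first summand, the sub-distribution $G_{x}(z):=\Prob{f(x,Y)\leq z,\ Y\in\riskregion}$, is continuous in $z$.

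To see that $G_{x}$ is continuous I would bound its jumps. For any $z_{0}$ the jump of $G_{x}$ at $z_{0}$ equals $\Prob{f(x,Y)=z_{0},\ Y\in\riskregion}$, and since $\{f(x,Y)=z_{0},\ Y\in\riskregion\}\subseteq\{f(x,Y)=z_{0}\}$ this is at most $\Prob{f(x,Y)=z_{0}}$, which vanishes by assumption (A), as continuity of $F_{x}$ means $f(x,Y)$ has no atoms. More abstractly, the sub-probability measure $A\mapsto\Prob{f(x,Y)\in A,\ Y\in\riskregion}$ is dominated by the atomless law of $f(x,Y)$, hence is itself atomless, so its distribution function $G_{x}$ is continuous everywhere. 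Combining the two pieces, $\hat{F}_{x}$ is continuous at every $z\neq c_{x}$, and in particular at every $z>c_{x}$, which is the assertion.

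Finally I would record the role of assumption (B). Continuity above $c_{x}$ uses only (A); assumption (B) guarantees that the aggregation point $\conE{Y}{Y\in\riskregion^{c}}$ lies in the exact non-risk region $\riskregion_{Y,\mathcal{X}}(\beta)^{c}$, so that $c_{x}<F^{-1}_{x}(\beta)$ for every $x\in\mathcal{X}$, placing the single jump strictly below the $\beta$-quantile; this is the placement the downstream distortion results will need. I do not expect a serious obstacle here: the only substantive point is the atomlessness of the restricted measure $G_{x}$, and everything else is the bookkeeping of separating the continuous bulk of the distribution from the single aggregated atom at $c_{x}$.
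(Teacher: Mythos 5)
Your proof is correct and follows essentially the same route as the paper's: decompose $\hat{F}_{x}$ according to whether $Y$ lands in $\riskregion$ or $\riskregion^{c}$, observe that the aggregated mass contributes only a single atom at $c_{x}=f\bigl(x,\conE{Y}{Y\in\riskregion^{c}}\bigr)$ which is irrelevant for $z>c_{x}$, and bound the increments of the remaining sub-distribution by those of $F_{x}$, which vanish by assumption (A). Your explicit indicator decomposition and the remark that (B) is not needed for continuity itself (only to place the atom below the $\beta$-quantile) are slightly cleaner than the paper's increment computation, and you correctly read the aggregation point as the conditional expectation on $\riskregion^{c}$ despite the statement's typo, but the substance is identical.
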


\begin{proof}
  Fix $x\in\mathcal{X}$ and $z >
f\left(x,\conE{Y}{Y\in\riskregion}\right)$, and without loss of
generality assume that ${f\left(x,\conE{Y}{Y\in\riskregion}\right)< z' < z}$. Now,
  
  \begin{align*}
    \hat{F}_{x}( z ) - \hat{F}_{x}(z') &= \Prob{f(x, \psi_{\riskregion}(Y)) < z} - \Prob{f(x, \psi_{\riskregion}(Y)) < z'} \\
    & = \left(\Prob{\{Y\in\riskregion\}\cap \{f(x,Y) \leq z\}} + \Prob{Y\in\riskregion^{c}}\right) - \left(\Prob{\{Y\in\riskregion\}\cap \{f(x,Y) \leq z'} + \Prob{Y\in\riskregion^{c}}\right)\\
    &= \Prob{\{Y\in\riskregion\} \cap \{z'< f(x,Y) \leq z\}}\\
    &\leq F_{x}(z) - F_{x}(z') \rightarrow 0 \text{ as } z' \rightarrow z \qquad \text{by assumption (A).}
  \end{align*}
\end{proof}

The key result states that $\beta$-quantile (or $\var$) and $\cvar$
for the aggregated random vector cannot increase when using
an approximate risk region under the above assumptions. The implications
of this result on the portfolio selection problem are made clear
in Section~\ref{sec:probl-stat-appl}.

\begin{proposition}
  \label{prop:low-var}
  Under assumptions (A) and (B), we have
  \begin{itemize}
  \item $\hat{F}^{-1}_{x}(\beta) \leq F^{-1}_{x}(\beta)$
  \item $\cvar\left( f(x, \psi_{\riskregion}(Y)) \right) \leq \cvar\left( f(x, Y) \right)$
  \end{itemize}
  with equality if $\riskregion$ is valid for $x\in\mathcal{X}$ (in the sense of 
  \eqref{eq:approx-risk-risk}) and the aggregation condition holds.
\end{proposition}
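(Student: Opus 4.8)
The plan is to work directly with the distribution function $\hat{F}_x$ of the aggregated loss, compare it pointwise to $F_x$, and then translate that comparison into the quantile (i.e.\ $\var$) and, via the integral formula for $\cvar$, into the conditional value-at-risk. Write $c := \conE{Y}{Y\in\riskregion^c}$ for the aggregation point. The first step is to record that the aggregated loss equals $f(x,Y)$ on $\{Y\in\riskregion\}$ and equals the constant $f(x,c)$ on $\{Y\in\riskregion^c\}$, so that
\begin{equation*}
  \hat{F}_x(z) = \Prob{\{Y\in\riskregion\}\cap\{f(x,Y)\leq z\}} + \Prob{Y\in\riskregion^c}\,\indi{f(x,c)\leq z}.
\end{equation*}
Subtracting the analogous decomposition of $F_x(z)$ shows that for every $z\geq f(x,c)$ the mass of $\riskregion^c$ with loss exceeding $z$ is the only discrepancy, giving $\hat{F}_x(z)\geq F_x(z)$: aggregation can only move probability mass downward.

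The second step invokes assumption (B). Since $c$ lies in the exact non-risk region, $f(x,c) < F^{-1}_x(\beta) \leq F^{-1}_x(u)$ for every $u\geq\beta$ and every $x\in\mathcal{X}$. Hence $z = F^{-1}_x(u)$ satisfies $z > f(x,c)$, the pointwise bound of the first step applies, and continuity of $F_x$ (assumption (A)) gives $\hat{F}_x\big(F^{-1}_x(u)\big) \geq F_x\big(F^{-1}_x(u)\big) = u$. By the definition of the quantile function as an infimum, this yields $\hat{F}^{-1}_x(u)\leq F^{-1}_x(u)$ for all $u\in[\beta,1)$. Taking $u=\beta$ gives the first bullet; integrating this inequality over $u\in[\beta,1)$ and using $\cvar(Z)=\int_\beta^1 F^{-1}_Z(u)\,du$ gives the second. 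Continuity of $\hat{F}_x$ above $f(x,c)$, supplied by Lemma~\ref{lem:cont-cdf}, is what lets me pass to $\hat{F}^{-1}_x$ without worrying about jumps.

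The third step treats the equality claim, and here the two hypotheses play distinct roles. If $\riskregion$ is valid for $x$ then $\riskregion^c\subseteq\{y : f(x,y)\leq F^{-1}_x(\beta)\}$, so for any $z\geq F^{-1}_x(\beta)$ the whole mass of $\riskregion^c$ already satisfies $f(x,Y)\leq z$; substituting into the formula of the first step collapses it to $\hat{F}_x(z)=F_x(z)$ for all $z\geq F^{-1}_x(\beta)$, which at once forces $\hat{F}^{-1}_x(u)=F^{-1}_x(u)$ for every $u>\beta$ and hence equality of $\cvar$. The delicate point, which I expect to be the main obstacle, is the boundary value $u=\beta$ (the $\var$): validity by itself does not stop the quantile from dropping, since mass of $\riskregion^c$ lying just below $F^{-1}_x(\beta)$ could be pulled down to $c$ and open a gap. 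This is exactly what the aggregation condition~\eqref{eq:portfolio-agg-condition} forbids. Concretely, under validity one has $\beta = \Prob{\{Y\in\riskregion\}\cap\{f(x,Y)\leq F^{-1}_x(\beta)\}} + \Prob{Y\in\riskregion^c}$, so for any $z<F^{-1}_x(\beta)$,
\begin{equation*}
  \beta - \hat{F}_x(z) = \Prob{\{Y\in\riskregion\}\cap\{z < f(x,Y)\leq F^{-1}_x(\beta)\}} > 0,
\end{equation*}
the strict positivity being precisely the aggregation condition with $z'=z$. Thus $\hat{F}_x(z)<\beta$ for all $z<F^{-1}_x(\beta)$, which forces $\hat{F}^{-1}_x(\beta)\geq F^{-1}_x(\beta)$ and, combined with the first step, completes the equality.
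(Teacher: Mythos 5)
Your proof is correct, and for everything past the first bullet it takes a genuinely different and cleaner route than the paper. For the quantile bound both arguments hinge on the same identity, namely that for $z\geq f\left(x,\conE{Y}{Y\in\riskregion^{c}}\right)$ one has $\hat{F}_{x}(z)-F_{x}(z)=\Prob{\{Y\in\riskregion^{c}\}\cap\{f(x,Y)>z\}}\geq 0$, though you state it pointwise in $z$ rather than only at $z=F^{-1}_{x}(\beta)$, which immediately yields $\hat{F}^{-1}_{x}(u)\leq F^{-1}_{x}(u)$ for \emph{all} $u\in[\beta,1)$. The divergence is in the $\cvar$ step: the paper invokes the Rockafellar--Uryasev-style representation $\cvar(Z)=\frac{1}{1-\beta}\bigl(\E{Z\mathbbm{1}_{Z\geq F_{Z}^{-1}(\beta)}}-F_{Z}^{-1}(\beta)(\beta-\Prob{Z<F^{-1}_{Z}(\beta)})\bigr)$ and runs a two-case analysis on whether $\hat{F}^{-1}_{x}(\beta)$ coincides with the atom at the aggregation point, with careful matching of the probabilities of the two domains of integration; you instead integrate the quantile domination against the characterization $\cvar(Z)=\int_{\beta}^{1}F^{-1}_{Z}(u)\,du$ that the paper itself quotes, which is valid for arbitrary (including atomic) distributions and so absorbs the discontinuity case for free --- indeed your appeal to Lemma~\ref{lem:cont-cdf} is not actually needed, since $\hat{F}_{x}\bigl(F^{-1}_{x}(u)\bigr)\geq u$ forces $\hat{F}^{-1}_{x}(u)\leq F^{-1}_{x}(u)$ directly from the infimum definition of the quantile. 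For the equality claim the paper simply defers to Theorem~\ref{thr:portfolio-equiv-rv} with $\mathcal{X}=\{x\}$, whereas you give a self-contained argument that cleanly separates the roles of the two hypotheses: validity alone pins down $\hat{F}^{-1}_{x}(u)=F^{-1}_{x}(u)$ for $u>\beta$ (hence the $\cvar$), while the aggregation condition is needed precisely at the boundary $u=\beta$ to stop the $\var$ from dropping. What the paper's longer computation buys in exchange is an explicit expression for the $\cvar$ gap as a difference of two integrals over matched-probability sets, which quantifies the distortion rather than merely signing it.
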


\begin{proof}
  \begin{align*}
    \Prob{f(x,\psi_{\riskregion}(Y)) \leq F^{-1}_{x}(\beta)} &= \Prob{\{Y\in\riskregion\}\cap\{f(x,Y) \leq F^{-1}_{x}(\beta)\}} + \Prob{Y\in\riskregion^{c}}\\
    &= \underbrace{\Prob{f(x,Y) \leq F^{-1}(\beta)}}_{=\beta\ \text{by assumption (A)}} + \Prob{\{Y\in\riskregion^{c}\} \cap \{f(x,Y) > F^{-1}_{x}(\beta)\}}\\
     &\geq \beta.
  \end{align*}
  Hence, $\hat{F}_{x}^{-1}(\beta) \leq F^{-1}_{x}(\beta)$.

  For the $\cvar$, recall that for a random variable $Z$ this can be written as follows \cite{acerbi2002coherence}:
  \begin{equation*}
    \cvar(Z) = \frac{1}{1-\beta}\left(\E{Z\mathbbm{1}_{Z\geq F_{Z}^{-1}(\beta)}} - F_{Z}^{-1}(\beta)\left(\beta - \Prob{Z < F^{-1}_{Z}(\beta)}\right)\right)
  \end{equation*}
  where $\mathbbm{1}_{A}$ denotes the indicator function of event $A$.
  Since $F_{x}$ is continuous we can write:
  \begin{equation*}
    \cvar(f(x,Y)) = \frac{1}{1-\beta}\E{f(x,Y)\mathbbm{1}_{f(x,Y)\geq F_{x}^{-1}(\beta)}}.
  \end{equation*}

  On the other hand, $\hat{F}_{x}$ could have a discontinuity at $\hat{F}^{-1}_{x}(\beta)$ if $\hat{F}^{-1}_{x}(\beta) = f\left(x, \conE{Y}{Y\in\riskregion}\right)$.
  We therefore consider two cases:
  \begin{enumerate}
  \item $\hat{F}_{x}^{-1}(\beta) > f\left(x, \conE{Y}{Y\in\riskregion}\right)$
  \item $\hat{F}_{x}^{-1}(\beta) = f\left(x, \conE{Y}{Y\in\riskregion}\right)$
  \end{enumerate}

  In the first case, $\hat{F}_{x}$ in continuous at $\hat{F}_{x}^{-1}(\beta)$ by Lemma~\ref{lem:cont-cdf} so we can write:
  \begin{align*}
    \cvar(f(x,\psi_{\riskregion}(Y))) &= \frac{1}{1-\beta}\E{f(x,\psi_{\riskregion}(Y))\mathbbm{1}_{f(x,\psi_{\riskregion}(Y))\geq F_{x}^{-1}(\beta)}}\\
    &= \int_{\riskregion\cap\{y: \hat{F}^{-1}_{x}(\beta) \leq f(x,y) < F^{-1}_{x}(\beta)\} } f(x,y)\ d\Prob{y} + \int_{\riskregion\cap\{y:f(x,y) \geq F_{x}^{-1}(\beta)\}} f(x,y)\ d\Prob{y}
  \end{align*}
  Therefore,
  \begin{align*}
    \cvar(f(x,Y)) - \cvar(f(x,\psi_{\riskregion}(Y))) & = \frac{1}{1-\beta}\bigg( \int_{\riskregion^{c}\cap\{y:f(x,y) \geq F_{x}^{-1}(\beta)\}} f(x,y)\ d\Prob{y} - \\
  & \qquad  \int_{\riskregion\cap\{y: \hat{F}^{-1}_{x}(\beta) < f(x,y) < F^{-1}_{x}(\beta)\} } f(x,y)\ d\Prob{y} \bigg).
  \end{align*}
  Note that the integrand of the first term is greater than that of
  the second term over the respective domain of integration. Therefore,
  to show that the above quantity is non-negative, it is enough to
  show that the domain of integration of the first term has the same probability
  as the second. To show this, first not that:
  \begin{align*}
    &\Prob{f(x,Y) \leq F^{-1}_{x}(\beta)} = \beta,\\
    &\Prob{f(x,\psi_{\riskregion}(Y))\leq \hat{F}^{-1}_{x}(\beta)} = \Prob{Y\in\riskregion^{c}} + \Prob{\riskregion\cap\{f(x,Y) \leq \hat{F}_{x}^{-1}(\beta)} = \beta.
  \end{align*}
  Therefore,
  \begin{equation*}
    \Prob{f(x,Y) \leq F^{-1}_{x}(\beta)} = \Prob{Y\in\riskregion^{c}} + \Prob{\riskregion\cap\{f(x,Y) \leq \hat{F}_{x}^{-1}(\beta)},
  \end{equation*}
  rearranging which gives
  \begin{equation*}
    \Prob{\riskregion\cap\{\hat{F}^{-1}_{x}(\beta) < f(x,Y) \leq F^{-1}_{x}(\beta)\}} = \Prob{\riskregion^{c}\cap\{f(x,Y) > F_{x}^{-1}(\beta)},
  \end{equation*}
  as required.

  In the second case, $\hat{F}_{x}$ has a discontinuity at $\hat{F}_{x}^{-1}(\beta)$, and so the $\cvar$ is written as follows:
  \begin{align*}
    \cvar\left(f(x,\psi_{\riskregion}(Y))\right) &= \frac{1}{1-\beta}\Bigg(\hat{F}_{x}^{-1}(\beta)\Prob{Y\in\riskregion^{c}} + \int_{\riskregion\cap\{y: \hat{F}^{-1}_{x}(\beta) \leq f(x,y)\} } f(x,y)\ d\Prob{y} - \\
    & \qquad\qquad \hat{F}_{x}^{-1}(\beta)\left(\beta - \Prob{\riskregion\cap\{f(x,Y) \leq \hat{F}_{x}^{-1}(\beta)}\right)\Bigg)\\
  \end{align*}
Noting that,
\begin{align*}
  & \{f(x,Y) \geq F^{-1}_{x}(\beta)\} = \left(\{Y\in\riskregion\}\cap\{f(x,Y) \geq F^{-1}_{x}(\beta)\}\right) \bigcup \left(\{Y\in\riskregion^{c}\}\cap\{f(x,Y) \geq F^{-1}_{x}(\beta)\}\right) \qquad\text{ and}\\
  & \left(\{Y\in\riskregion\}\cap\{f(x,Y) > \hat{F}^{-1}_{x}(\beta)\}\right) \setminus \left(\{Y\in\riskregion\} \cap \{f(x,Y) \geq F^{-1}_{x}(\beta)\}\right) = \{Y\in\riskregion\}\cap\{\hat{F}^{-1}_{x}(\beta) < f(x,Y) < F^{-1}_{x}(\beta)\},
\end{align*}
we can write $\cvar(f(x,Y)) - \cvar(f(x,\psi_{\riskregion}(Y)))$ as:
\begin{align*}
 \frac{1}{1-\beta}\Bigg(&\int_{\riskregion^{c}\cap\{y:f(x,y) \geq F_{x}^{-1}(\beta)\}} f(x,y)\ d\Prob{y} - \int_{\riskregion\cap\{y: \hat{F}^{-1}_{x}(\beta) < f(x,y) < F^{-1}_{x}(\beta)\} } f(x,y)\ d\Prob{y}  \\
  & \qquad - \hat{F}^{-1}_{x}(\beta)\left(\Prob{Y\in\riskregion^{c}} + \Prob{\riskregion\cap\{f(x,Y) < \hat{F}_{x}^{-1}(\beta)\}} - \beta\right)\Bigg) \\
  \geq & \frac{1}{1-\beta}\Bigg( F_{x}^{-1}(\beta)\left( \Prob{\{Y\in\riskregion^{c}\}\cap\{\hat{F}^{-1}_{x}(\beta) < f(x,Y) < F^{-1}_{x}(\beta)\} } - \Prob{\{Y\in\riskregion\}\cap\{\hat{F}^{-1}_{x}(\beta) < f(x,Y) < F^{-1}_{x}(\beta)\}}\right) \\
  & \qquad - \hat{F}_{x}^{-1}(\beta)\left(\Prob{Y\in\riskregion^{c}} + \Prob{\riskregion\cap\{f(x,Y) < \hat{F}_{x}^{-1}(\beta)}\right)\Bigg).
\end{align*}
Finally, manipulation of the probabilities above yields:
\begin{align*}
  \frac{1}{1-\beta}&\Bigg(F_{x}^{-1}(\beta)\left(\Prob{Y\in\riskregion^{c}} + \Prob{\riskregion\cap\{f(x,Y) < \hat{F}_{x}^{-1}(\beta)} - \beta\right) -  \\
  & \qquad \hat{F}_{x}^{-1}(\beta)\left(\Prob{Y\in\riskregion^{c}} + \Prob{\riskregion\cap\{f(x,Y) < \hat{F}_{x}^{-1}(\beta)} -\beta\right)\Bigg) \geq 0,
\end{align*}
  since $\hat{F}^{-1}_{x}(\beta)\leq F^{-1}_{x}(\beta)$, as required.
  
  The fact that the inequalities hold with equality if $\riskregion$ is
  valid for decision $x$ and the aggregation condition holds follows directly from
  Theorem~\ref{thr:portfolio-equiv-rv} for the special case $\mathcal{X}=\{x\}$.
\end{proof}

\section{Risk Regions for Portfolio Selection}
\label{sec:risk-regi-portf}

In this section we present results relating to risk regions
for the portfolio selection problem. 
In Section~\ref{sec:portfolio-recap-portfolio-risk} we define the problem
and present general results from \cite{Fairbrother15a} related to the risk region
for this problem. The remaining two subsections
deal with risk regions for elliptical distributions since these are used
as approximate risk regions. Specifically, in Section~\ref{sec:portfolio-recap-portfolio-risk} we formally define elliptical distributions and give
a convenient characterization of their corresponding risk regions, and in Section~\ref{sec:portfolio-conic} we present some new results related to testing membership to a risk region.

\subsection{Problem statement and application of risk regions}
\label{sec:probl-stat-appl}

We use the following basic set-up: we have
a set of financial assets indexed by $i=1,\ldots,d$, by $x_i$ we
denote how much we invest in asset $i$, and by $Y_i$ we denote the
random future return of asset $i$. The portfolio return associated to a
particular investment decision $x=(x_1,\ldots,x_d)$ and return $Y =
(Y_1,\ldots,Y_d)$ is $x^TY = \sum_{i=1}^{d} x_iY_i$. The loss function associated
to an investment decision is thus $f(x,Y) = -x^TY$, and so for a given $\beta$-tail
risk measure $\rho_\beta$ we would like an investment with small tail risk
$\rho_\beta(-x^T Y)$. The aim of a portfolio selection problem is to choose a decision
which balances choosing a portfolio with high expected portfolio return against
choosing one with small risk. This typically corresponds to solving
a problem of one of the following forms:

\begin{align}
  \text{(i)}\qquad\minimize[x\in\mathcal{X}] &\rho_\beta(-x^TY)\label{eq:p1}\tag{P1}\\
  &\text{subject to} \ \E{x^TY} \geq t,\nonumber\\
  \text{(ii)}\qquad\maximize[x\in\mathcal{X}] & \E{x^TY}\label{eq:p2}\tag{P2}\\
  &\text{subject to} \ \rho_\beta(-x^TY) \leq s,\nonumber\\
  \text{(iii)}\qquad\minimize[x\in\mathcal{X}] &\lambda \rho_\beta(-x^TY) + (1-\lambda)\E{-x^TY},\label{eq:p3}\tag{P3}
\end{align}
where $0\leq \lambda \leq 1$ and $\mathcal{X} \subset \rd$ represents the set of feasible portfolios. This feasibility region will typically encompass a constraint which specifies the amount of capital
to be invested, and may include others which, for example the
exclusion of short-selling, or a limit on the amount that can be
invested in certain industries.

In the case of the portfolio selection problem, the risk region is
$\mathcal{R}_{Y, \mathcal{X}}(\beta) :=
\bigcup_{x\in\mathcal{X}}\{y\in\rd :\ -x^{T}y \geq
F^{-1}_{x}\left(\beta\right)\}$, that is, it is the union over all
feasible portfolios, of the half spaces of points with returns above
the $\beta$-quantile.  We can find this region by brute force, and
this is illustrated for a hypothetical discrete random vector on the
left-hand side of Figure \ref{fig:full_agg}. Also illustrated in this
figure is the set of returns where all the mass in the non-risk region
has been aggregated into the conditional expectation of the random
vector in the non-risk region, that is, the aggregated random
vector. The figure also demonstrates, as implied by
Theorem~\ref{thr:portfolio-equiv-rv}, that the $\beta$-quantile lines
do not change after aggregation.

\begin{figure}[h]
  \centering
  \includegraphics[width=\textwidth]{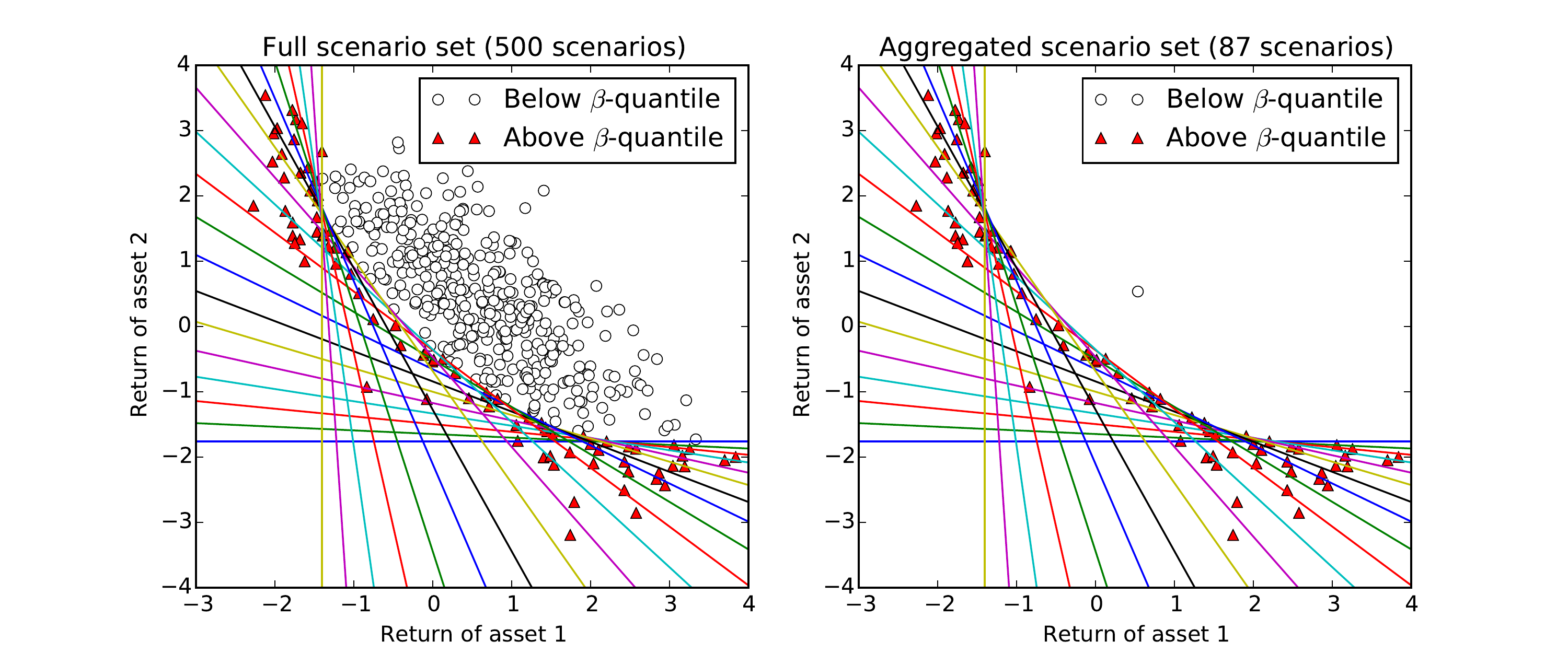}
  \caption{Return points of two assets with loss below the $\beta$-quantile for all non-negative portfolios (left) and aggregated scenario set (right)}
  \label{fig:full_agg}
\end{figure}

Assuming $\E{Y|Y\in\riskregion^{c}}\in\riskregion_{Y,\mathcal{X}}^{c}$, as well as preserving the value 
of a tail risk measure, the aggregated random vector
has the additional property of preserving the overall expected return
of the original random vector. The following corollary taken from
\cite{Fairbrother15a} summarizes this result and provides sufficient
conditions so that \eqref{eq:portfolio-agg-condition} holds.

\begin{corollary}
  \label{cor:portfolio-agg-rv}
  Suppose $\riskregion_{Y,\mathcal{X}}(\beta) \subseteq \riskregion \subset \rd$, $Y$ is a continuous random vector with support $\mathcal{Y} =
  \rd$, and $\mathcal{X}$ contains at least two linearly independent elements. Then $Y$ satisfies \eqref{eq:portfolio-agg-condition}. In addition, if $\riskregion^{c}$ is convex then $\tilde{Y}=\psi_{\riskregion}(Y)$ satisfies condition \eqref{eq:portfolio-agg-condition-2} and so for all $x\in\mathcal{X}$
  we have:
  \begin{align*}
    \rho_\beta\left(-x^{T}Y\right) &= \rho_\beta\left(-x^{T}\tilde{Y}\right),\\
    \E{x^{T}Y} &= \E{x^{T}\tilde{Y}}.
  \end{align*}
\end{corollary}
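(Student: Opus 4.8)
The plan is to establish the three assertions in turn, noting that the hypothesis of two linearly independent feasible portfolios is what forces the aggregation condition, while convexity of $\riskregion^{c}$ is what yields condition \eqref{eq:portfolio-agg-condition-2}; the two stated equalities then come cheaply.

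First I would verify the aggregation condition \eqref{eq:portfolio-agg-condition}. Fix a (nonzero) $x\in\mathcal{X}$ and $z' < F^{-1}_{x}(\beta)$, and consider the slab $S = \{y : z' < -x^{T}y \leq F^{-1}_{x}(\beta)\}$, whose interior $\{z' < -x^{T}y < F^{-1}_{x}(\beta)\}$ is nonempty. Since $\mathcal{X}$ contains two linearly independent elements, I can choose $w\in\mathcal{X}$ linearly independent of $x$. The half-space $H = \{y : -w^{T}y \geq F^{-1}_{w}(\beta)\}$ is contained in $\riskregion_{Y,\mathcal{X}}(\beta)\subseteq\riskregion$ by the definition \eqref{eq:portfolio-riskregion-1}. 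Linear independence of $x$ and $w$ guarantees a direction $v$ with $x^{T}v = 0$ and $w^{T}v \neq 0$, so starting from any interior point of the slab one may move along $v$ to make $-w^{T}y$ arbitrarily large while staying in the slab; hence $\interior{S}\cap\interior{H}$ is a nonempty open set. As $Y$ is continuous with support $\rd$, every nonempty open set has positive probability, so $\Prob{Y\in S\cap\riskregion}\geq\Prob{Y\in\interior{S}\cap\interior{H}}>0$. This transversality step, which must hold uniformly as $z'\uparrow F^{-1}_{x}(\beta)$ and the slab becomes thin, is the crux and the part I expect to be most delicate.

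Next, assuming $\riskregion^{c}$ convex (and $\Prob{Y\in\riskregion^{c}}>0$, since otherwise $\psi_{\riskregion}(Y)=Y$ almost surely and the claims are trivial), I would locate the aggregation point $c := \conE{Y}{Y\in\riskregion^{c}}$ inside $\riskregion^{c}$. Convexity already gives $c\in\closure{\riskregion^{c}}$; to exclude the boundary I would use a supporting hyperplane, observing that $c\in\partial\riskregion^{c}$ would force the conditional law of $Y$ given $Y\in\riskregion^{c}$ to concentrate on a hyperplane, contradicting that $Y$ has full-support density. Hence $c\in\interior{\riskregion^{c}}\subseteq\riskregion^{c}$. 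Then for any $\mathcal{A}\subseteq\riskregion$ I would split on $\{Y\in\riskregion\}$ and $\{Y\in\riskregion^{c}\}$: on the first event $\psi_{\riskregion}(Y)=Y$ and $\{Y\in\mathcal{A}\}\subseteq\{Y\in\riskregion\}$, while on the second $\psi_{\riskregion}(Y)=c\in\riskregion^{c}$ so $c\notin\mathcal{A}$ as $\mathcal{A}\subseteq\riskregion$; this gives $\Prob{\psi_{\riskregion}(Y)\in\mathcal{A}} = \Prob{Y\in\mathcal{A}}$, which is exactly \eqref{eq:portfolio-agg-condition-2}.

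Finally, the risk equality is immediate: taking $f(x,y)=-x^{T}y$, the first two parts verify the hypotheses of Theorem~\ref{thr:portfolio-equiv-rv} for the set $\riskregion$ with $\tilde{Y}=\psi_{\riskregion}(Y)$, which yields $\rho_\beta(-x^{T}Y)=\rho_\beta(-x^{T}\tilde{Y})$ for every $\beta$-tail risk measure and every $x\in\mathcal{X}$. The expectation identity needs no convexity and follows from the law of total expectation: $\E{\tilde{Y}} = \E{Y\,\id{Y\in\riskregion}} + c\,\Prob{Y\in\riskregion^{c}} = \E{Y\,\id{Y\in\riskregion}} + \E{Y\,\id{Y\in\riskregion^{c}}} = \E{Y}$, whence $\E{x^{T}\tilde{Y}} = x^{T}\E{Y} = \E{x^{T}Y}$. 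So the main obstacle is the geometric positivity of Part~1; once $c\in\riskregion^{c}$ is secured, the remainder is bookkeeping.
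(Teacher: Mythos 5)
The paper itself contains no proof of this corollary---it is imported from \cite{Fairbrother15a} with only a citation---so there is no in-paper argument to compare yours against; assessed on its own terms, your reconstruction is correct. Your verification of the aggregation condition is the right argument: for $x\neq 0$ the open slab $\{y: z' < -x^{T}y < F_{x}^{-1}(\beta)\}$ is nonempty, the half-space $\{y: -w^{T}y \geq F_{w}^{-1}(\beta)\}$ for a $w\in\mathcal{X}$ linearly independent of $x$ sits inside $\riskregion_{Y,\mathcal{X}}(\beta)\subseteq\riskregion$, and translating along a direction orthogonal to $x$ but not to $w$ shows the two sets meet in a nonempty open set, which has positive probability because the support of $Y$ is all of $\rd$. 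Two small remarks. First, your worry about uniformity as $z'\uparrow F_{x}^{-1}(\beta)$ is unfounded: condition \eqref{eq:portfolio-agg-condition} is quantified pointwise over $z'$, and your argument works for each fixed $z'$ separately since the open slab remains nonempty however thin it gets. Second, you should say a word about which $w$ you pick (any nonzero $x\in\mathcal{X}$ is linearly independent of at least one of the two hypothesised elements) and about the degenerate case $x=0$, where the slab is all of $\rd$ and the condition reduces to $\Prob{Y\in\riskregion}>0$, which holds because $\riskregion$ contains a half-space of probability at least $1-\beta$. The rest is sound: the supporting-hyperplane argument correctly upgrades $\conE{Y}{Y\in\riskregion^{c}}\in\closure{\riskregion^{c}}$ to membership of $\riskregion^{c}$ (a hyperplane is Lebesgue-null, so the conditional law of an absolutely continuous $Y$ cannot charge it), condition \eqref{eq:portfolio-agg-condition-2} follows from the split you describe, the risk-measure equality is then Theorem~\ref{thr:portfolio-equiv-rv}, and the expectation identity is the tower property---and you are right that this last identity does not actually need convexity of $\riskregion^{c}$, only that the aggregation point is well defined.
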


In Section~\ref{sec:appr-risk-regi} we showed that under mild
conditions when using an approximate risk region, a misspecification
for a particular decision will decrease the value of the $\var$ and
$\cvar$.  Building on this result, the following corollary
gives a condition under which the optimal solution yielded by
using an approximate risk region is also optimal for the true problem.

\begin{corollary}
  \label{cor:approx-risk-region}
  Under assumptions (A) and (B), for the problems \eqref{eq:p1},
  \eqref{eq:p2} and \eqref{eq:p3}, suppose that replacing the solution
  the random vector $Y$ with the aggregated random vector
  $\psi_{\riskregion}(Y)$ for an approximate risk region $\riskregion$
  yields an optimal solution $\hat{x}$. Then, if $\riskregion$ is valid
  for $\hat{x}$ and the aggregation condition holds for $\riskregion$
  then $\hat{x}$ is also an optimal solution for the true problem.
\end{corollary}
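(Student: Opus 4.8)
The plan is to combine two facts that hold for every feasible $x$. First, by Proposition~\ref{prop:low-var}, passing to the approximate risk region distorts the tail risk measure only downwards, i.e. $\rho_\beta(-x^T\psi_{\riskregion}(Y)) \leq \rho_\beta(-x^TY)$ for all $x\in\mathcal{X}$ (for $\rho_\beta$ taken to be $\var$ or $\cvar$), with equality precisely when $\riskregion$ is valid for $x$ and the aggregation condition holds. Second, aggregation preserves expected return, $\E{x^T\psi_{\riskregion}(Y)} = \E{x^TY}$ for all $x$, which follows from the tower property (the mass removed from $\riskregion^{c}$ is reinserted at its conditional mean) and is recorded in Corollary~\ref{cor:portfolio-agg-rv}. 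Writing $\hat{\rho}_\beta(x) := \rho_\beta(-x^T\psi_{\riskregion}(Y))$ for the approximate objective, the hypotheses ($\riskregion$ valid for $\hat{x}$, aggregation condition holding) give $\hat{\rho}_\beta(\hat{x}) = \rho_\beta(-\hat{x}^TY)$. The argument then reduces to a short sandwiching in each of the three formulations.

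For \eqref{eq:p3}, since expected return is preserved the approximate objective $\lambda\hat{\rho}_\beta(x) + (1-\lambda)\E{-x^TY}$ lies below the true objective $\lambda\rho_\beta(-x^TY) + (1-\lambda)\E{-x^TY}$ for every $x$ (as $\lambda\geq 0$), and the two coincide at $\hat{x}$. Hence for any $x\in\mathcal{X}$ the true objective at $\hat{x}$ equals the approximate objective at $\hat{x}$, which is $\leq$ the approximate objective at $x$ by optimality of $\hat{x}$ for the approximate problem, which is in turn $\leq$ the true objective at $x$; so $\hat{x}$ is optimal for the true problem. For \eqref{eq:p1}, expected-return preservation shows that the constraint $\E{x^TY}\geq t$ carves out the same feasible set in the approximate and true problems, so the identical sandwiching applied to the objective $\rho_\beta(-x^TY)$ over this common feasible set yields optimality of $\hat{x}$.

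For \eqref{eq:p2}, the objective $\E{x^TY}$ is itself unchanged by aggregation, while the downward distortion shows that the true feasible set $\{x:\rho_\beta(-x^TY)\leq s\}$ is contained in the approximate one $\{x:\hat{\rho}_\beta(x)\leq s\}$. The equality $\hat{\rho}_\beta(\hat{x}) = \rho_\beta(-\hat{x}^TY)$ ensures $\hat{x}$ is still feasible for the true (smaller) problem, and since $\hat{x}$ maximizes the common objective over the larger approximate feasible set, it a fortiori maximizes it over the smaller true feasible set.

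The step requiring the most care is the bookkeeping at $\hat{x}$ itself: the global inequality from Proposition~\ref{prop:low-var} alone only certifies $\hat{x}$ as optimal for a relaxation, and it is the equality case --- guaranteed exactly by the validity of $\riskregion$ at $\hat{x}$ together with the aggregation condition, invoked as the $\mathcal{X}=\{\hat{x}\}$ instance of Theorem~\ref{thr:portfolio-equiv-rv} --- that closes the gap and pins $\hat{x}$ to the true optimum. One should also note that the result inherits the scope of Proposition~\ref{prop:low-var}, so $\rho_\beta$ is understood to be $\var$ or $\cvar$.
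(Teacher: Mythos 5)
Your proof is correct and follows essentially the same route as the paper: invoke Proposition~\ref{prop:low-var} for the downward distortion and its equality case at $\hat{x}$, use preservation of expected return for feasibility, and close with the sandwiching chain of inequalities. The paper writes out only the \eqref{eq:p1} case and asserts the others are similar, so your explicit treatment of \eqref{eq:p2} and \eqref{eq:p3} (including the feasible-set containment argument for \eqref{eq:p2}) simply fills in detail the paper leaves implicit.
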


\begin{proof}
  We prove this only for \ref{eq:p1}. The proofs for the
  other problems are very similar. First note that \eqref{eq:approx-risk-risk}
  implies that $\cvar\left(-\hat{x}^{T}\psi_{\riskregion}(Y))\right) = \cvar\left(-\hat{x}^{T}Y)\right)$ by Proposition~\ref{prop:low-var}.
  Note also that since $\E{-x^{T}\psi_{\riskregion}(Y)} = \E{-x^{T}Y}$ for
  all $x\in\rd$ that $\hat{x}$ is feasible with respect to the true problem.
  Now, if $\tilde{x}$  is an optimal solution to the true problem then,
  \begin{align*}
    \cvar(-\tilde{x}^{T}Y)) &\geq \cvar\left(-\tilde{x}^{T}\psi_{\riskregion}(Y)\right) \qquad \text{by Proposition~\ref{prop:low-var}}\\
   & \geq \cvar\left(-\hat{x}^{T}\psi_{\riskregion}(Y))\right) \qquad \text{by definition of } \hat{x}\\
   & = \cvar(-\hat{x}^{T}Y)).
  \end{align*}
  Hence, $\hat{x}$ is optimal with respect to the true problem.
\end{proof}

This result guarantees a certain robustness to misspecification
of the approximate risk region. Although checking whether an
approximate risk region is valid for a decision could in principle
be used as an optimality check, we will not use it in this way
as checking directly condition \eqref{eq:approx-risk-risk}
may be difficult. As will be seen in Section~\ref{sec:portfolio-case}
we instead will rely on out-of-sample testing to verify the quality
of a solution.

\subsection{Risk regions for elliptical distributions}
\label{sec:portfolio-recap-portfolio-risk}

In order to exploit risk regions for scenario generation one has to be
able to characterize these in a way which allows one to conveniently
test whether or not a point belongs to it.  In our previous paper, we
were able to do this in the case where the asset returns have
\emph{elliptical distributions}. Elliptical distributions are a
general class of distributions which include, among others,
multivariate Normal and multivariate $t$-distributions. See
\cite{FangKotzNg198911} for a full overview of the subject.

\begin{definition}[Elliptical Distribution]
  \label{def:sphere-ellipse}
  Let $X= (X_1, \ldots, X_d)$ be a random vector in $\rd$, then $X$ is said to be \emph{spherical}, if
  $$X \sim UX \qquad \text{for all orthonormal matrices } U$$
  where $\sim$ means the two operands have the same distribution function.

  Let $Y$ be a random vector in $\rd$, then $Y$ is said to be \emph{elliptical}
  if it can be written $Y = P^{T}X + \mu$ where $P\in \rr^{d\times d}$ is non-singular,
  $\mu\in\rd$, and $X$ is random vector with spherical distribution. Such
  an elliptical distribution will be denoted $\elliptical{X}{\mu}{P}$.
\end{definition}

This definition says that a random vector with a spherical
distribution is rotation invariant, and that an elliptical
distribution is an affine transformation of a spherical
distribution. Elliptical distributions are convenient in the context
of portfolio selection as we can write down exactly the distribution
of loss of a portfolio. In particular, if $Y\sim\elliptical{X}{\mu}{P}$
and $x\in\rd$ then
\begin{equation*}
  -x^T Y \sim \norm[Px]X_1 - x^T\mu,
\end{equation*}
where $\norm$ denotes the standard Euclidean norm and $X_{1}$ is the first component
of the spherical random vector $X$. Therefore, the $\beta$-quantile of the loss $-x^T Y$ is as follows:
$$F_{x}^{-1}(\beta) = \norm[Px] F^{-1}_{X_1}\left(\beta\right) - x^T\mu.$$
For $Y\sim\elliptical{X}{\mu}{P}$, we can thus rewrite the risk region in \eqref{eq:portfolio-riskregion-1} as follows:
\begin{equation}
  \label{eq:portfolio-riskregion-2}
  \mathcal{R}_{Y,\mathcal{X}}(\beta) := \bigcup_{x\in\mathcal{X}}\{y\in\rd : -x^T y \geq \norm[Px] F^{-1}_{X_1}\left(\beta\right) - x^T\mu\}.
\end{equation}
In this form it is still difficult to check whether a given point $\tilde{y}\in\rd$ belongs to it.
In \cite{Fairbrother15a} we provided a more convenient characterization of the risk region for
elliptical returns. This characterization makes use of the conic hull of the feasible region
$\mathcal{X}\subset\rd$.

\begin{definition}[Convex cones and conic hull]
  \label{def:convex-cone}
  A set $K \subset \rd$ is a cone if for all $x \in K$ and 
  $\lambda \geq 0$ we have $\lambda x \in K$. A cone is convex if for all 
  $x_1, x_2 \in K$ and $\lambda_1, \lambda_2 \geq 0$ we have 
  $\lambda_1 x_1 + \lambda_2 x_2 \in K$. The conic hull of a set 
  $\aggregion\subset\rd$ is the smallest convex cone containing $\aggregion$, and 
  is denoted $\conic{\aggregion}$.
\end{definition}

For example, suppose that our feasible
region consists of portfolios with non-negative investments (i.e. no
short-selling) and whose total investment is normalized to one,
that is:
\begin{equation*}
  \mathcal{X} = \{ x\in\rd: \sum_{i=1}^d x_i = 1,\ \ x_i \geq 0 \text{ for each } i = 1,\ldots, d\},
\end{equation*}
then the conic hull of this is the positive quadrant, that is $\conic{\mathcal{X}} = \rd_+$.
The alternative characterization also makes use of projections.

\begin{definition}[Projection]
  \label{def:proj}
  Let $C \subset \rd$ be a closed convex set, then for
  any point $y\in\rd$ we define its projection onto $C$
  to be the unique point $p_C(y)\in C$ such that
  \begin{equation*}
    \inf_{x\in C} \norm[x-y] = \norm[p_C(y) - y].
  \end{equation*}
\end{definition}

We are now ready to give a characterization of the risk region. For this we use the following convenient abuse of notation: for a set $\mathcal{A}\subset\rd$ and a matrix $T\in\rr^{d\times d}$, we write $T\left(\mathcal{A}\right) := \{ Ty : y\in\mathcal{A}\}$. The following result was proved in \cite{Fairbrother15a}.

\begin{theorem}
Suppose  $Y \sim \elliptical{X}{P}{\mu}$, $\mathcal{X}\subseteq\rd$ is convex and let $K = \conic{\mathcal{X}}$. Then the risk
region can be characterized exactly as follows:
\begin{equation}
  \label{eq:agg-region-cone-constraint}
  \riskregion_{Y,\mathcal{X}}(\beta) = P^{T}\left(\{\tilde{y}\in\rd: \norm[p_{K'}(\tilde{y} - \mu)] \geq F_{X_1}^{-1}\left(\beta\right)\}\right),
\end{equation}
where $K' = PK$ is a linear transformation of the conic hull $K$.
\end{theorem}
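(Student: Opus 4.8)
The plan is to reduce the claim to a fact about projections onto the cone $K$ by passing to the \emph{standardized} coordinates in which $Y$ becomes spherical. Write $q := F_{X_1}^{-1}(\beta)$ and start from the elliptical form of the risk region in \eqref{eq:portfolio-riskregion-2}. Then $y \in \riskregion_{Y,\mathcal{X}}(\beta)$ exactly when there is some $x \in \mathcal{X}$ with
\[
  -x^{T}(y - \mu) \ge \norm[Px]\, q .
\]
Substituting $v = Px$ and $\zeta = (P^{-1})^{T}(\mu - y)$ rewrites the left-hand side as $\inprod{v}{\zeta}$, so the condition becomes $\inprod{v}{\zeta} \ge \norm[v]\, q$ for some $v \in P\mathcal{X}$; dividing by $\norm[v]$, this is
\[
  \sup_{v \in P\mathcal{X},\, v \ne 0} \frac{\inprod{v}{\zeta}}{\norm[v]} \ge q .
\]
The problem is thereby recast as evaluating a supremum of a cosine-type ratio over the feasible directions.

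The second step replaces $P\mathcal{X}$ by the cone $K' = PK = \conic{P\mathcal{X}}$. The ratio $\inprod{v}{\zeta}/\norm[v]$ is invariant under positive rescaling of $v$, so the supremum depends only on the rays spanned by $P\mathcal{X}$. The key observation is that, since $\mathcal{X}$ (and hence $P\mathcal{X}$) is convex, every nonzero element of $\conic{P\mathcal{X}}$ is a positive multiple of an element of $P\mathcal{X}$: a conic combination $\sum_i \lambda_i v_i$ with $\Lambda := \sum_i \lambda_i > 0$ equals $\Lambda$ times the convex combination $\sum_i (\lambda_i/\Lambda) v_i \in P\mathcal{X}$. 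Thus $P\mathcal{X}$ and $K'$ span the same rays, and the suprema over the two sets coincide.

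The third step is the projection identity for a closed convex cone: for any $w$, $\sup_{v \in K',\, \norm[v] = 1} \inprod{v}{w} = \norm[p_{K'}(w)]$ whenever this value is positive. I would prove this from the Moreau decomposition $w = p_{K'}(w) + p_{(K')^{\circ}}(w)$ into orthogonal components in $K'$ and its polar cone $(K')^{\circ}$: for unit $v \in K'$ one has $\inprod{v}{w} = \inprod{v}{p_{K'}(w)} + \inprod{v}{p_{(K')^{\circ}}(w)} \le \norm[p_{K'}(w)]$, using $\inprod{v}{p_{(K')^{\circ}}(w)} \le 0$ and Cauchy--Schwarz, with equality at $v = p_{K'}(w)/\norm[p_{K'}(w)]$. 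Applying this with $w = \zeta$ turns the membership condition into $\norm[p_{K'}(\zeta)] \ge q$; unwinding $\zeta = (P^{-1})^{T}(\mu - y)$ and expressing the resulting $y$-set as the image of a set under $P^{T}$ gives the stated form \eqref{eq:agg-region-cone-constraint}.

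The main obstacle is this projection identity and the care it demands. The equivalence ``$\sup \ge q \iff \norm[p_{K'}(\cdot)] \ge q$'' holds cleanly only because $q = F_{X_1}^{-1}(\beta) > 0$ --- which is the case for tail quantiles $\beta > 1/2$, since the spherical marginal $X_1$ is symmetric about $0$ --- so that the degenerate case $w \in (K')^{\circ}$ (where $p_{K'}(w) = 0$) is excluded. I would also need $K'$ to be a closed convex cone for $p_{K'}$ to be well defined; because $\conic{\cdot}$ of a set need not be closed, I would either invoke compactness of $\mathcal{X}$ (which makes $\conic{\mathcal{X}}$ closed) or pass to the closure, noting its boundary does not affect the ``$\ge$'' condition. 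The remaining change-of-variables bookkeeping in the first and third steps --- in particular tracking $\mu$ and the sign of $\zeta$ so as to land exactly on $p_{K'}(\tilde{y} - \mu)$ --- is routine but must be carried out carefully.
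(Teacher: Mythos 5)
The paper does not actually prove this theorem in the text --- it is quoted from the earlier reference --- so there is no in-paper proof to compare against line by line. That said, your outline is the natural argument and almost certainly the intended one: (i) pass to standardized coordinates via $v=Px$, so that the defining inequality $-x^{T}(y-\mu)\ge\norm[Px]\,q$ with $q:=F_{X_1}^{-1}(\beta)$ becomes $\inprod{v}{\zeta}\ge\norm[v]\,q$; (ii) use positive homogeneity of both sides together with the fact that the conic hull of a convex set is the union of the rays through its points to replace $P\mathcal{X}$ by $K'$; (iii) identify $\sup_{v\in K',\,\norm[v]=1}\inprod{v}{\zeta}$ with $\norm[p_{K'}(\zeta)]$ via the Moreau decomposition. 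You also correctly isolate the hypotheses that make the equivalence exact: $F_{X_1}^{-1}(\beta)>0$, closedness of $K'$ (without which $p_{K'}$ is not defined and the supremum need not be attained by a feasible direction), and, implicitly, $0\notin\mathcal{X}$ (if $0\in\mathcal{X}$ the inequality degenerates to $0\ge 0$ and every $y$ lies in the risk region).

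The one step you defer --- ``tracking $\mu$ and the sign of $\zeta$'' --- is precisely where the argument has not yet closed, and it is not purely cosmetic. With your $\zeta=(P^{-1})^{T}(\mu-y)$ you arrive at the condition $\norm[p_{K'}\bigl((P^{T})^{-1}(\mu-y)\bigr)]\ge q$, whereas membership of $y$ in the stated set $P^{T}\left(\{\tilde{y}:\norm[p_{K'}(\tilde{y}-\mu)]\ge q\}\right)$ reads $\norm[p_{K'}\bigl((P^{T})^{-1}y-\mu\bigr)]\ge q$. The two arguments of $p_{K'}$ differ by a global sign and by $\mu$ versus $(P^{T})^{-1}\mu$; since $\norm[p_{K'}(-w)]=\norm[p_{-K'}(w)]$, reconciling them requires either replacing $K'$ by $-K'$ or adopting the opposite sign convention for $\zeta$, and the mean must be subtracted in the standardized coordinates, not the original ones. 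Some of this is arguably a transcription issue in the statement itself (note it writes $\elliptical{X}{P}{\mu}$ while the definition fixes the order as $\elliptical{X}{\mu}{P}$), but you should carry the computation all the way through rather than declare it routine: this is exactly the place where a sign error in the cone would silently produce the wrong half of the distribution as the ``risk region.''
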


When $\mu = 0$ and $P = I_{d}$ the risk region can be simplified to $\riskregion_{Y,\mathcal{X}}(\beta) = \left(\{\tilde{y}\in\rd: \norm[p_{K}(y)] \geq F_{X_1}^{-1}\left(\beta\right)\}\right)$. This allows us to interpret the projection of $y$ onto $K$ as the portfolio which leads to the largest loss.

\subsection{Testing membership to a risk region}
\label{sec:portfolio-conic}

Scenario generation algorithms which exploit risk regions rely on the ability to test membership of the risk region for randomly sampled points.
The characterization of the risk region for portfolio selection problems
given in \eqref{eq:agg-region-cone-constraint} relies on one being able to calculate
the conic hull of the set of feasible portfolios, and also the ability
to project points onto a transformation of this.
In Section~\ref{sec:portfolio-conic-hull} we show how one can find the conic hull of the feasible region for typical constraints of a portfolio selection
problem. This conic hull is a \emph{finitely generated cone}. In Section~\ref{sec:portfolio-conic-project} we show how one can project points onto this type of cone.
Finally in Section~\ref{sec:computational-issues} we briefly discuss the computational issues for the membership tests.

\subsubsection{Conic hull of feasible region}
\label{sec:portfolio-conic-hull}

In portfolio problems, the feasible region is usually defined by linear constraints,
that is $\mathcal{X} = \{ x\in\rd: Ax \leq b \}$, where $A\in \rr^{m\times d}$ and $b\in\rm$. That is, the feasible region is the intersection
of a finite number of half-spaces. It is a well-known fact that any such intersection
can be written as the convex hull of a finite number of points plus the conical combination
of some more points (see Theorem 1.2 in \cite{Ziegler200804} for example). 
That is, there exists $x_1, \ldots, x_k\in\rd$ and $y_1, \ldots, y_l\in\rd$ such that
\begin{equation}
  \label{eq:polytope}
  \mathcal{X} = \{ \sum_{i=1}^k \lambda_i x_i + \sum_{j=1}^l \nu_j y_j:  \lambda, \nu \geq 0,\ \sum_{i=1}^k \lambda_i = 1\}.
\end{equation}
The conic hull of this region is the following \emph{finitely generated cone}:
\begin{equation*}
  \conic{\mathcal{X}} = \{ \sum_{i=1}^k \lambda_i x_i + \sum_{j=1}^l \nu_j y_j:  \lambda, \nu \geq 0\}.
\end{equation*}
To express the intersection of half-spaces in the form \eqref{eq:polytope},
we could use \emph{Chernikova's algorithm} (also known as the double description method)
\cite{Chernikova65,LeVerge92}.
Every finitely generated cone can also be written as a \emph{polyhedral cone},
that is, of the form $\{x\in\rd: Dx \geq 0\}$, and vice versa
(see \cite[Chapter 1]{Ziegler200804}). Chernikova's algorithm again provides a concrete
method for going between these two different representations. Although these two representations
are mathematically equivalent, as we shall see, they are algorithmically different.

We will suppose the constraints for our portfolio selection problem have the following form:
\begin{equation}
  \label{eq:portfolio-feasible}
  \mathcal{X} = \left\{ x \in \rd :
    \begin{split}
      \mathbf{1}^T x &= c\\
      a_i^Tx &\leq b_i \qquad \text{for } i = 1,\ldots,m,\\
      x &\geq 0,
    \end{split}
    \right\}
\end{equation}
where $\mathbf{1}$ is column vector of ones and $c > 0$. 
The first of these constraints
specifies the total of amount of capital to be invested, the
inequalities represent other constraints such as
quotas on the amount one can invest in a specific company or
industry. In this case, we can describe immediately the conic hull as a
polyhedral cone.

\begin{proposition}
  Let $\mathcal{X}$ be the set defined in \eqref{eq:portfolio-feasible} and let
  \begin{equation*}
    \mathcal{Y} = \left\{ x \in \rn :\left(\frac{b_i}{c}\mathbf{1} - a_i\right)^T x \geq 0 \text{ for } i = 1,\ldots, m,\ x \geq 0 \right\}
    \end{equation*}
\end{proposition}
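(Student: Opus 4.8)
The proposition as stated is incomplete—it introduces a set $\mathcal{Y}$ but the displayed claim is truncated. Based on the surrounding text and the title of Section~\ref{sec:portfolio-conic-hull}, the intended conclusion is that $\mathcal{Y} = \conic{\mathcal{X}}$, i.e. that $\mathcal{Y}$ is exactly the conic hull of the feasible region $\mathcal{X}$ defined in \eqref{eq:portfolio-feasible}, presented directly as a polyhedral cone. I will prove this equality.

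**The plan.** The natural approach is to prove the two inclusions $\conic{\mathcal{X}} \subseteq \mathcal{Y}$ and $\mathcal{Y}\subseteq\conic{\mathcal{X}}$ separately. For the forward inclusion, the plan is to note that $\mathcal{Y}$ is evidently a convex cone (it is an intersection of homogeneous half-spaces $\{x: (\tfrac{b_i}{c}\mathbf{1}-a_i)^Tx \geq 0\}$ together with the nonnegative orthant), so it suffices to show $\mathcal{X}\subseteq\mathcal{Y}$; the conic hull is then contained in $\mathcal{Y}$ by minimality (Definition~\ref{def:convex-cone}). To see $\mathcal{X}\subseteq\mathcal{Y}$, take any $x\in\mathcal{X}$: the constraints $\mathbf{1}^Tx = c$ and $a_i^Tx\leq b_i$ give $(\tfrac{b_i}{c}\mathbf{1}-a_i)^Tx = \tfrac{b_i}{c}\,\mathbf{1}^Tx - a_i^Tx = b_i - a_i^Tx \geq 0$, and $x\geq 0$ is immediate, so $x\in\mathcal{Y}$.

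**The reverse inclusion.** This is the substantive direction. Given $x\in\mathcal{Y}$, I must exhibit $x$ as a nonnegative scaling of a point of $\mathcal{X}$ (or a conic combination of such points). The key observation is that the normalizing constraint $\mathbf{1}^Tx = c$ can be recovered by rescaling. If $x\in\mathcal{Y}$ with $\mathbf{1}^Tx > 0$, set $x' = \tfrac{c}{\mathbf{1}^Tx}\,x$; then $\mathbf{1}^Tx' = c$, the nonnegativity $x'\geq 0$ is preserved, and from $(\tfrac{b_i}{c}\mathbf{1}-a_i)^Tx\geq 0$ one rescales to get $(\tfrac{b_i}{c}\mathbf{1}-a_i)^Tx'\geq 0$, which rearranges exactly to $a_i^Tx'\leq \tfrac{b_i}{c}\,\mathbf{1}^Tx' = b_i$. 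Hence $x'\in\mathcal{X}$ and $x = \tfrac{\mathbf{1}^Tx}{c}\,x'\in\conic{\mathcal{X}}$. The main obstacle is the degenerate boundary case $\mathbf{1}^Tx = 0$: combined with $x\geq 0$ this forces $x = 0$, which lies in every cone, so it is handled trivially—but it must be addressed explicitly, since the rescaling argument divides by $\mathbf{1}^Tx$. One should also confirm $c>0$ is used to make the rescaling factor well-defined and positive.

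**Where the care is needed.** The only genuinely delicate point is ensuring the equivalence between the homogenized inequality $(\tfrac{b_i}{c}\mathbf{1}-a_i)^Tx\geq 0$ and the original affine inequality $a_i^Tx\leq b_i$ holds only \emph{after} normalization—i.e. the cone $\mathcal{Y}$ encodes the constraints in scale-invariant (homogeneous) form, and it is precisely the combination of $\mathbf{1}^Tx=c$ with the inequalities that lets one pass between the two. I would present the algebra of this passage cleanly in both directions and dispatch the $x=0$ case in one line. No external machinery beyond Definition~\ref{def:convex-cone} is required; the argument is elementary linear algebra once the homogenization is correctly set up.
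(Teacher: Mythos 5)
Your proof is correct and takes essentially the same route as the paper's: both rest on the equivalence, under the normalization $\mathbf{1}^Tx=c$, between the homogeneous inequality $\left(\tfrac{b_i}{c}\mathbf{1}-a_i\right)^Tx\geq 0$ and the affine inequality $a_i^Tx\leq b_i$, implemented via the rescaling $\lambda=c/\mathbf{1}^Tx$. The only differences are organizational --- the paper proves the single equivalence $x\in\mathcal{Y}\setminus\{0\}\Leftrightarrow\exists\,\lambda>0$ with $\lambda x\in\mathcal{X}$ rather than two separate inclusions, and it glosses over the $x=0$ case that you dispatch explicitly --- so there is nothing substantive to distinguish the two arguments.
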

then $\conic{\mathcal{X}} = \mathcal{Y}$.
\begin{proof}
  Given that $\mathcal{X}$ is convex, to show that $\conic{\mathcal{X}} = \mathcal{Y}$, it suffices
  to show that
  \begin{equation*}
    x \in \mathcal{Y}\setminus{\{0\}} \Longleftrightarrow\ \exists\ \lambda > 0 \text{ such that } \lambda x \in \mathcal{X}.
  \end{equation*}
  We demonstrate first the forward implication. Suppose $x\in\mathcal{Y}\setminus{\{0\}}$. Then,
  given that $x > 0$, we must have $v := \mathbf{1}^Tx > 0$. Then, setting $\lambda = \frac{c}{v}$, we have
  \begin{equation*}
    \mathbf{1}^T (\lambda x) = v \frac{c}{v} = c.
  \end{equation*}
  Since $\mathcal{Y}$ is a cone, we have $\lambda x \in \mathcal{Y}$, hence
  \begin{align*}
    & (\frac{b_i}{c}\mathbf{1} - a_i)^T \frac{c}{v} x\ \geq\ 0 \\
    \therefore\qquad & \frac{c}{v} a_i^T x\ \leq\ \frac{b_i}{c} \frac{c}{v} \underbrace{\mathbf{1}^T x}_{=v}\\
    \therefore\qquad &  a_i^T (\frac{c}{v}x)\ \leq\ b_i
  \end{align*}
  and so $\lambda x \in \conic{\mathcal{X}}$.
  
  We now prove the backwards implication. Suppose $x\in\conic{\mathcal{X}}\setminus \{ 0 \}$. Then there exists
  $\lambda > 0$ such that $\lambda x \in \mathcal{X}$, that is
  \begin{align*}
    \mathbf{1}^T \lambda x &= c\\
    a_i^T \lambda x &\leq b_i
  \end{align*}
  Therefore,
  \begin{align*}
    &\frac{a_i^T \lambda x}{\mathbf{1}^T\lambda x}\ \leq\ \frac{b_i}{c}\\
    \text{and so}\qquad & \left(\frac{b_i}{c}\mathbf{1} - a_i\right)^T x\ \geq\ 0.
  \end{align*}
  Hence $x\in \mathcal{Y}$ as required.
\end{proof}

\begin{figure}[h]
  \centering
  \includegraphics[width=0.4\textwidth]{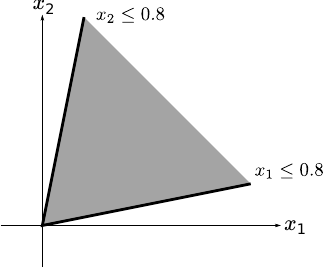}
  \caption{Conic hull from simple quota constraints given $x_1 + x_2 = 1$ and $x_1, x_2 \geq 0$}
  \label{fig:cone_constraints}
\end{figure}

Figure \ref{fig:cone_constraints} shows how simple constraints in $\rr^2$ affect
the conic hull of the feasible region given the total
investment and positivity constraints. 

\subsubsection{Projection onto a finitely generated cone}
\label{sec:portfolio-conic-project}

First, suppose that we can represent the conic hull of the feasible
region $\mathcal{X}\subset\rd$ as a finitely generated cone with $k$ generators, that is $K = \{ Ay
: y \geq 0\}$ where $A \in \rr^{k\times d}$.  By definition, the
projection of a point $x_0\in\rd$ can be found by solving the
following quadratic program:
\begin{equation}
  \label{eq:finite-project}
  \minimize[y\geq 0]\ \norm[Ay - x_0]_2^2
\end{equation}
In particular, if $y^*$ is the optimal solution then $p_K(x_0) =
Ay^*$.  By formulating the KKT conditions
\cite[Chapter~5]{boyd2004convex} of this problem, it can be seen that
this problem is equivalent to solving the following linear
complementarity problem (LCP):
\begin{align*}
  \text{Find } y, z\in \rd \text{ such that }\\
  z - A^TAy &= -A^Tx_{0}\\
  z^Ty &= 0\\
  y,z &\geqslant 0.
\end{align*}
If $(y,z)$ is a solution to the above problem, then the required
projection is $p_{K}(x_{0}) = Ay$. LCPs can be solved by more
specialized algorithms than standard quadratic programs
such as Lemke's algorithm \cite{cottle1992linear}.

Now, suppose instead we have a polyhedral characterization of the conic hull, that is a cone of the form:
\begin{equation}
  \label{eq:poly-cone}
  K = \{x\in\rd : Bx \geq 0\}.
\end{equation}
The projection of a point $x_0\in\rd$ onto the polyhedral cone in
\eqref{eq:poly-cone} is the solution of the following quadratic
program:
\begin{align*}
  \minimize[x] &\norm[x - x_0]_2^2\\
  \text{subject to } & Bx \geq 0.
\end{align*}

Although the former problem in \eqref{eq:finite-project} can often be
solved more efficiently using specialized algorithms, we will in practice use
both approaches. For conic hulls with a small
number of extremal rays, for example $K=\rr_+^d$ we will use use the former method.
As we add more constraints to the problem, we have found from experience
that the number of extremal rays can exponentially increase, which for the
former approach leads to cumbersomely large LCP problems. In this case
we will use the polyhedral representation for projection.

\subsubsection{Computational issues}
\label{sec:computational-issues}

Given that testing membership of the risk region (for elliptically distributed
returns) involves solving a small LCP or quadratic program, using 
this methodology could potentially become computationally expensive,
especially if used to construct large scenario sets for high dimensional
problems. However, this issue can be
mitigated in a few ways. Firstly, the membership test for a point can
be conducted independently from that of another point, which means
that membership tests for a large number of points is naturally
parallelizable. Secondly, for the case where the case $K \subseteq \rd_{+}$
the loss function $y \mapsto -x^{T}y$ is monotonic. Therefore,
if we have set of points $y_{1}, \ldots, y_{k}$ which are in
the risk region, we know that a point $\tilde{y}$ is also
in the risk region if it is dominated by any of these points.
A similar shortcut exists for testing if $\tilde{y}$ is in the non-risk region.
Finally, the membership test could also be made more efficient,
by directly testing the condition $\norm[p_K (y)] \leq \alpha $ without calculating the full
projection $p_K (y)$. For example, the quadratic program used to calculate the projection could be solved only to an accuracy sufficient to test this condition. This
could be easily implemented through a callback function in the quadratic
program solver.

\section{Scenario generation}
\label{sec:portfolio-scengen}


In this Section we show how risk regions can be exploited for the
purposes of scenario generation. In Section~\ref{sec:portfolio-scengen-agg-sampling-reduction} we present two specific methods which work essentially by prioritizing
the construction of scenarios in the risk region.
In Section~\ref{sec:portfolio-scengen-ghost} we 
propose a new heuristic algorithm based on the SAA method \cite{KleywegtEA01}.
This heuristic boosts the performance of the proposed sampling algorithm
through the addition of artificial constraints to the problem.

\subsection{Aggregation sampling and reduction}
\label{sec:portfolio-scengen-agg-sampling-reduction}

In \cite{Fairbrother15a} we proposed two methods to exploit risk
regions. The first of these allows the user to specify the final
number of scenarios in advance.  The algorithm, which is called
\emph{aggregation sampling}, samples scenarios, aggregating all
samples in the non-risk region and keeping all in the risk region,
until we have the required number of risk scenarios, that is the
required number of scenarios in the risk region. This is described in
Algorithm~\ref{alg:agg-sampling}.

\begin{algorithm}

  \textnormal{}\SetKwData{Left}{left}\SetKwData{This}{this}\SetKwData{Up}{up}
  \SetKwFunction{Union}{Union}\SetKwFunction{FindCompress}{FindCompress}
  \SetKwInOut{Input}{input}\SetKwInOut{Output}{output}
  
  \Input{$\riskregion \subset \rd$ approximate risk region, $N_{\riskregion}$ number of required risk scenarios}
  \Output{$\{(y_s, p_s)\}_{s=1}^{N_{\riskregion} + 1}$ scenario set}
  $n_{\riskregion^c} \leftarrow 0$,\  $n_{\riskregion} \leftarrow 0$,\ $y_{\riskregion^{c}}=\mathbf{0}$\;
  \While{$n_{\riskregion} < N_{\riskregion}$} {
    Sample new point $y$\;
    \If{$y\in\riskregion$}{
      $n_{\riskregion} \leftarrow n_{\riskregion} + 1 $; $y_{n_{\riskregion}} \leftarrow y$\;
    }
    \Else{
      $n_{\riskregion^c} \leftarrow n_{\riskregion^c} + 1$;\ 
      $y_{\riskregion^{c}} \leftarrow \frac{1}{n_{\riskregion^{c}}+1}\left(n_{\riskregion^{c}}y_{\riskregion^{c}} + y\right) $
    }
  }
  \lForEach{$i$ in $1, \ldots, N_{\riskregion}$}{$p_i \leftarrow \frac{1}{\left(n_{\riskregion^c} + N_{\riskregion}\right)}$}
  \If{$n_{\riskregion^c} > 0$}{
    $p_{n_{\riskregion^c} + 1} \leftarrow \frac{n_{\riskregion^{c}}}{n_{\riskregion^c}+ \mathcal{N}_{\riskregion}}$\;
  }
  \Else{
    Sample new point $y$\;
    $n_{\riskregion^c} \leftarrow 1$;
    $y_{N_{\riskregion} + 1} \leftarrow y$\;
  }
  
  $p_{N_{\riskregion}+1} \leftarrow \frac{n_{\riskregion^c}}{n_{\riskregion^c} + N_{\riskregion}}$
   \caption{Aggregation sampling}
  \label{alg:agg-sampling}
\end{algorithm}

Let $q=\Prob{Y\in\riskregion_{Y,\mathcal{X}}^{c}}$ be the probability
of the non-risk region, and $n$ be the number of risk scenarios
required.  Define $N(n)$ to be the effective sample size from
aggregation sampling, that is, the number of draws until the algorithm
terminates\footnotemark.
\footnotetext{For simplicity of exposition we discount the event that the while-loop of the algorithm terminates with $n_{\riskregion^{c}} = 0$ which occurs with probability $q^{n}$}
The quantity $N(n)$ is a random variable:
\begin{equation*}
  N(n) \sim n + \mathcal{NB}(n, q), 
\end{equation*}
where $\mathcal{NB}(N,q)$ denotes a \emph{negative binomial} random variable. Recall
that a negative binomial random variable $\mathcal{NB}(n, q)$ is the number
of failures in a sequence of Bernoulli trials with probability of success $q$
until $n$ successes have occurred. The expected effective sample size of aggregation
sampling is thus as follows:
\begin{equation*}
  \E{N(n)} = n + n \frac{q}{1-q}
\end{equation*}
The expected effective sample size can be thought of as the sample
size required for basic sampling to produce the same number of
scenarios in the risk region. The difference between the desired
number of risk scenarios, and expected effective sample size is
proportional to the ratio $\frac{q}{1-q}$. In particular, as the
probability of the non-risk region approaches one, this gain tends to
infinity.

The converse to aggregation sampling is sampling a set of a given size
$n$ and then aggregating all scenarios in the risk region of the
underlying distribution. We call this \emph{aggregation reduction}. 
This can be viewed as a sequence of $n$
Bernoulli trials, where success and failure are defined in the same
way as described above. The number of scenarios in the reduced sample,
$R(n)$ is as follows:
\begin{equation*}
  R(n) \sim n - \mathcal{B}(n, q) + 1
\end{equation*}
where $\mathcal{B}(n,q)$ denotes a binomial random variable. The expected reduction in
scenarios in aggregation reduction is thus $nq -1$.

The reason why aggregation sampling and aggregation reduction work is
that, for large samples, they are equivalent to sampling from the
aggregated random vector. Suppose that $Y_{1}, Y_{2}, \ldots$ is a
sequence of independently identically distributed (i.i.d.) random
vectors with the same distribution as $Y$, then
$\psi_{\riskregion}(Y_1),\psi_{\riskregion}(Y_2),\ldots$ is a sequence
of i.i.d. random vectors with the same distribution as the aggregated
random vector $\psi_{\riskregion}(Y)$. Denote by
$\tilde{\rho}_{n,\beta}(x)$ the value of the tail-risk measure for the
decision $x\in\mathcal{X}$ for the sample
$\psi_{\riskregion}(Y_1),\ldots,\psi_{\riskregion}(Y_n)$, and by
$\hat{\rho}_{n,\beta}$ the analogous function the scenario set
constructed by aggregation sampling. The following result, adapted
from \cite{Fairbrother15a} for the portfolio selection problem, gives
precise conditions under which aggregation sampling is asymptotically valid.

\begin{theorem}
  \label{thm:agg-sample-consistent}
  Suppose the following conditions hold:
  \begin{enumerate}[(i)]
  \item For each $x\in\mathcal{X}$, $F_{x}$ is strictly increasing and continuous in some neighborhood of $F_{x}^{-1}(\beta)$
  \item $\E{\ Y | Y\in\riskregion^c\ } \in \interior{\riskregion_{\mathcal{X}}^c}$
  \item $\mathcal{X}$ is compact.
  \end{enumerate}
  Then, with probability 1, for $n$ large enough $\tilde{\rho}_{n, \beta} \equiv \hat{\rho}_{N(n), \beta}$.
\end{theorem}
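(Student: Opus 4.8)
The plan is to drive both sampling schemes with a single i.i.d. sequence $Y_1,Y_2,\ldots$ distributed as $Y$, and to exploit the fact that, run on the \emph{same} realized draws, aggregation sampling and the direct aggregation map $\psi_{\riskregion}$ produce scenario sets that are identical except for the location of one atom. First I would fix the coupling: let $N(n)$ be the draw at which the $n$-th risk scenario is collected, let $n_{cc}=N(n)-n$ be the number of non-risk draws seen by then, and let $\bar y_n$ be their running sample mean (the quantity $y_{\riskregion^{c}}$ of Algorithm~\ref{alg:agg-sampling}). On $Y_1,\ldots,Y_{N(n)}$ both schemes retain exactly the same $n$ risk points, each with weight $1/N(n)$, and both place a single atom of weight $n_{cc}/N(n)$ carrying the non-risk mass; the only difference is that aggregation sampling sites this atom at $\bar y_n$ whereas the $\psi_{\riskregion}$-sample sites it at the exact conditional mean $c^{*}:=\conE{Y}{Y\in\riskregion^{c}}$. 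Thus it suffices to show that, for $n$ large, this lone atom is irrelevant to every $\beta$-tail risk measure, uniformly over $x\in\mathcal{X}$.

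The governing principle is that a $\beta$-tail risk measure of a discrete loss distribution depends only on its quantile function above $\beta$, hence only on the atoms lying at or above the $\beta$-quantile together with their weights. I would therefore prove the key lemma: with probability one, for all $n$ large enough and all $x\in\mathcal{X}$ simultaneously, the aggregated atom has loss strictly below the empirical $\beta$-quantile of each scenario set. Granting this, the empirical $\beta$-quantile and the whole upper $(1-\beta)$-tail of each set are determined solely by the common risk atoms and their common weights, so the quantile functions above $\beta$ coincide; hence $\tilde\rho_{n,\beta}(x)=\hat\rho_{N(n),\beta}(x)$ for every $x$, which is the asserted identity of functions on $\mathcal{X}$.

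To establish the lemma I would assemble three ingredients. By assumption (ii), $c^{*}\in\interior{\riskregion_{\mathcal{X}}^{c}}$, so $f(x,c^{*})<F_x^{-1}(\beta)$ for every $x$; compactness of $\mathcal{X}$ (iii) together with continuity of $(x,y)\mapsto f(x,y)=-x^{T}y$ and of $x\mapsto F_x^{-1}(\beta)$ upgrades this to a uniform gap $\inf_{x\in\mathcal{X}}\bigl(F_x^{-1}(\beta)-f(x,c^{*})\bigr)\ge\delta>0$. By the strong law, $\bar y_n\to c^{*}$ a.s.\ and $n_{cc}/N(n)\to q:=\Prob{Y\in\riskregion^{c}}$ a.s.; since $\riskregion\supseteq\riskregion_{Y,\mathcal{X}}\supseteq\{y:f(x,y)\ge F_x^{-1}(\beta)\}$ for each $x$ we get $q\le\beta$, so the atom's mass fits within the bottom $\beta$-fraction. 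Finally, I would show the empirical $\beta$-quantiles converge to $F_x^{-1}(\beta)$ uniformly in $x$: by Corollary~\ref{cor:portfolio-agg-rv} the aggregated law $\psi_{\riskregion}(Y)$ shares the $\beta$-quantile $F_x^{-1}(\beta)$ of $Y$, and a Glivenko--Cantelli argument applied to the class $\{f(x,\cdot):x\in\mathcal{X}\}$, inverted to quantiles via the neighborhood strict monotonicity and continuity of $F_x$ in assumption (i), yields uniform quantile convergence. Combining the gap $\delta$ with $\bar y_n\to c^{*}$ and these uniform convergences places both atoms strictly below the empirical quantile for all large $n$, uniformly in $x$.

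The main obstacle is precisely this uniformity: producing a single almost-sure event on which one threshold $n$ works for \emph{every} $x\in\mathcal{X}$ at once. Pointwise-in-$x$ quantile convergence does not suffice to conclude $\tilde\rho_{n,\beta}\equiv\hat\rho_{N(n),\beta}$ as functions on $\mathcal{X}$; one must control the quantile map uniformly over the compact index set and rule out that the shrinking atom gap is overtaken by quantile fluctuations along some adversarial sequence of decisions. This is exactly where compactness (iii) and the neighborhood strict-monotonicity and continuity in (i) do the real work, converting the pointwise statements into the required uniform one.
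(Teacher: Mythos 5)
The paper itself does not prove this theorem; it states it as adapted from \cite{Fairbrother15a} and defers the proof to Section~4 of that reference, so there is no in-paper argument to compare against line by line. Your proposal is, in substance, the argument used there and it is sound: couple both constructions on the same $N(n)$ draws, observe that the two resulting scenario sets are identical except for the location of the single non-risk atom (the running mean of the non-risk draws versus the exact conditional mean $c^{*}=\conE{Y}{Y\in\riskregion^{c}}$), and show that with probability one this atom eventually has loss strictly below the empirical $\beta$-quantile simultaneously for all $x\in\mathcal{X}$, so the quantile functions above $\beta$, and hence all $\beta$-tail risk measures, coincide. Two of your ingredients need to be written out rather than asserted: first, the uniform gap $\inf_{x\in\mathcal{X}}\bigl(F_x^{-1}(\beta)-f(x,c^{*})\bigr)>0$ is more cleanly obtained geometrically than via continuity of $x\mapsto F_x^{-1}(\beta)$ (which itself requires proof) --- take a ball $B(c^{*},r)\subseteq\interior{\riskregion_{\mathcal{X}}^{c}}$ and evaluate the linear loss at $c^{*}-r\,x/\norm[x]$ to get $F_x^{-1}(\beta)-f(x,c^{*})> r\inf_{x\in\mathcal{X}}\norm[x]>0$, noting $0\notin\mathcal{X}$ under (i); second, the uniform-in-$x$ convergence of empirical $\beta$-quantiles requires upgrading the pointwise neighborhood condition (i) to a uniform modulus, e.g.\ $\inf_{x\in\mathcal{X}}\bigl(F_x(F_x^{-1}(\beta)+\epsilon)-\beta\bigr)>0$ for each $\epsilon>0$ via compactness and joint continuity of $(x,z)\mapsto F_x(z)$, combined with a Glivenko--Cantelli bound for the half-space class $\{f(x,\cdot)\le z\}$ --- you correctly identify this as the crux, but it is precisely the part of the proof that must actually be carried out.
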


See \cite[Section~4]{Fairbrother15a} for a full
proof of the consistency of these algorithms.

\subsection{Ghost constraints}

\label{sec:portfolio-scengen-ghost}

We noted above that the performance of our methodology improves
as the probability of the non-risk region increases.
In particular, the expected effective sample size in aggregation
sampling increases as the probability of the non-risk risk region increases.
Now, by its definition \eqref{eq:portfolio-non-risk-region} the
non-risk region grows as the problem becomes more constrained.
This suggests that it may be helpful to add constraints to our
problem which shrink the set of feasible portfolios, but which are not
themselves active, in the sense that their presence does not affect
the set of optimal solutions. We will refer to a constraint added to a
problem to boost the performance of our methodology, loosely, as a
\emph{ghost constraint}.

Finding non-active constraints to add to our problem is non-trivial as 
it relies on some knowledge of the optimal solution set. Moreover, even
verifying whether or not a particular constraint is active is difficult in
general for stochastic programs. For a deterministic objective
function which is convex and for which all constraints are convex (and
the optimal solution is unique) a constraint $\{x: g(x)\leq 0 \}$ is
active if and only if it is binding at the optimal solution
$x^{*}$, that is $g(x^{*}) = 0$.  For a stochastic program, we are
typically solving a scenario-based approximation and so a constraint
which is not binding with respect to the scenario-based approximation
may be binding with respect to the true problem and vice versa.
A rigorous test of whether a ghost constraint is active in the sense
above is beyond the scope of this paper.  We simply promote the idea
here that ghost constraints may be a useful way of finding better
solutions.

We resort to heuristic rules to choose ghost constraints. For
example, one could constrain our set of feasible portfolios to some
neighborhood of a good quality solution. This suggests an iterative
procedure whereby one samples scenario sets using aggregation
sampling, solves the resulting problems, adjusts the problem constraints
and then resamples. In Algorithm~\ref{alg:saa-ghost-sampling} we propose
such a heuristic procedure based on the sample average approximation (SAA) method
of \cite{KleywegtEA01}. We call this procedure the \emph{SAA method
with ghost constraints}.  Like with the original SAA method, this
algorithm is presented in a very general form, as the update rules,
such as how the bounds are adjusted in each iteration, can be
implemented in many different ways. In
Section~\ref{sec:portfolio-case} we test this algorithm on a realistic
and difficult problem.

\begin{algorithm}
  \caption{Sample average approximation method with ghost constraints}
  \SetKwData{Left}{left}\SetKwData{This}{this}\SetKwData{Up}{up}
  \SetKwFunction{Union}{Union}\SetKwFunction{FindCompress}{FindCompress}
  \SetKwRepeat{Do}{do}{until}
  \SetKwInOut{Input}{input}\SetKwInOut{Output}{output}
  
  Initialise $l_{i} = -\infty, u_{i} = \infty$ for each $i = 1,\ldots,d$\;
  \Do{Optimality gap and variance are sufficiently small for some $m$}{
    Add constraints $l \leq x \leq u$ to problem\;
    Construct risk region $\riskregion$ for problem\;
    \For{$i = 1,\ldots,M$}{
      Generate a sample of size $N$ using aggregation sampling with risk region $\riskregion$\;
      Solve corresponding problem with objective value $\nu_{N}^{m}$ and optimal solution $\hat{x}_{N}^{m}$\;
      Estimate optimality gap of solution and the variance of the optimality gap\;
    }
    Increase size of $N$\;
    Adjust bounds $l$ and $u$\;
  }
  Choose the best solution $\hat{x}$ among all candidate solutions $x_{N}^{m}$ produced, using a screening and selection procedure.
  \label{alg:saa-ghost-sampling}
\end{algorithm}

\section{Probability of the non-risk region}
\label{sec:portfolio-probnonrisk}

The benefit of aggregation sampling and reduction depends on the
probability of the non-risk region. As was observed in
\cite{Fairbrother15a} the probability of the non-risk region tends to
decrease as the problem dimension increases, but increases as we
tighten our problem constraints, and as we increase $\beta$, the level
of the tail risk measure. In this section we make some empirical
observations on how this probability varies with heaviness of the
tails, and the correlations of the distribution.

The first observation is that in the presence of positivity constraints,
the probability of the non-risk region increases as the the
correlation between random variables increases. This can be seen in Figure \ref{fig:portfolio-corprob} which plots the probability of the risk region as a function of correlation for some two-dimensional distributions.
An intuitive explanation for this type of behavior is that
in the case of positive correlations there is much more overlap in the risk regions of the individual
portfolios. 

\begin{figure}[h]
  \centering
  \includegraphics[width=0.6\textwidth]{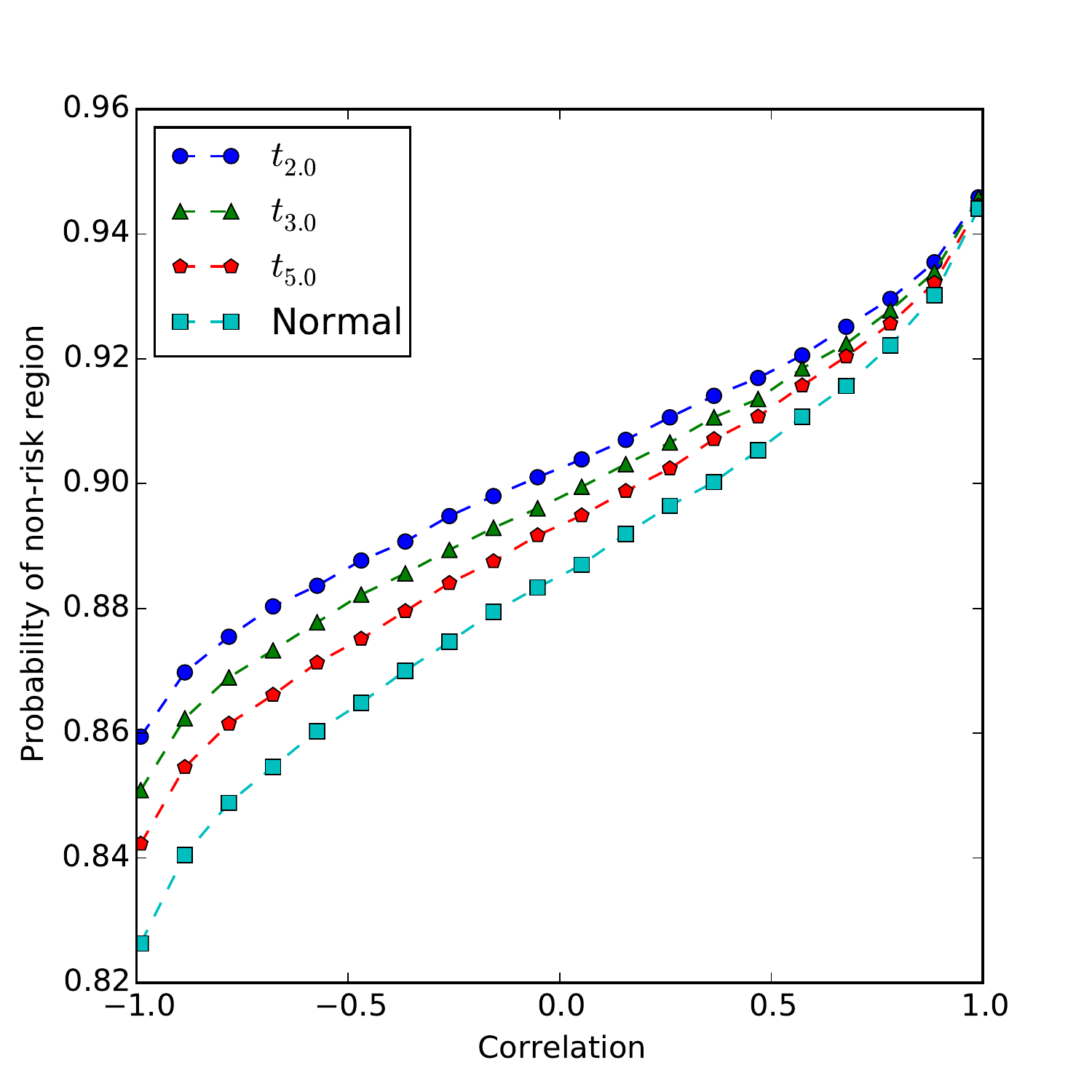}
  \caption{Correlation vs. Probability of non-risk region for some 2-dimensional elliptical distributions, positivity constraints and $\beta = 0.95$}
  \label{fig:portfolio-corprob}
\end{figure}

The extent to which probabilities vary with correlation seems
to be much greater in higher dimensions. In Figure \ref{fig:portfolio-cor-dim-plot}
we have plotted for Normal returns and a range of dimensions, the probabilities
of the non-risk region for a particular type of correlation matrix: $\Lambda\left(\rho\right) \in \rr^{d\times d}$
where $\Lambda(\rho)_{ij} = \rho$ for $i\neq j$ and $\rho > 0$. In the case of $\rho = 0$, the probability decays very
quickly to zero as the dimension increases, whereas as when $\rho$ is close to one, the probability of the non-risk
region approaches $\beta$ for all dimensions.

\begin{figure}[h]
  \centering
  \includegraphics[width=0.6\textwidth]{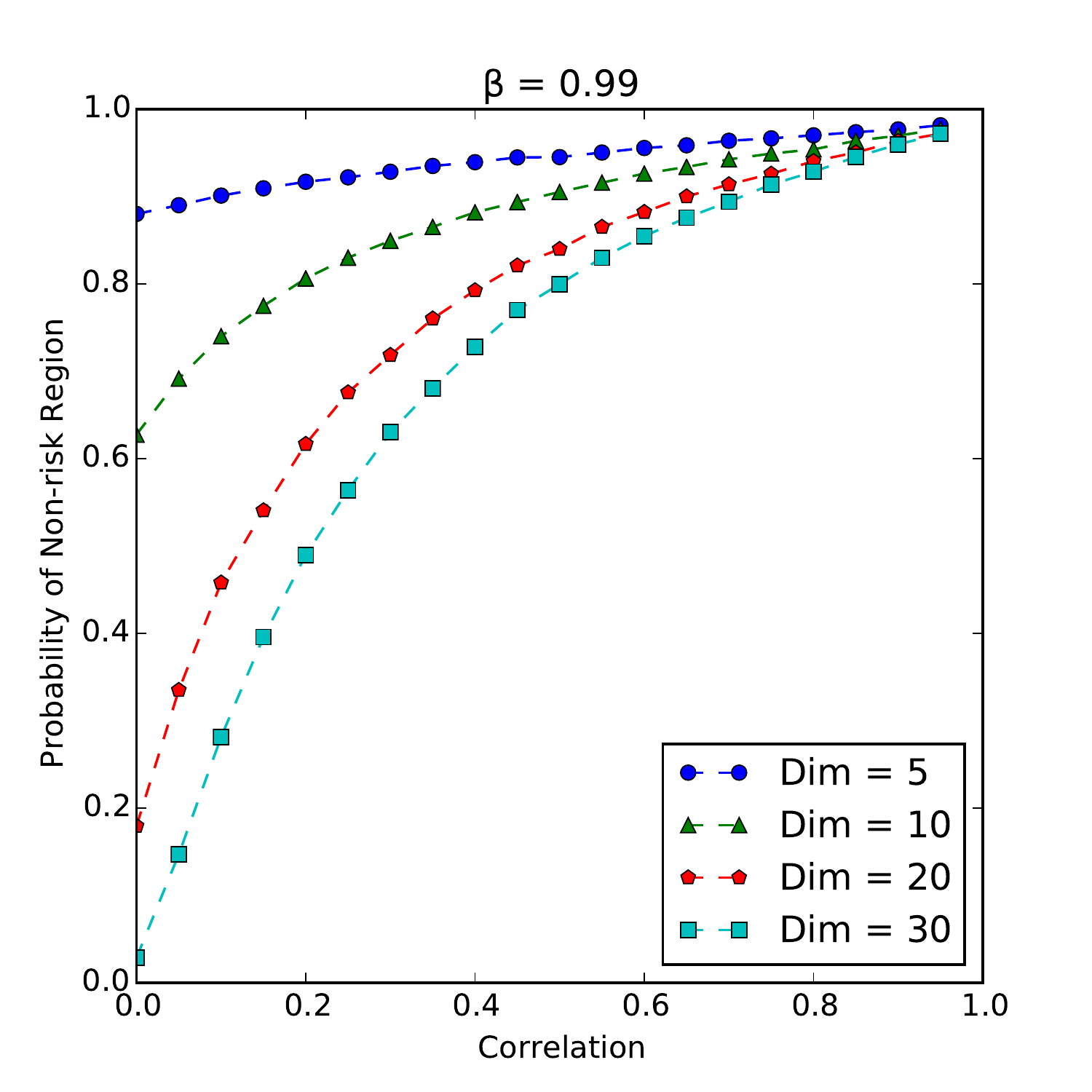}
  \caption{Probability of non-risk region for a range of correlation matrices and dimensions for Normal returns}
  \label{fig:portfolio-cor-dim-plot}
\end{figure}

Our next observation is that the probability of the non-risk region
seems to increase as the tails of the distributions become heavier. 
In Figure \ref{fig:portfolio-heaviness} are plotted
the probabilities of risk regions for some spherical distributions and a range of
dimensions. Note that multivariate t-distributions have heavier tails
than the multivariate Normal distribution, but the tails get lighter
as the degrees of freedom parameter increases. This phenomenon can also be observed in Figure \ref{fig:portfolio-corprob}.

\begin{figure}[h]
  \centering
  \begin{subfigure}[b]{0.45\textwidth}
    \includegraphics[width=\textwidth]{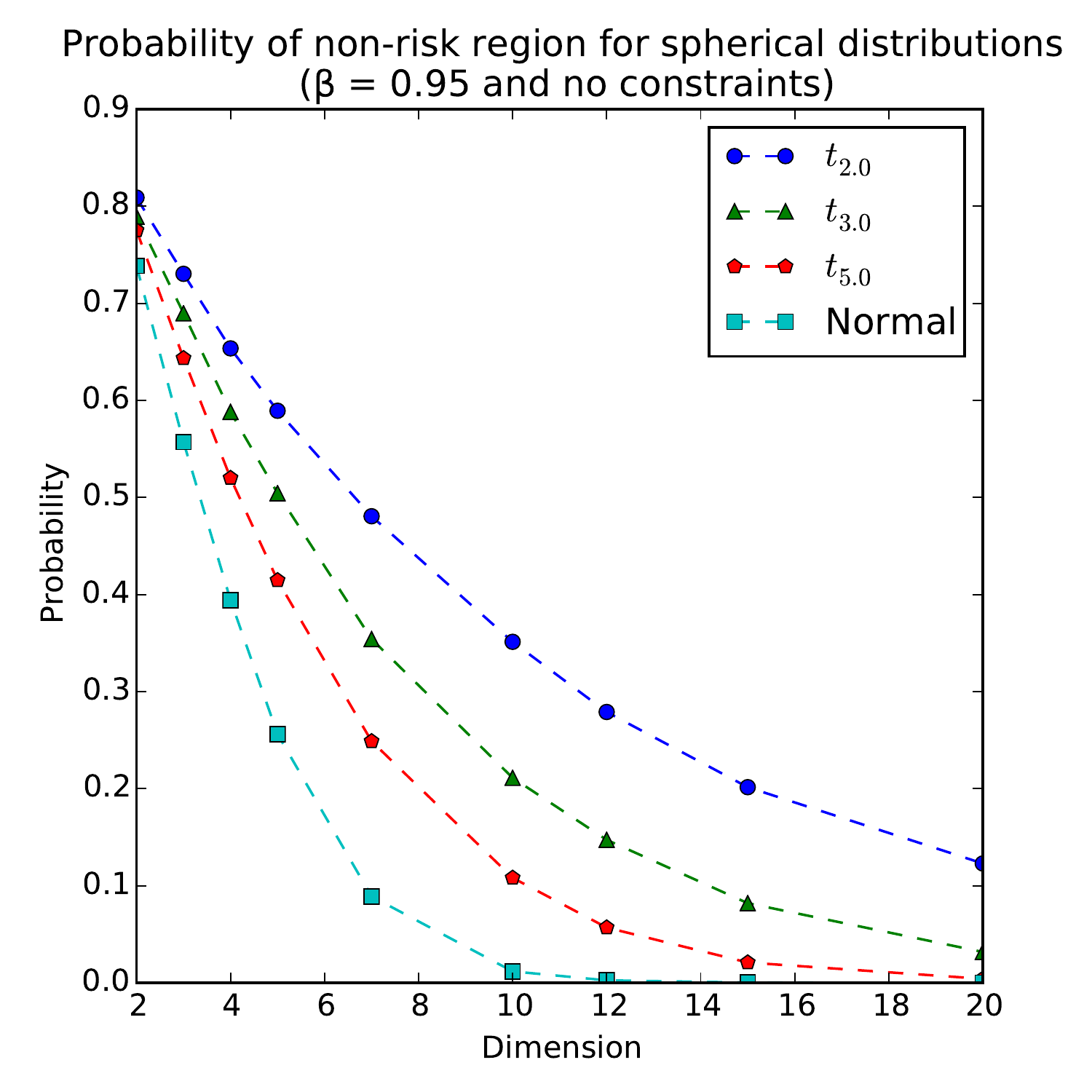}
  \end{subfigure}
  \begin{subfigure}[b]{0.45\textwidth}
    \includegraphics[width=\textwidth]{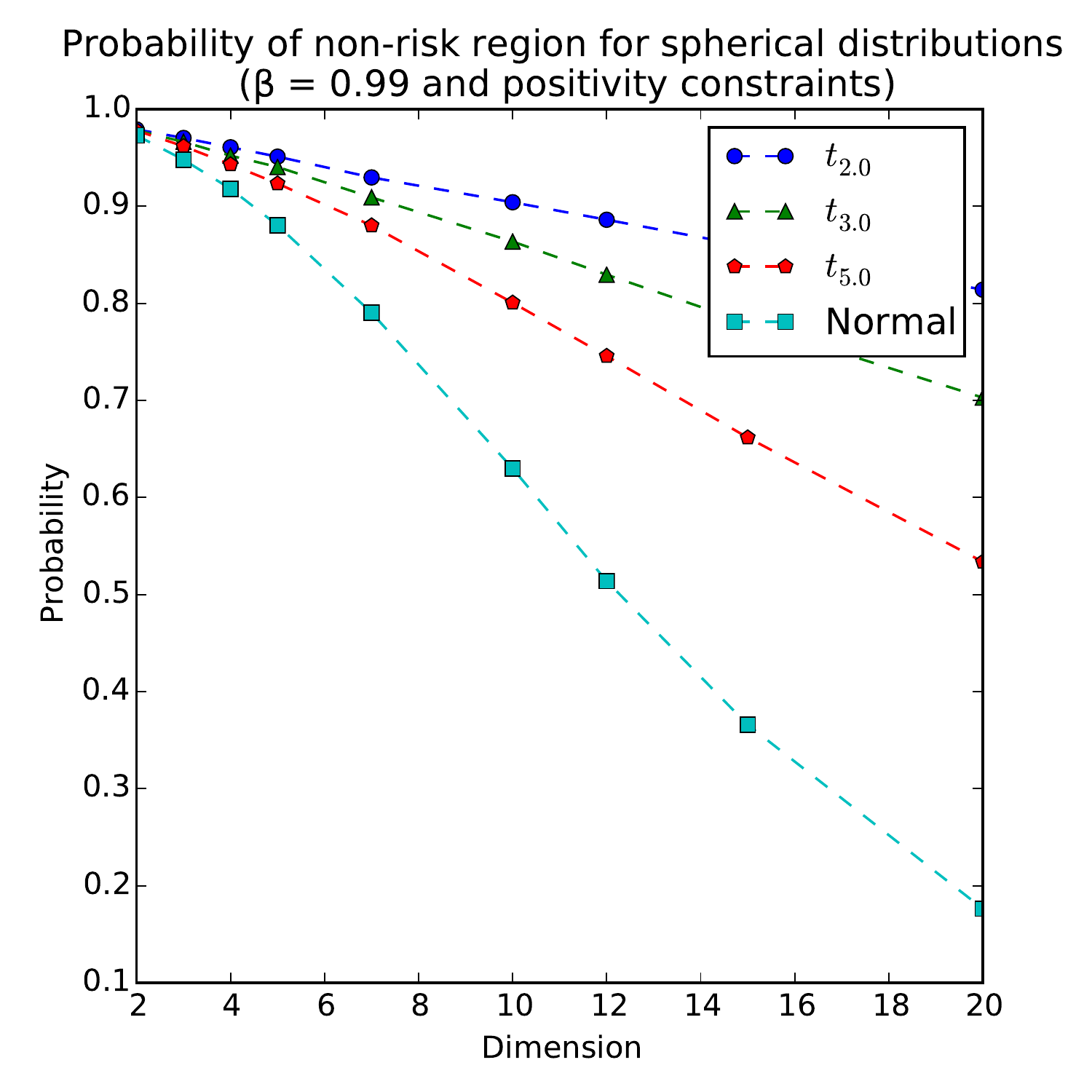}
  \end{subfigure}
  \caption{Probability of non-risk regions for different spherical distributions and dimensions}
  \label{fig:portfolio-heaviness}
\end{figure}

The observations made in this section suggest that that the
application of our methodology will be particularly effective when
applied to real stock data tend to be positively
correlated and have heavy tails.

\section{Numerical tests}
\label{sec:portfolio-numtests}

In this Section we test the numerical performance of our methodology
for realistic distributions. There are three parts to these tests:
the calculation of the probability of the non-risk region for a range
of distributions and constraints, the performance of aggregation
sampling, and the performance of aggregation reduction. To allow
us to measure the quality of the solutions yielded by the proposed methods,
the majority of the tests are performed for elliptical distributions on problems 
without integer variables. These problems can be solved exactly using other methods.
In Section~\ref{sec:portfolio-experimental-set-up} we describe
our experimental set-up, in particular we justify the distributions
constructed for these experiments. The remaining three sections
detail the individual experiments and their results.

\subsection{Experimental set-up}
\label{sec:portfolio-experimental-set-up}

For robustness we will use
several randomly constructed distributions for each family
of distribution and each dimension we are testing. We construct these by fitting parametric distributions
to real data. We use real data rather than arbitrarily generated problem
parameters for two reasons. Firstly, generating parameters which
correspond to well-defined distributions can be problematic.
For example, for the moment matching algorithm, there may
not exist a distribution which has a given set of target moments (see \cite{KlaassenEA00} and
\cite{JohnsonRogers51} for instance). Secondly, as was observed in 
Section~\ref{sec:portfolio-probnonrisk}, the probability of the non-risk region
can vary widely, and so it is most meaningful to test the performance of 
our methodology for distributions which 
are realistic for portfolio selection problems.

\begin{figure}[h]
  \centering
  \includegraphics[width=\textwidth]{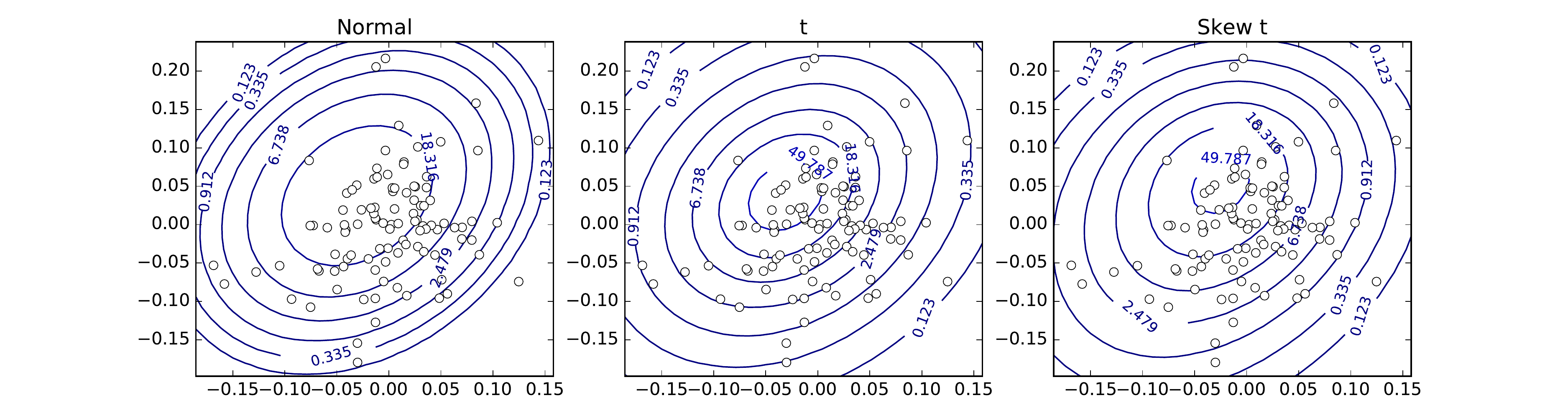}
  \caption{Contour plots of distributions fitted to financial return data for two assets}
  \label{fig:fitted-data}
\end{figure}

We construct our distributions from
monthly return data from between January 2007 and February 2015 for 90
companies in the FTSE 100 index. For each dimension in the test, we
randomly sample five sets of companies of that size, and for each of
these sets fit Normal, t distributions and Skew-t distributions to the
associated return data. Figure \ref{fig:fitted-data} shows for two
stocks the contours of the fitted density functions overlaying the
historical return data. For the $t$ distributions we fix the degrees
of freedom parameter to $4.0$. This is so that we can more easily
compare the effect of heavier tails on the results of our tests.
We allow the corresponding parameter for the skew-t distributions to
be fitted from the data.

These three distributions are fitted to the data
through maximum likelihood estimation, weighing
each observation equally; our aim here is not to construct distributions
which accurately capture the uncertainty of future returns, but to simply construct
distributions which are realistic for this type of problem.
We also use scenario sets constructed using the moment-matching
algorithm. For each random set of companies, we calculate all the required
marginal moments and correlations from their historical returns, and use
these as input to the moment-matching algorithm.
To allow us to compare results, the same constructed distributions
are used across the three sets of numerical tests.

Throughout this section we use the $\cvar$ as our tail risk measure.
This is not only because the $\cvar$ leads to tractable scenario-based
optimization problems, but also for elliptically distributed returns
we can evaluate the $\cvar$ exactly which provides us with a means to evaluate
the true performance of the solutions yielded by the approximate scenario-based
problems. In addition, to ensure that the non-risk region does not
have negligible probability, we will assume that we always have positivity
constraints on our investments (i.e. no short selling). 

The first class of non-elliptical distributions we use in this paper are
known as multivariate Skew-t distributions
\cite{AzzaCapi03}. This class of distributions generalizes the
elliptical multivariate t-distributions through the inclusion of an
extra set of parameters which regulate the skewness. In this case we
approximate the risk region with the risk region of the corresponding
t-distribution.

The second class we use are discrete distributions
constructed using the moment matching algorithm of \cite{HoylandEA00}. These
distributions have been applied previously to financial problems \cite{KautEA03}. This algorithm constructs
scenario sets with a specified correlation matrix and whose marginals have specified
first four moments. This algorithm works by first taking a sample from a multivariate Normal distribution, and then iteratively applying transformations to this
until the difference between its marginal moments and correlation matrix are
sufficiently close to their target values. Since the algorithm is initialized with a
sample from an elliptical distribution, the final distribution is near elliptical and we approximate the risk
region for these distributions with the risk region of a multivariate normal distribution with the same mean and covariance structure.

\subsection{Probability of non-risk region with quota constraints}
\label{sec:portfolio-numtests-prob}

We first estimate the probability of the non-risk region
for a range of distributions, dimensions and constraints. 
We calculate these probabilities only for the Normal and t distributions as skew-t distributions
and moment matching scenario sets use surrogate risk regions based on these distributions.
The main purpose of this is to provide intuition about under what circumstances the methodology
is effective: there is little to be gained from aggregating scenarios 
in a non-risk region of negligible probability.

For each distribution we sample 2000 scenarios and calculate the
proportion of points in the non-risk region for different levels of
$\beta$ and constraints. In particular, for each dimension we calculate for
$\beta = 0.95$, and $\beta=0.99$, and for a range of \emph{quotas}.
The feasible region corresponding to quota $0<q<1$ is $\{x\in\rd: 0 \leq x_i \leq q \text{ for } i = 1,\ldots,d,\ \sum_{i=1}^dx_i = 1\}$.
Quotas are quite a natural constraint to use in the portfolio selection
problem as they ensure that a portfolio is not overexposed to one asset.
The quotas may also be viewed as ghost constraints to be used in cases
where the probability of the non-risk region with only positivity
constraints is too small.

\begin{figure}[h]
  \centering
  \begin{subfigure}[b]{0.45\textwidth}
    \includegraphics[width=\textwidth]{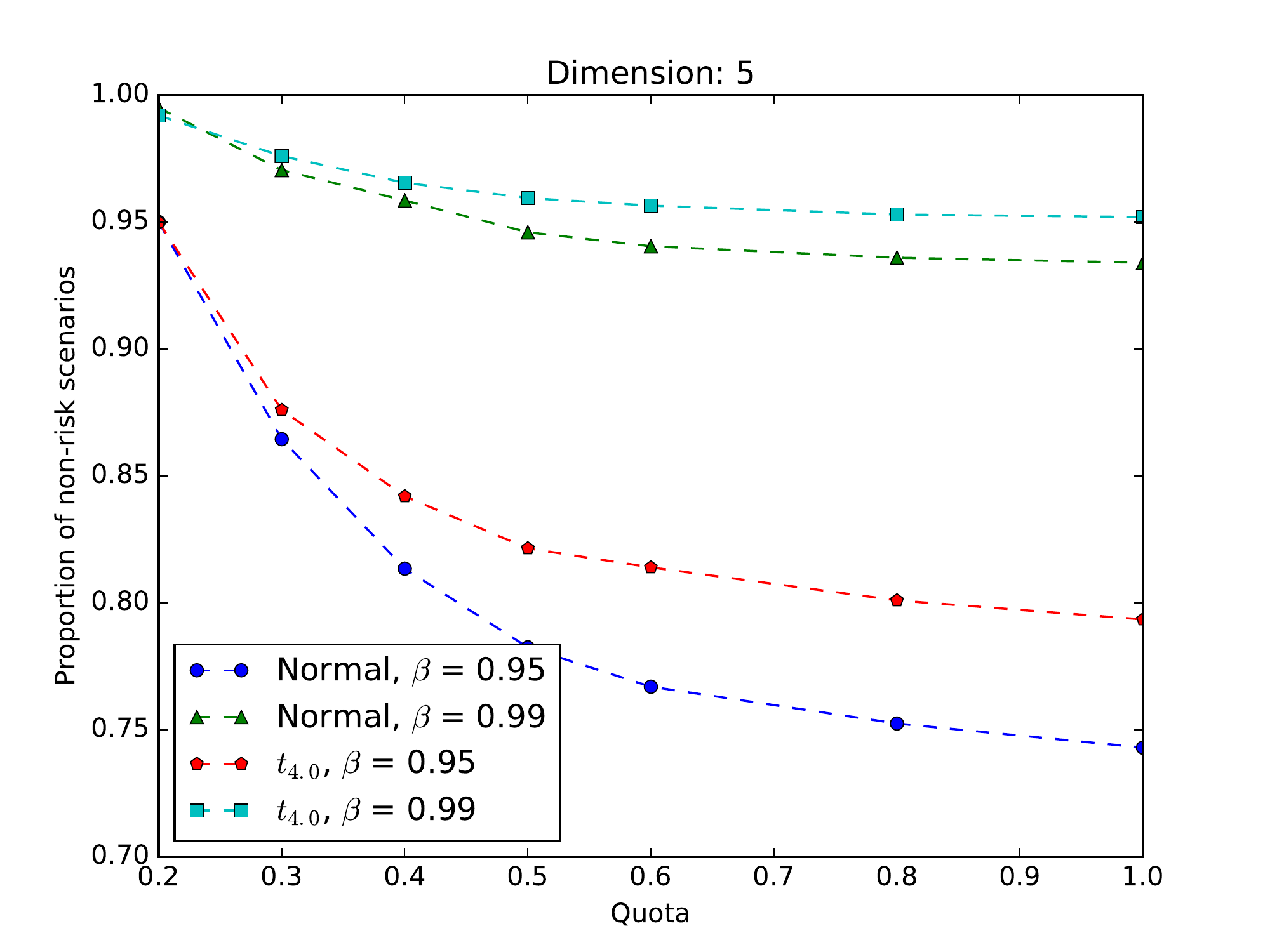}
  \end{subfigure}
  \begin{subfigure}[b]{0.45\textwidth}
    \includegraphics[width=\textwidth]{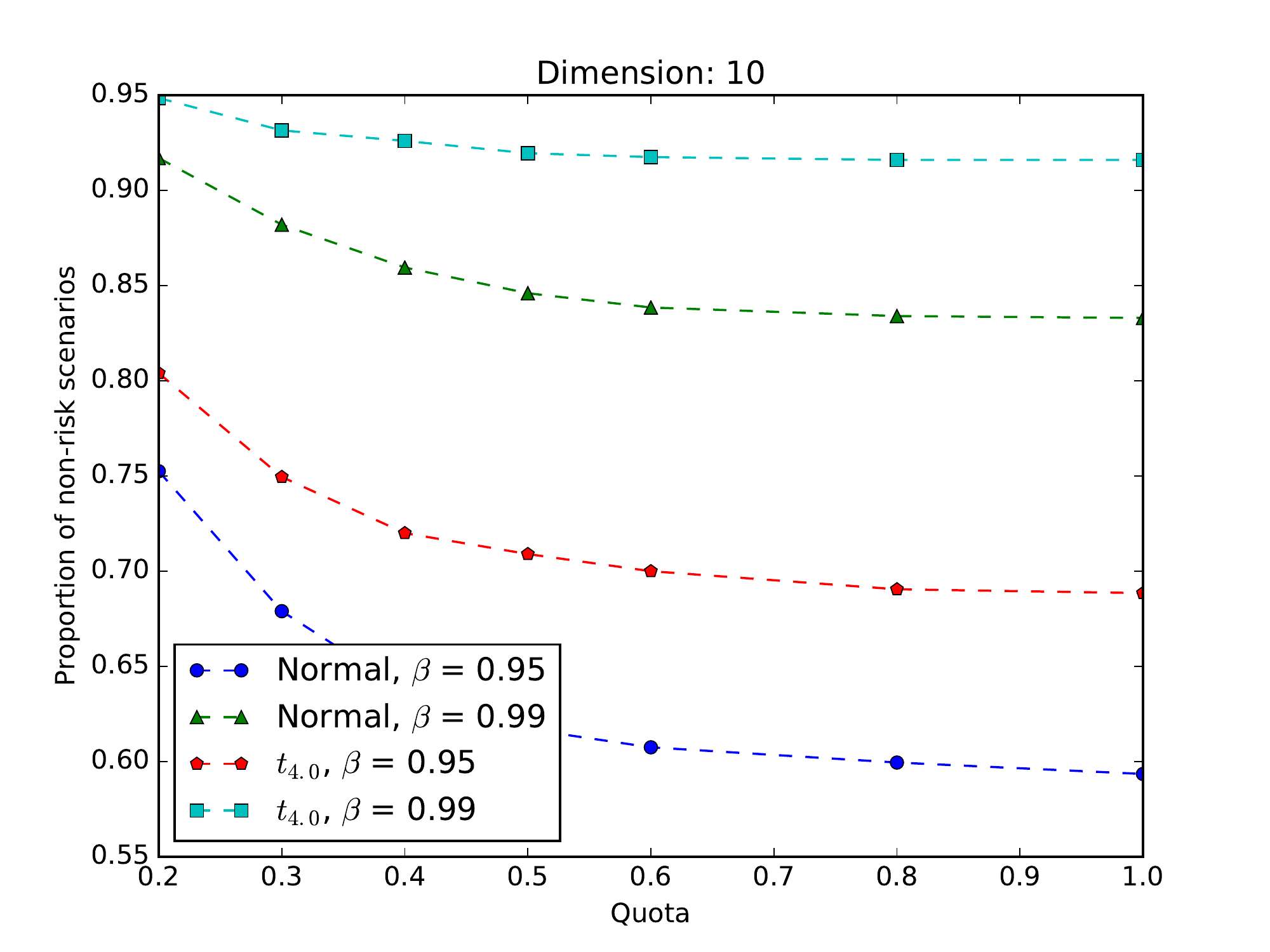}
  \end{subfigure}
  \begin{subfigure}[b]{0.45\textwidth}
    \includegraphics[width=\textwidth]{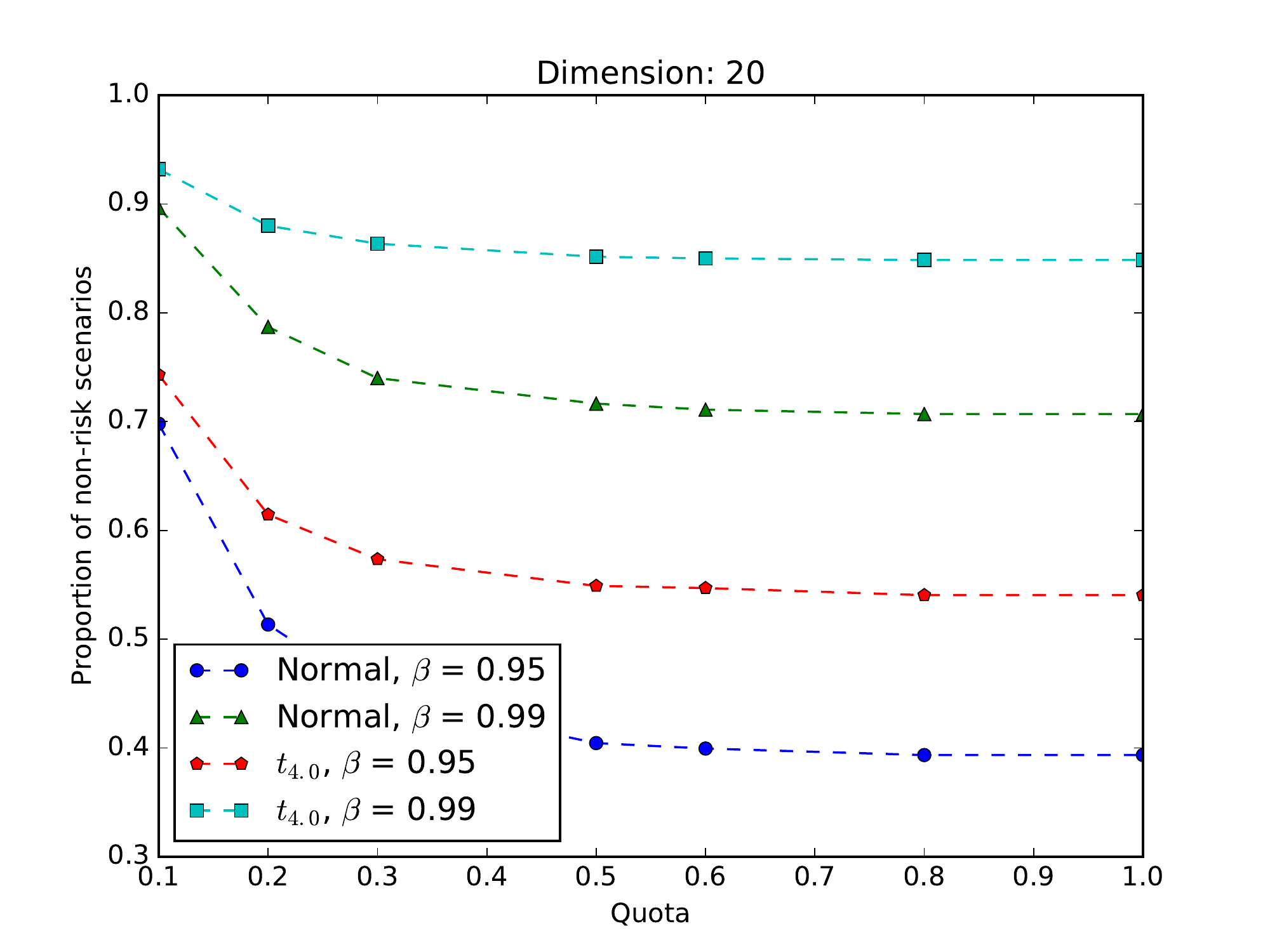}
  \end{subfigure}
  \begin{subfigure}[b]{0.45\textwidth}
    \includegraphics[width=\textwidth]{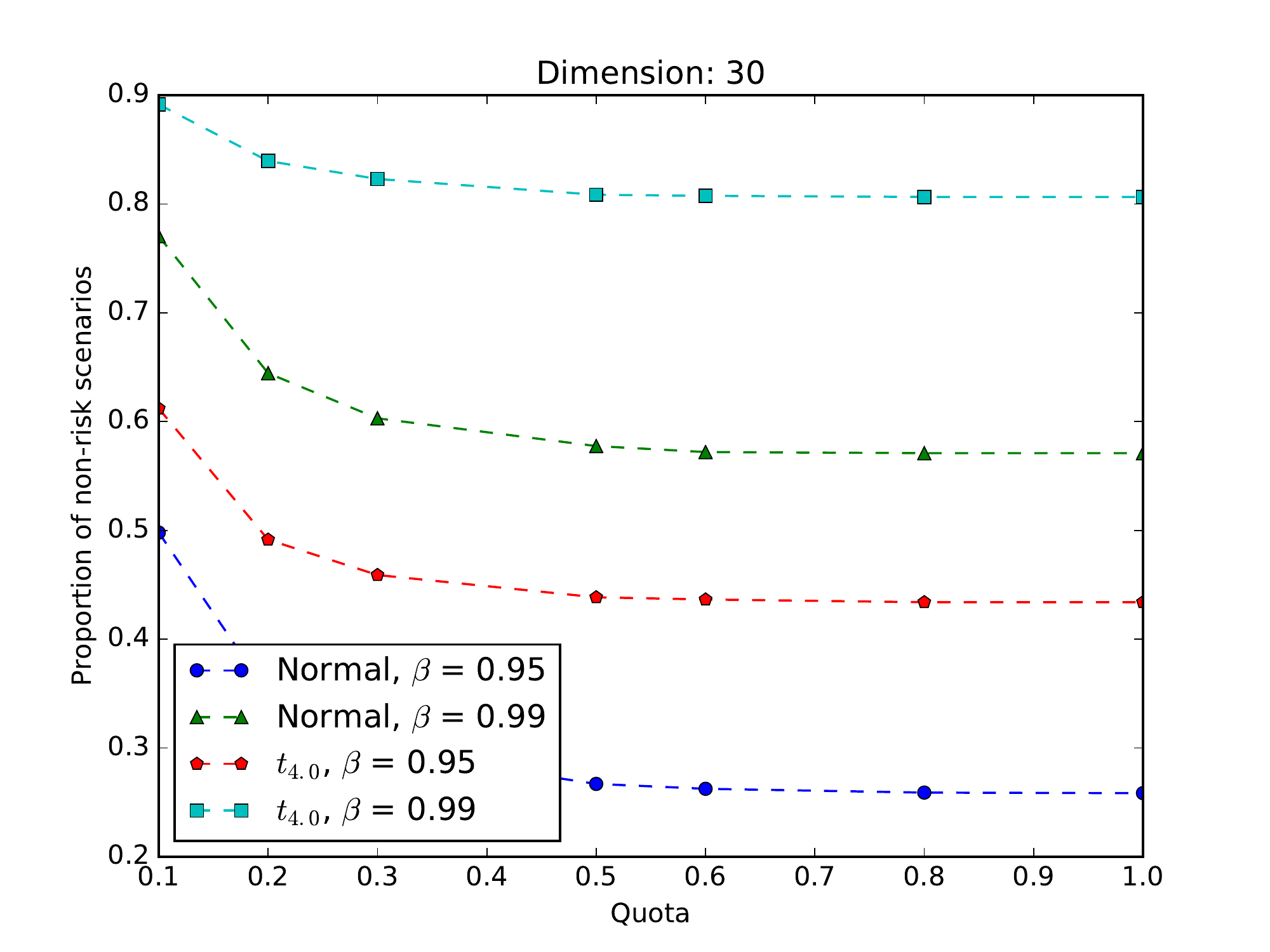}
  \end{subfigure}
  \caption{Proportions of non-risk scenarios}
  \label{fig:prop-non-risk}
\end{figure}

In Figure \ref{fig:prop-non-risk} for each each dimension we have
tested we have plotted the results of one trial. The full results can
be found in Appendix~\ref{sec:red-prop-tables}. 
The first important observation from these is that the proportion of scenarios in
the non-risk region, as compared to the uncorrelated case in Figure
\ref{fig:portfolio-cor-dim-plot}, is surprisingly high; even for $\beta=0.95$
and dimension 30, this proportion is non-negligible. 
As expected, the proportion of scenarios in the non-risk region increases as we tighten
our quota. However, for higher dimensions the quotas need
to be a lot tighter to make a significant difference.
The plots also provide further evidence that the t distribution has non-risk
regions with higher probabilities than the lighter-tailed Normal distribution.
In Figure \ref{fig:prop-non-risk}, the non-risk region for the $t$-distributions
has probability around 0.05 to 0.1 higher for dimensions 5 and 10, and
around 0.1 to 0.2 higher for dimensions 20 and 30.


\subsection{Aggregation sampling}
\label{sec:portfolio-numtests-optgap}

In this section we compare the quality of solutions
yielded by sampling and aggregation sampling by observing
the optimality gaps of the solutions that these
scenario generation methods yield. For this, we use the following
version of the portfolio selection problem.

\begin{align*}
  \minimize[x\geq 0] &\cvar\left(-x^T Y\right)\\
  \text{such that} &\ x^T\mu \geq \tau,\\
  & \sum_{i=1}^d x_i = 1,\\
  & 0 \leq x \leq u,
\end{align*}
where $\mu$ is the mean of the input distribution (rather than scenario set),
$\tau$ is the target return and $u$ is a vector
of asset quotas. The primary reason for using this formulation over those
in \eqref{eq:p2} and \eqref{eq:p3} is that given a distribution of asset
returns it is easy to select an appropriate expected target return $\tau$.
For simplicity, in our tests we set $\tau = \frac{1}{n}\sum_{i=1}^n \mu_i$ which ensures that the constraint
is feasible but not trivially satisfied.
Notice that in the above formulation we use the deterministic constraint,
$x^T \mu \geq \tau$ rather than $\E{x^TY} \geq \tau$. This is because
the latter constraint depends on the scenario set.
Therefore, the solution from a scenario-based approximation
might be infeasible with respect to the original problem, which makes measuring solution quality problematic. 

In this experiment, we test the performance of the aggregation
sampling algorithm for three families of distributions: the
Normal distribution, the t-distribution and the skew-t distribution.

For each distribution and problem dimension we run five trials
using our constructed distributions (as described in Section~\ref{sec:portfolio-experimental-set-up}).
Each trial consists of generating 50 scenario sets via sampling and aggregation
sampling, solving the corresponding scenario-based problem for each of
these sets, and calculating the optimality gap for each solution which
is yielded. For each scenario generation method we then calculate the
mean and standard deviation (S.D.) of the optimality gap. 
For the skew-t distributions, although we are able to evaluate the
objective function value for any candidate solution, to find the true
optimal solution value (or one close to it), we resort to solving the
problem for a very large sampled set of size 200000.

The full results for this experiment can be found in 
Appendix \ref{sec:agg-samp-tables}. In Figure \ref{fig:portfolio-agg-sampling}
we have plotted for one trial the raw results for dimensions 10 and 30.
We observe that there is a consistent
improvement in solution quality in using aggregation
sampling over basic sampling. In addition the solution
values are much more stable. The improvement
in solution quality and stability is particularly big for
the $t$-distributions. This is because the probability of the non-risk
region is greater for heavier-tailed distributions as observed in Section \ref{sec:portfolio-probnonrisk}.
Aggregation sampling even lead to consistently better solutions for
the skew-t distributions where we are approximating the risk region
with a risk region for a t-distribution.

\begin{figure}[h]
  \centering
  \begin{subfigure}[b]{0.45\textwidth}
    \includegraphics[width=\textwidth]{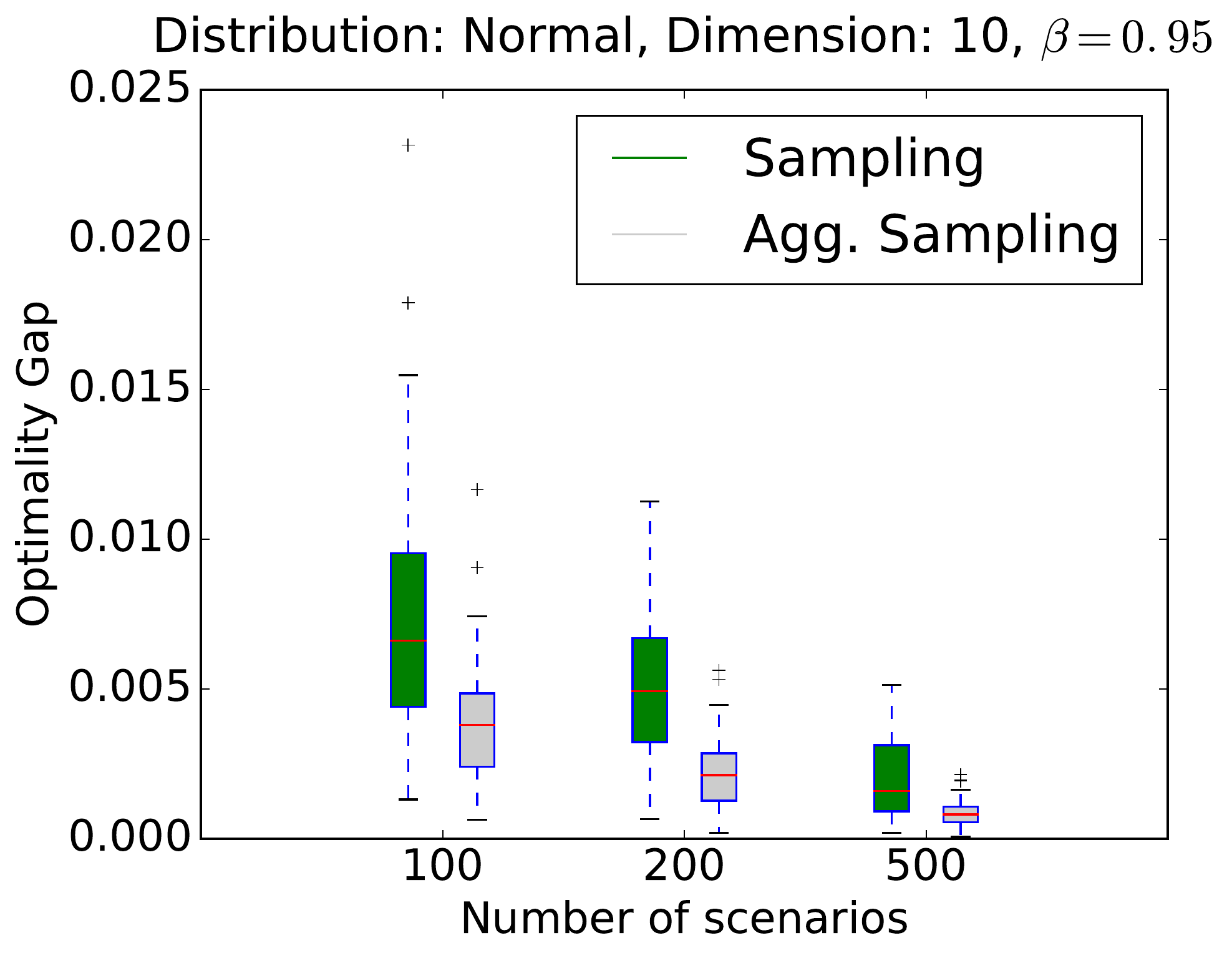}
  \end{subfigure}
  \begin{subfigure}[b]{0.45\textwidth}
    \includegraphics[width=\textwidth]{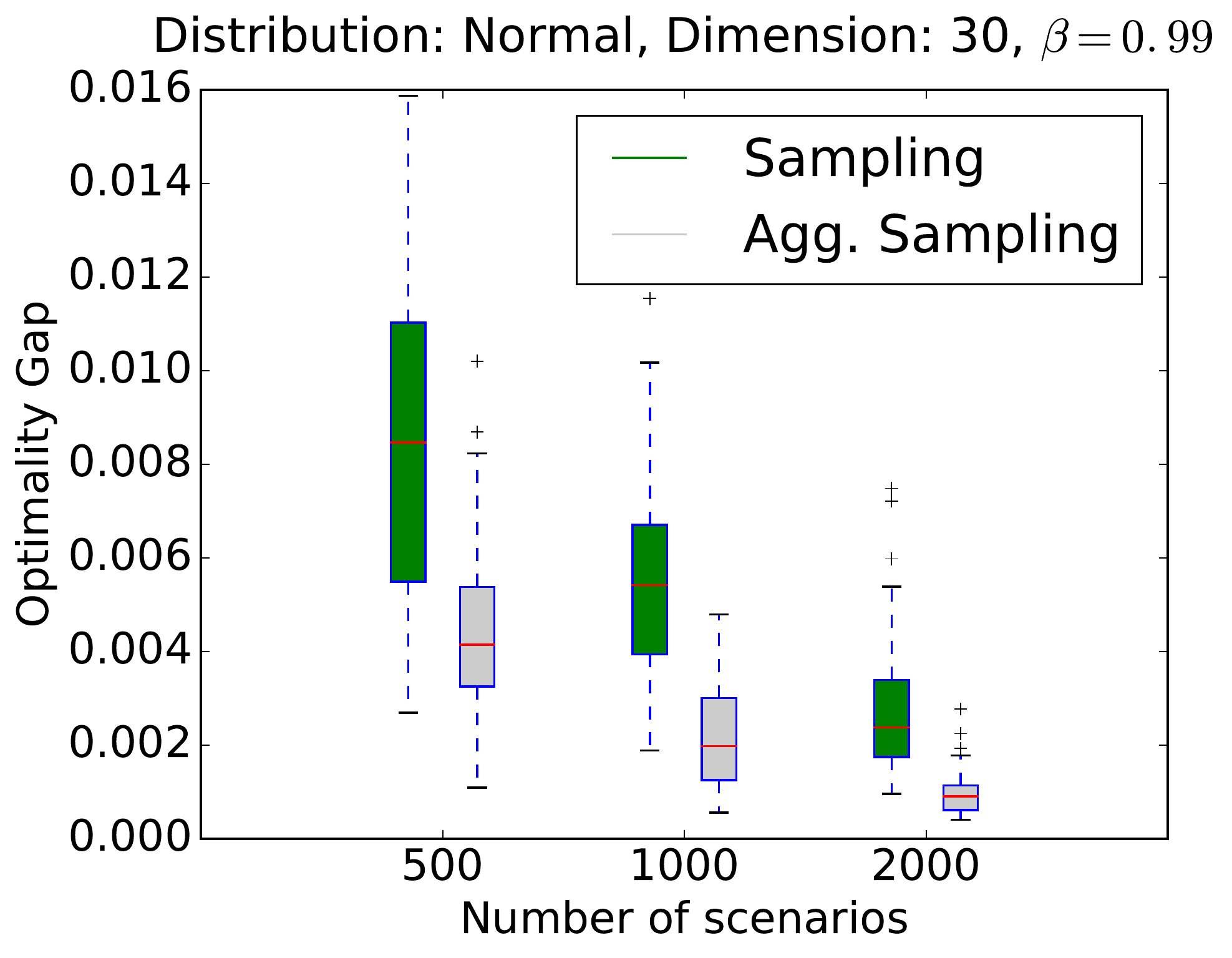}
  \end{subfigure}
  \begin{subfigure}[b]{0.45\textwidth}
    \includegraphics[width=\textwidth]{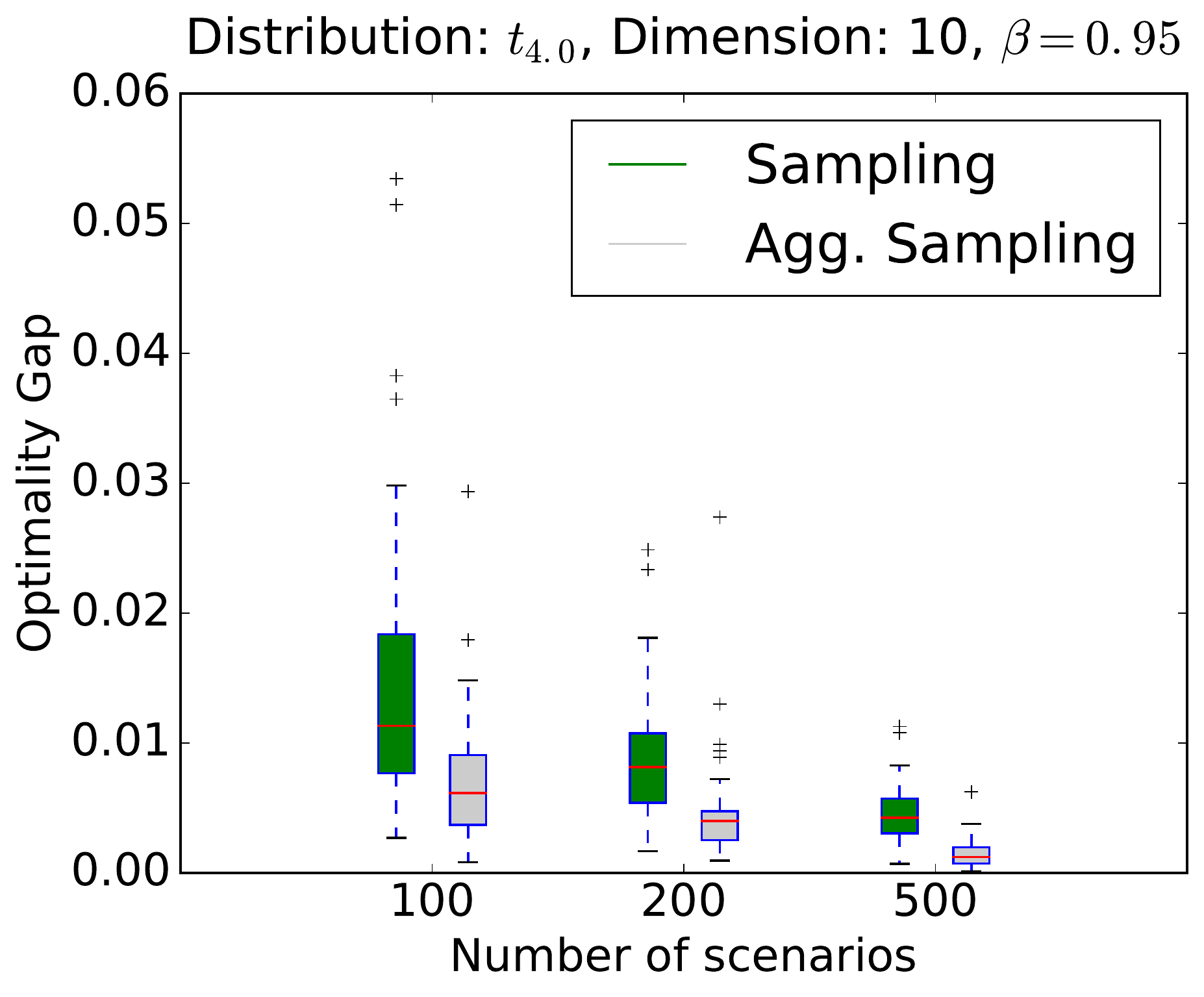}
  \end{subfigure}
  \begin{subfigure}[b]{0.45\textwidth}
    \includegraphics[width=\textwidth]{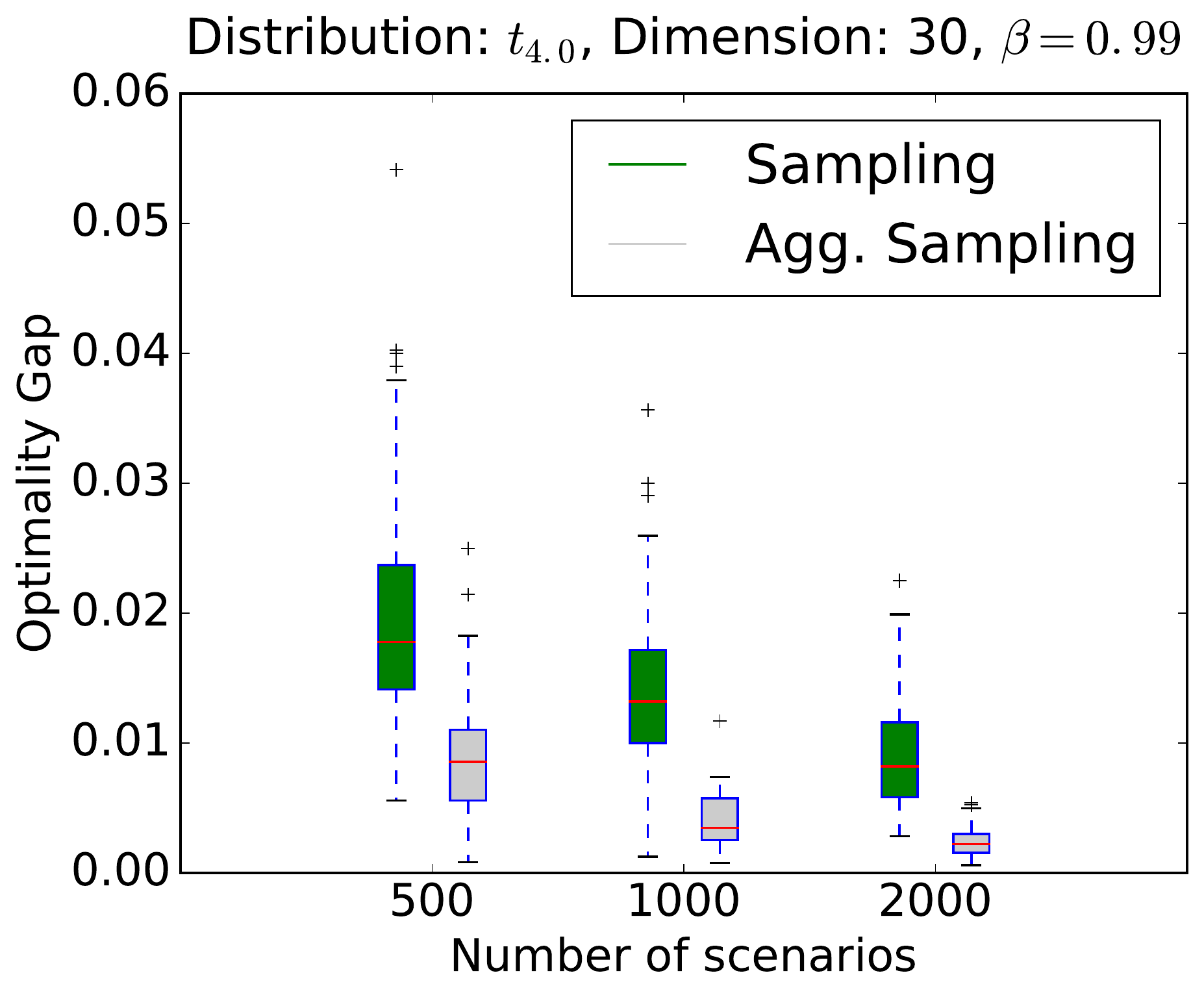}
  \end{subfigure}
  \begin{subfigure}[b]{0.45\textwidth}
    \includegraphics[width=\textwidth]{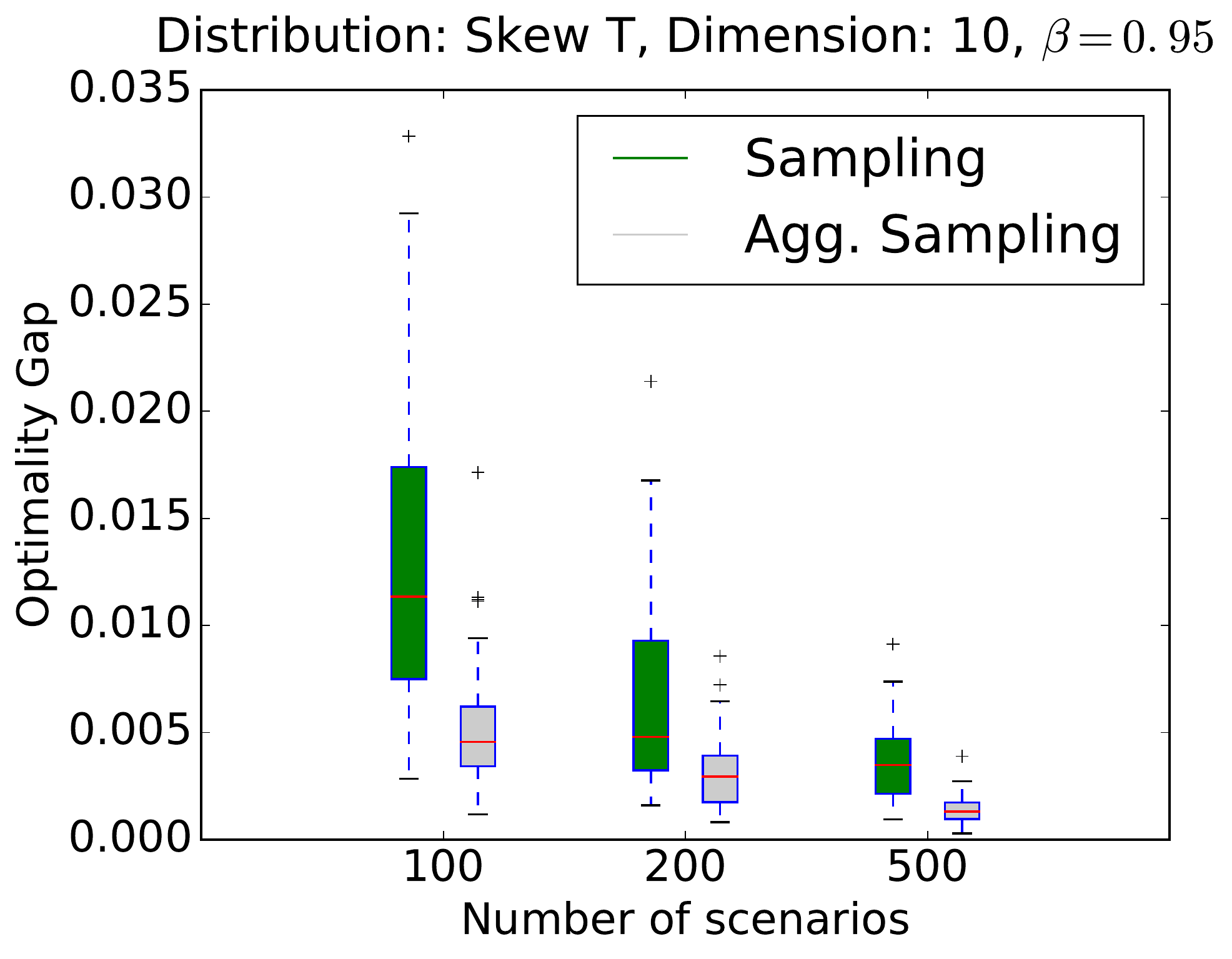}
  \end{subfigure}
  \begin{subfigure}[b]{0.45\textwidth}
    \includegraphics[width=\textwidth]{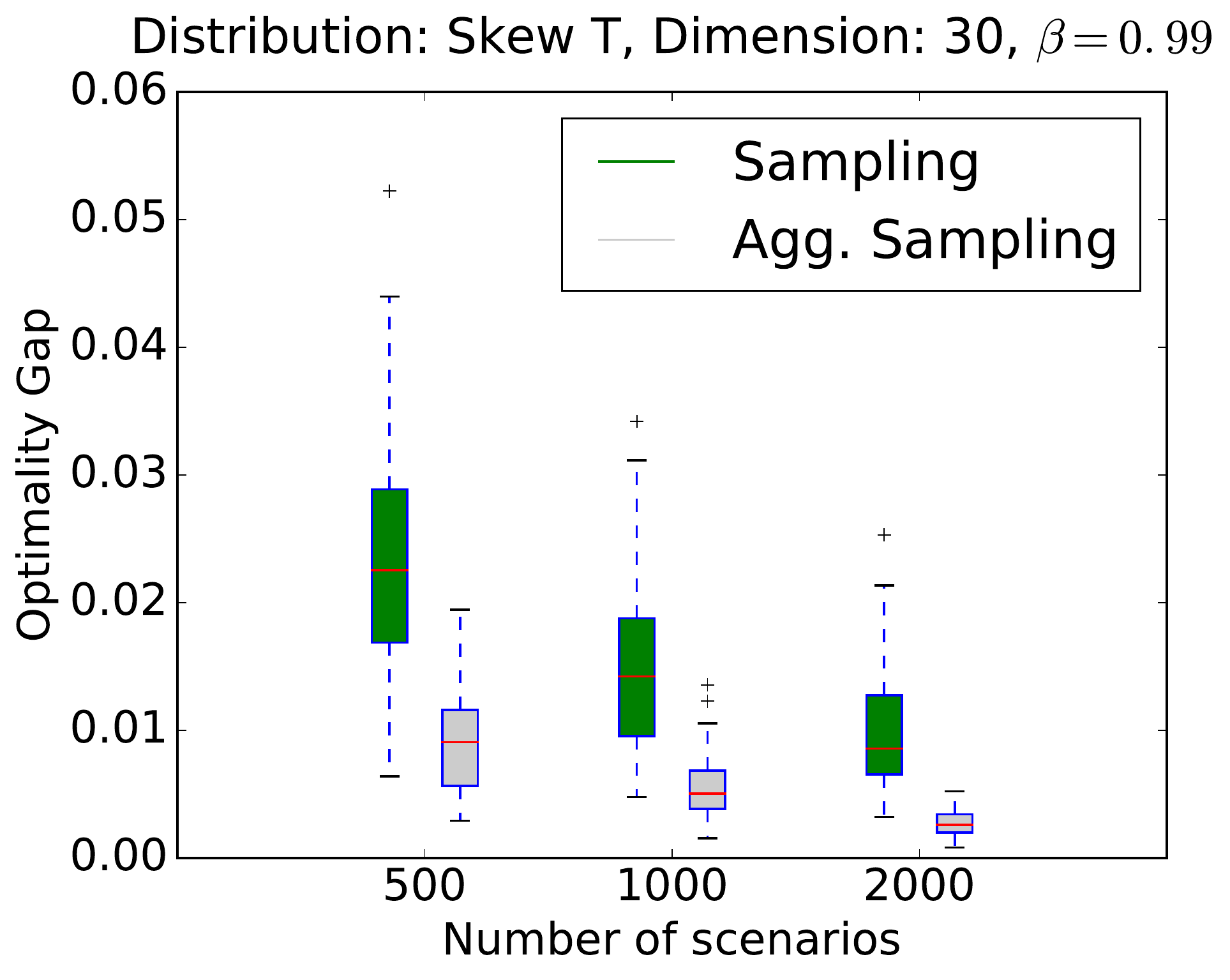}
  \end{subfigure}
  \caption{Stability test comparing performance of sampling and aggregation sampling}
  \label{fig:portfolio-agg-sampling}
\end{figure}

\subsection{Aggregation reduction}
\label{sec:portfolio-numtests-reduction}

The aim of these tests is to quantify the error induced through
the use of aggregation reduction.
In particular, we calculate the error induced in the optimal solution
value. For a given scenario set, we aggregate the non-risk scenarios, solve the problem
with respect to this reduced set, and calculate the optimality gap of this solution with respect
to the original scenario set.

For these tests we use the same problem as in Section \ref{sec:portfolio-numtests-optgap}
and run tests for Normal, t and moment matching distributions.
As explained in Section \ref{sec:portfolio-scengen}, we use the risk region
of a Normal distribution to approximate the risk region for moment-matched
scenario sets. For each family of distributions and problem dimension
we again run five trials for different instances of the distribution. 
In each trial for different initial scenario set sizes, $n=100,\ 200,\ 500$, 
and two different levels of tail risk measure $\beta=0.95,\ 0.99$,
we calculate the reduction error for 30 different scenario sets and report the mean error. 

The full results can be found in Appendix~\ref{sec:red-err-tables}.
These show that the reduction error is generally very small,
in fact for almost all problems using $\beta=0.95$, there is no error induced. 
For $\beta=0.99$, and the smallest scenario set size $n=100$,
there is a small amount of error ($< 0.01$) for the
Normal distributions, slightly larger errors for the heavier-tailed
$t$ distribution ($< 0.02$), and the largest errors (0.1-0.5) occur for reduced
moment-matching scenario sets whose risk regions have been approximated
with that of a Normal distribution. However, as the scenario set size
is increased, all errors are reduced, and for the largest scenario
set size $n=500$, there is no error induced for almost all problems.

Comparing the reduction errors with the corresponding non-risk region probabilities
in Appendix~\ref{sec:red-prop-tables}, we see that the larger errors generally
occur for the higher dimensional distributions whose non-risk
region has a larger probability. This is to be expected as the larger the non-risk
region the more scenarios that are aggregated. In Table~\ref{tab:portfolio-moments-proportions}
in Appendix~\ref{sec:red-err-tables} we have also included the proportions of reduced
scenarios for moment matching scenario sets for which we approximated the risk region
with that of a Normal distribution. The proportions of reduced scenarios in this case
are generally slightly higher than that of the corresponding Normal distributions. This
might suggest that the surrogates for the risk region are slightly too small, but this could
equally be explained by the fact that moment matching scenario sets generally
have heavier tails than the corresponding Normal distribution, which, as we observed
in Section~\ref{sec:portfolio-probnonrisk}, also leads to non-risk regions of higher probabilities.
In either case, the larger errors which are induced by reducing small moment-matching scenario
sets could be explained by these increased probabilities.

\section{Case study}
\label{sec:portfolio-case}

In this section we demonstrate the application of the presented methodology
on a difficult problem which may occur in practice.
The problem is characterized by a high-dimension, a heavy-tailed 
non-elliptical distribution of asset returns, and the presence of integer variables.
We compare the performance of the SAA method with ghost constraints
algorithm proposed in Section~\ref{sec:portfolio-scengen-ghost} against
the standard SAA method using basic sampling and aggregation
sampling without ghost constraints. 


\subsection{Problem construction}
\label{sec:problem-construction}

The following problem is used:
\begin{align*}
  \minimize[x, z] &\cvar\left(-x^T Y\right)\\
  \text{such that} &\ x^T\mu \geq \tau,\\
  & x_i \leq z_i\ \text{for each } i = 1,\ldots,d,\\
  & \sum_{i=1}^d x_i = 1,\\
  & \sum_{i=1}^d z_i = l,\\
  & 0 \leq x \leq u,\\
  & z_i \in \{0, 1\}\ \text{for each } i = 1,\ldots,d.
\end{align*}

This problem is similar to that used in Section~\ref{sec:portfolio-numtests-optgap}
except that we now use binary decision variables to limit the number of assets
in which one can invest. The extra constraints involving integer
variables may change the conic hull of feasible portfolios,
however, the method presented in Section \ref{sec:portfolio-conic-hull}
for calculating conic hulls of feasible regions cannot handle these.
We therefore ignore these constraints when constructing a risk region.
This is acceptable as the resulting conic
hull will contain the true conic hull. The problem 
is solved for $\beta=0.99$, $d=50$ assets, and maximum number of assets
$l=10$. The random vector $Y$ of asset returns is constructed by
fitting Skew-t distributions to historical monthly return data for
companies from the FTSE100 stock index.  The risk region used for this
distribution is constructed as described at the end of
Section~\ref{sec:portfolio-experimental-set-up}.

\subsection{Details of the SAA method}
\label{sec:details-saa-method}

\paragraph{Estimation of optimality gap}

In each iteration of the SAA method the optimality gap for each of the
found solutions is calculated using the procedure described in
\cite{KleywegtEA01}. Let $N$ be the scenario set size and $M$
be the number of replications used in each iteration. 
For $m = 1,\ldots,M$ denote by $g_{N}^{m}$ and
$\nu_{N}^{m}$ respectively the objective function and optimal solution
value corresponding to the $m$-th constructed scenario set. In this case,
$g_{N}^{m}(x) = \cvar(-x^{T}Y^{m})$ where $Y^{m}$ denotes the discrete random vector
corresponding to the $m$-th constructed scenario set. 

Estimators for the optimal solution value and objective function
are given by:
\begin{equation*}
  \bar{\nu}_{N}^{M} = \sum_{m=1}^{M} \nu_{N}^{m},\qquad \bar{g}_{N}^{M}(x) = \sum_{m=1}^{M} g_{N}^{M}(x),
\end{equation*}

Now, an $\alpha$-level confidence interval for the optimality
gap estimator by the solution $x$ is given by
\begin{equation*}
  \bar{g}_{N}^{M}(x) - \bar{\nu}_{N}^{M} + \Phi^{-1}(1 - \alpha)\frac{\bar{S}_{M}^{2}}{\sqrt{M}}.
\end{equation*}
where $\bar{S}_{M}^{2}$ is the standard deviation of $g_{N}^{m}(x) - \nu_{N}^{m}$
over $m=1,\ldots,M$, and $\Phi$ is the cumulative distribution
function for a standard Normal distribution.
Note that other procedures for estimating the optimality gap exist
which only require the solution of one or two problems
\cite{BayrMorton05}, \cite{stockbridge13}.

\paragraph{Sample sizes, replications and bounds}

For this experiment, the initial sample size used to solve this
problem is $N=N_{0}=200$ and at each iteration this is increased
by $N_{0}/2 = 100$. The number of replications used in each iteration
is fixed at $M=10$. These update rules have been chosen for simplicity;
more sophisticated rules for updating the sample sizes and number of
replications can be found in, for example, \cite{royset2013optimal}.

For the ghost constraints, the following heuristic rule is
used. First denote by $\hat{x}^{m}_{N}$ for $m=1,\ldots,M$
the solutions found for each replication. At the end of each iteration
the bounds $l \leq x \leq u$ are updated as follows:

\begin{align*}
  l &= \max\left(\bar{x}_{N}^{M} - \Phi^{-1}(\alpha)\frac{\bar{\sigma}^{M}_{N}}{\sqrt{M}}, 0\right),\\
  u &= \min\left(\bar{x}_{N}^{M} + \Phi^{-1}(\alpha)\frac{\bar{\sigma}^{M}_{N}}{\sqrt{M}}, 1\right),
\end{align*}
where $\bar{x}_{N}^{M}$ and $\bar{\sigma}_{N}^{M}$ are the element-wise mean
and standard deviation of the solutions $\hat{x}_{N}^{m}$ over $m=1,\ldots,M$, and the parameter $0<\alpha<1$ controls how aggressively the ghost constraints
are tightened. In this experiment we use $\alpha = 0.99$.

\paragraph{Solution validation}

Given the potential dangers in approximating the risk region, and
misspecifying ghost constraints, it is important to verify the quality of a
solution by the calculation of its corresponding
\emph{out-of-sample} value \cite{KautWall03:2}.
That is, after we calculate the $\beta$-CVaR for all candidate
solutions with respect to a large independently sampled scenario set for
all solutions yielded by the final iteration of the SAA methods. For this
experiment a sample size of 100000 is used for validation. 

\subsection{Results}
\label{sec:results}

The results to this experiment are shown in Figure~\ref{fig:case-study-res}.
In Figure~\ref{fig:case-gap} is shown the best optimality
gap found at the end of each iteration of the SAA method, in Figure~\ref{fig:case-out} are shown box plots with the out-of-sample values of the final solutions 
yielded by each method, and to aid our interpretation of the results.
In Figure~\ref{fig:case-prob} we have plotted the evolution of the
probability of the non-risk region used in the SAA method with ghost constraints.

\begin{figure}[h]
  \centering
  \begin{subfigure}[b]{0.45\textwidth}
    \includegraphics[width=\textwidth]{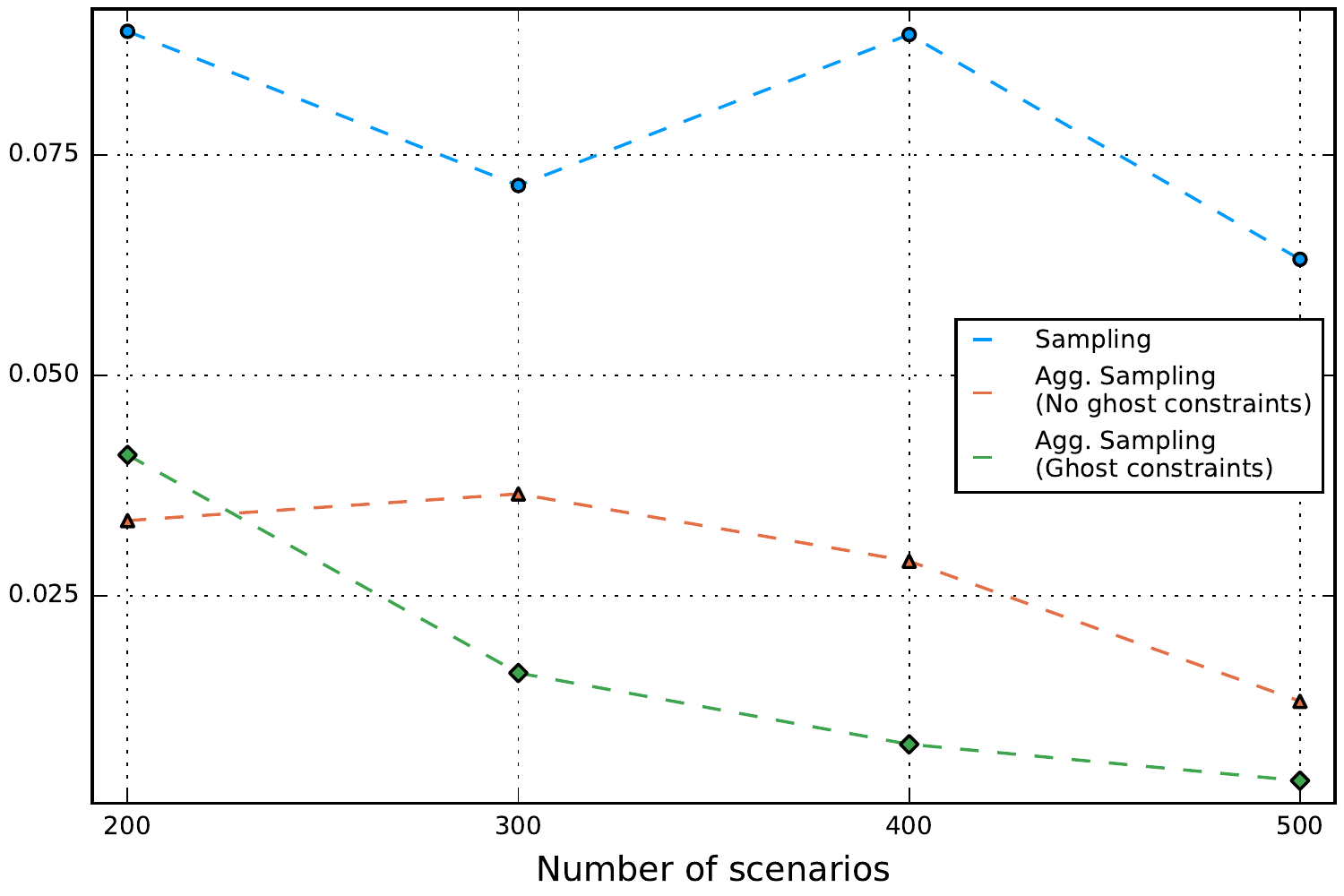}
    \caption{Best optimality gap}
    \label{fig:case-gap}
  \end{subfigure}
  \begin{subfigure}[b]{0.45\textwidth}
    \includegraphics[width=\textwidth]{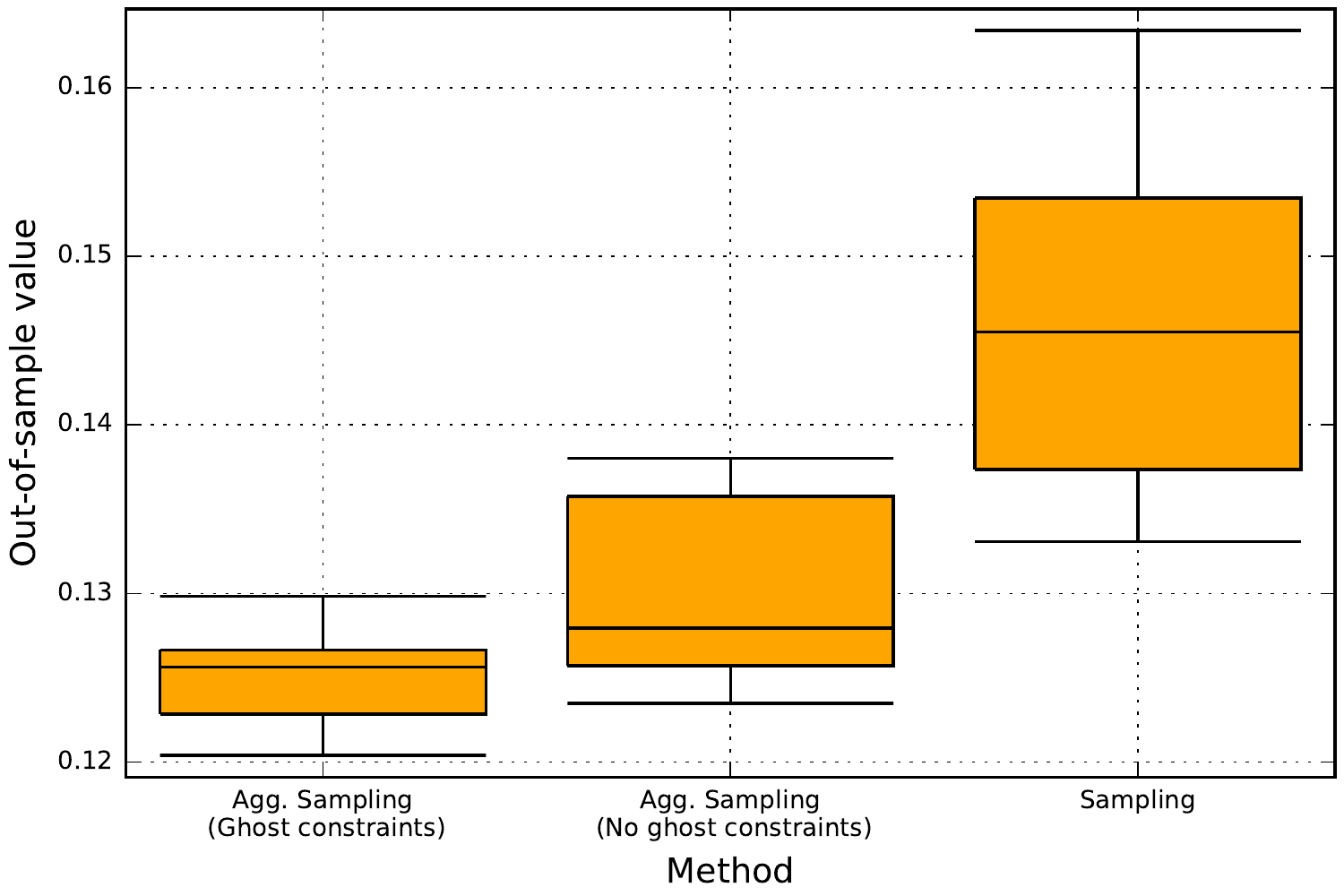}
    \caption{Out-of-sample values}
    \label{fig:case-out}
  \end{subfigure}
  \begin{subfigure}[b]{0.45\textwidth}
    \includegraphics[width=\textwidth]{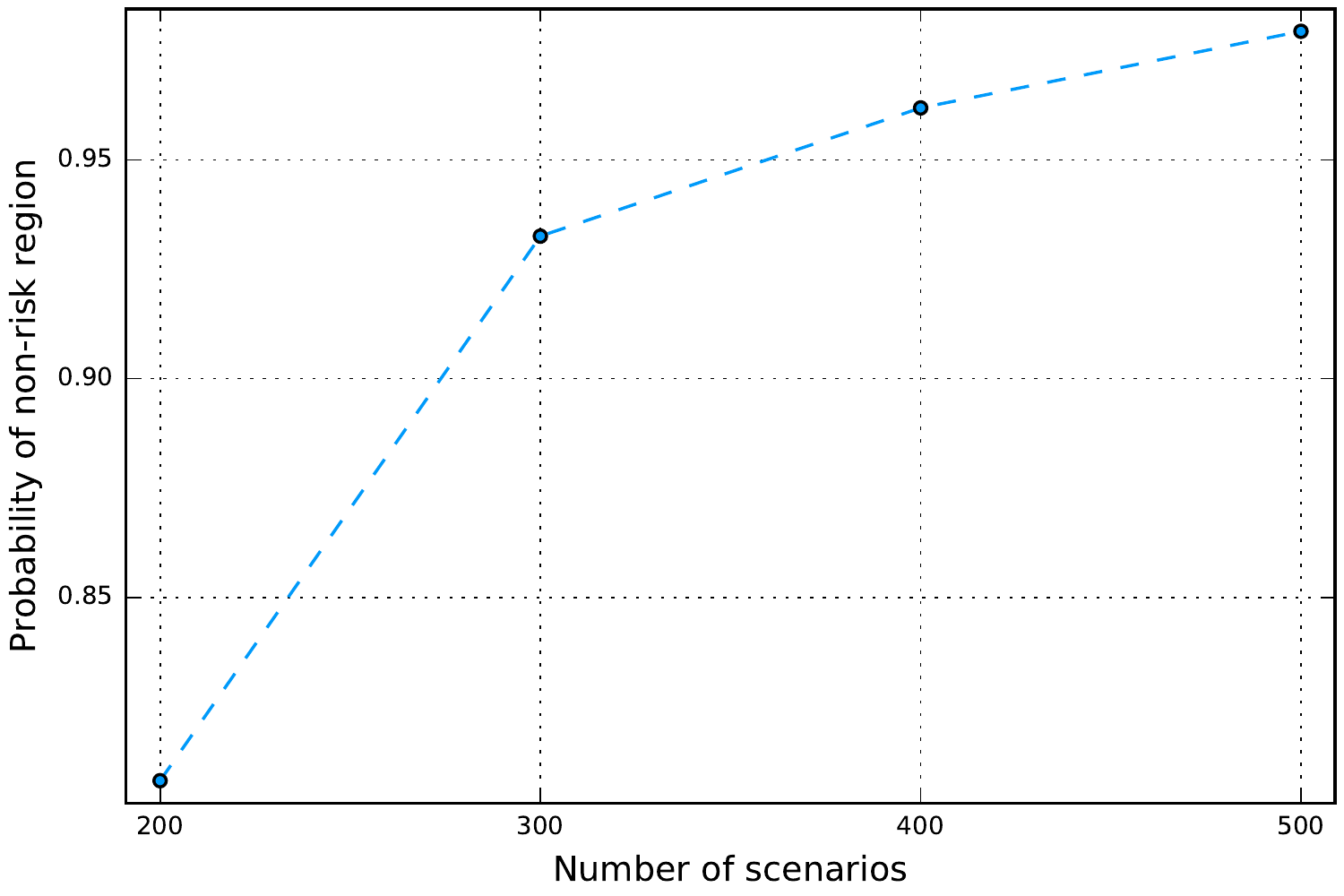}
    \caption{Prob. of non-risk region}
    \label{fig:case-prob}
  \end{subfigure}
  \caption{Case study results}
  \label{fig:case-study-res}
\end{figure}
In terms of the optimality gap and the final out-of-sample values, both
aggregation sampling methods signifiantly outperform basic sampling. For the
smallest sample size $N=200$, the best optimality gap of aggregation sampling
with and without ghost constraints is similar, since at this point no
ghost constraints have been added. As the sample size increases, the ghost
constraints become tighter and the probability of the non-risk region increases.
This leads to a much smaller best optimality gap for aggregation sampling
with ghost constraints. Since we cannot verify that the ghost constraints
added are valid, we must view this gap with some caution. However,
the out-of-sample validation after the final iteration reveals
that aggregation sampling with ghost constraints is indeed producing higher quality solutions than aggregation sampling without ghost constraints.

\section{Conclusions}
\label{sec:portfolio-conclusions}

In the paper \cite{Fairbrother15a} we proposed a general
approach to scenario generation using risk regions for stochastic programs
with tail risk measure. As proof-of-concept we demonstrated
how this applied for portfolio selection problems for elliptically distributed returns. In this work, we have presented how this methodology may be used
for more realistic portfolio selection problems, in particular those which are high-dimensional, have non-elliptical assets returns and integer decision variables.


The main issue in applying the methodology to more realistic problems
was its extension to non-elliptical distributions. In order to do
this, we proposed the use of approximate risk regions, and derived
results which indicated our approach would be robust against small
misspecifications of risk region. Although this paper was focused
on portfolio selection, the results were derived for general stochastic
programs, which means that using approximate risk regions 
could work for other problems with tail risk measures.

We tested the performance of our methodology
for solving realistic problems where the return
distributions were fitted from real financial return
data. Aggregation sampling generally outperformed 
basic sampling in terms of solution quality and 
stability. We also showed that aggregation reduction
induces almost no error in the solution for reasonably sized scenario
sets. These results not only held for elliptical
distributions, but also non-elliptical distributions
for which we used approximate risk regions.

The effectiveness of using risk regions for scenario
generation depends upon the probability of the risk region:
the greater the probability of the non-risk region,
the more scenarios that can be aggregated. 
It follows directly from the definition of risk
regions that this probability decreases as the problem
becomes more constrained. Based on this observation,
a heuristic based on the SAA method was proposed, which
adds artificial constraints, called ghost constraints,
to the problem. As the algorithm progresses, the ghost
constraints are tightened which in effect allows
one to focus in on high quality solutions. This
algorithm was demonstrated on a difficult case
study problem, and was shown to significantly outperform
basic sampling and aggregation sampling without ghost constraints.
This algorithm was presented in a non-problem specific way and could
potentially be applied to other stochastic programs with tail risk
measure.




\bibliographystyle{alpha}
\bibliography{bibtex-files/stein}

\begin{thebibliography}{KSHdM01}

\bibitem[AC03]{AzzaCapi03}
Adelchi Azzalini and Antonella Capitanio.
\newblock Distributions generated by perturbation of symmetry with emphasis on
  a multivariate skew t-distribution.
\newblock {\em Journal of the Royal Statistical Society: Series B (Statistical
  Methodology)}, 65:367--389, 2003.

\bibitem[ADEH99]{Artzner98}
P.~Artzner, F.~Delbaen, J.~Eber, and D.~Heath.
\newblock Coherent measures of risk.
\newblock {\em Mathematical Finance}, 9(3):203--228, 1999.

\bibitem[AT02]{acerbi2002coherence}
Carlo Acerbi and Dirk Tasche.
\newblock On the coherence of expected shortfall.
\newblock {\em Journal of Banking \& Finance}, 26(7):1487--1503, 2002.

\bibitem[BM06]{BayrMorton05}
G{\"u}zin Bayraksan and David~P. Morton.
\newblock Assessing solution quality in stochastic programs.
\newblock {\em Mathematical Programming}, 108(2--3):495--514, sep 2006.

\bibitem[BT06]{bertsimas2006robust}
Dimitris Bertsimas and Aur{\'e}lie Thiele.
\newblock Robust and data-driven optimization: modern decision making under
  uncertainty.
\newblock In {\em Models, Methods, and Applications for Innovative Decision
  Making}, pages 95--122. INFORMS, 2006.

\bibitem[BV04]{boyd2004convex}
Stephen Boyd and Lieven Vandenberghe.
\newblock {\em Convex optimization}.
\newblock Cambridge university press, 2004.

\bibitem[Che65]{Chernikova65}
N.~K. Chernikova.
\newblock Algorithm for finding a general formula for the non-negative
  solutions of a system of linear inequalities.
\newblock {\em U.S.S.R. Computational Mathematics and Mathematical Physics},
  5(2):228--233, 1965.

\bibitem[CPS92]{cottle1992linear}
Richard~W Cottle, Jong-Shi Pang, and Richard~E Stone.
\newblock {\em The linear complementarity problem}, volume~60.
\newblock Siam, 1992.

\bibitem[DR99]{DeRo99}
R.~Dembo and D.~Rosen.
\newblock The practice of portfolio replication. {A} practical overview of
  forward and inverse problems.
\newblock {\em Annals of Operations Research}, 85:267--284, 1999.

\bibitem[FKN89]{FangKotzNg198911}
Kai-Tai Fang, Samuel Kotz, and Kai~Wang Ng.
\newblock {\em Symmetric Multivariate and Related Distributions (Chapman \&
  Hall/CRC Monographs on Statistics \& Applied Probability)}.
\newblock Chapman and Hall/CRC, 11 1989.

\bibitem[FTW17]{Fairbrother15a}
Jamie Fairbrother, Amanda Turner, and Stein~W. Wallace.
\newblock Problem-driven scenario generation: an analytical approach to
  stochastic programs with tail risk measure.
\newblock ArXiv e-print 1511.03074, November 2017.

\bibitem[HKW03]{HoylandEA00}
Kjetil H{\o}yland, Michal Kaut, and Stein~W. Wallace.
\newblock A heuristic for moment-matching scenario generation.
\newblock {\em Computational Optimization and Applications}, 24(2--3):169--185,
  2003.

\bibitem[HZFF10]{Huang2010185}
Dashan Huang, Shushang Zhu, Frank~J. Fabozzi, and Masao Fukushima.
\newblock Portfolio selection under distributional uncertainty: A relative
  robust \{CVaR\} approach.
\newblock {\em European Journal of Operational Research}, 203(1):185 -- 194,
  2010.

\bibitem[Jor96]{Jo96}
P.~Jorion.
\newblock {\em Value at Risk: The New Benchmark for Controlling Market Risk}.
\newblock Irwin Professional, 1996.

\bibitem[JR51]{JohnsonRogers51}
NL~Johnson and CA~Rogers.
\newblock The moment problem for unimodal distributions.
\newblock {\em The Annals of Mathematical Statistics}, 22(3):433--439, 1951.

\bibitem[KME00]{KlaassenEA00}
C.A.J. Klaassen, Ph.J. Mokveld, and B.~Van Es.
\newblock Squared skewness minus kurtosis bounded by 186/125 for unimodal
  distributions.
\newblock {\em Statistics \& Probability Letters}, 50(2):131--135, 2000.

\bibitem[KSHdM01]{KleywegtEA01}
Anton~J. Kleywegt, Alexander Shapiro, and Tito Homem-de Mello.
\newblock The sample average approximation method for stochastic discrete
  optimization.
\newblock {\em SIAM Journal on Optimization}, 12(2):479--502, 2001.

\bibitem[KW07]{KautWall03:2}
Michal Kaut and Stein~W. Wallace.
\newblock Evaluation of scenario-generation methods for stochastic programming.
\newblock {\em Pacific Journal of Optimization}, 3(2):257--271, 2007.

\bibitem[KWVZ07]{KautEA03}
Michal Kaut, Stein~W. Wallace, Hercules Vladimirou, and Stavros Zenios.
\newblock Stability analysis of portfolio management with conditional
  value-at-risk.
\newblock {\em Quantitative Finance}, 7(4):397--409, 2007.

\bibitem[LFB07]{lobo2007portfolio}
Miguel~Sousa Lobo, Maryam Fazel, and Stephen Boyd.
\newblock Portfolio optimization with linear and fixed transaction costs.
\newblock {\em Annals of Operations Research}, 152(1):341--365, 2007.

\bibitem[LV92]{LeVerge92}
H~Le~Verge.
\newblock A note on {C}hernikova's algorithm.
\newblock Technical Report 635, IRISA, Rennes, France, 1992.

\bibitem[Mar52]{Mark52}
H.M. Markowitz.
\newblock Portfolio selection.
\newblock {\em Journal of Finance}, 7:77--91, 1952.

\bibitem[Mar59]{Markowitz}
H.M. Markowitz.
\newblock {\em Portfolio selection: Efficient diversification of investment}.
\newblock Yale University Press, New Haven, 1959.

\bibitem[RS13]{royset2013optimal}
Johannes~O Royset and Roberto Szechtman.
\newblock Optimal budget allocation for sample average approximation.
\newblock {\em Operations Research}, 61(3):762--776, 2013.

\bibitem[RU00]{Rockafellar00}
R.~Tyrrell Rockafellar and Stan Uryasev.
\newblock Optimization of conditional value-at-risk.
\newblock {\em The Journal of Risk}, 2(3):21--41, 2000.

\bibitem[SB13]{stockbridge13}
Rebecca Stockbridge and G{\"u}zin Bayraksan.
\newblock A probability metrics approach for reducing the bias of optimality
  gap estimators in two-stage stochastic linear programming.
\newblock {\em Mathematical Programming}, 142(1--2):107--131, 2013.

\bibitem[Tas02]{tasche2002expected}
Dirk Tasche.
\newblock Expected shortfall and beyond.
\newblock {\em Journal of Banking \& Finance}, 26(7):1519--1533, 2002.

\bibitem[You98]{Young98}
Martin~R. Young.
\newblock A minimax portfolio selection rule with linear programming solution.
\newblock {\em Management Science}, 44(5):673--683, 1998.

\bibitem[Zie08]{Ziegler200804}
G\"{u}nter~M. Ziegler.
\newblock {\em Lectures on Polytopes (Graduate Texts in Mathematics)}.
\newblock Springer, 4 2008.

\end{thebibliography}

\appendix

\newgeometry{left=1cm,bottom=2.0cm, right=0cm}

\begin{landscape}
  \section{Reduction proportion tables}
  \label{sec:red-prop-tables}
  The following tables list the estimated probabilities of the non-risk region
  for a variety of distributions constructed from real data. See Section~\ref{sec:portfolio-numtests-prob}
  for details. Each table corresponds to a family of
  distributions at a given dimension, and each row gives the proportions
  for a given set of companies. In addition, the distributions corresponding the i-th row of each table 
  of dimension $d$ have been fitted using the same set of companies.
  \begin{table}[h!]
    \centering
    \begin{tabular}{|c|c|c|c|c|c|c|c|c|c|c|c|c|c|}
\hline
\multicolumn{7}{|c|}{$\beta = 0.95$} & \multicolumn{7}{|c|}{$\beta = 0.99$} \\
\hline
$x \leq 1.0$ & $x \leq 0.8$ & $x \leq 0.6$ & $x \leq 0.5$ & $x \leq 0.4$ & $x \leq 0.3$ & $x \leq 0.2$ & $x \leq 1.0$ & $x \leq 0.8$ & $x \leq 0.6$ & $x \leq 0.5$ & $x \leq 0.4$ & $x \leq 0.3$ & $x \leq 0.2$ \\
\hline
0.743 & 0.752 & 0.767 & 0.782 & 0.814 & 0.865 & 0.950 & 0.934 & 0.936 & 0.941 & 0.946 & 0.959 & 0.971 & 0.995 \\
0.738 & 0.744 & 0.760 & 0.777 & 0.809 & 0.855 & 0.949 & 0.922 & 0.925 & 0.928 & 0.934 & 0.949 & 0.965 & 0.992 \\
0.767 & 0.775 & 0.793 & 0.807 & 0.832 & 0.872 & 0.948 & 0.930 & 0.932 & 0.937 & 0.943 & 0.953 & 0.969 & 0.990 \\
0.763 & 0.771 & 0.784 & 0.801 & 0.830 & 0.880 & 0.951 & 0.931 & 0.934 & 0.944 & 0.949 & 0.957 & 0.973 & 0.987 \\
0.755 & 0.763 & 0.777 & 0.798 & 0.829 & 0.883 & 0.955 & 0.927 & 0.929 & 0.935 & 0.940 & 0.951 & 0.966 & 0.991 \\
\hline
\end{tabular}
\caption{Proportion of reduced scenarios for Normal distributed returns and $d = 5$}

  \end{table}
  \vspace{-0.6cm}
  \begin{table}[h!]
    \centering
    \begin{tabular}{|c|c|c|c|c|c|c|c|c|c|c|c|c|c|}
\hline
\multicolumn{7}{|c|}{$\beta = 0.95$} & \multicolumn{7}{|c|}{$\beta = 0.99$} \\
\hline
$x \leq 1.0$ & $x \leq 0.8$ & $x \leq 0.6$ & $x \leq 0.5$ & $x \leq 0.4$ & $x \leq 0.3$ & $x \leq 0.2$ & $x \leq 1.0$ & $x \leq 0.8$ & $x \leq 0.6$ & $x \leq 0.5$ & $x \leq 0.4$ & $x \leq 0.3$ & $x \leq 0.2$ \\
\hline
0.594 & 0.600 & 0.608 & 0.617 & 0.637 & 0.679 & 0.752 & 0.833 & 0.834 & 0.839 & 0.846 & 0.860 & 0.882 & 0.917 \\
0.617 & 0.621 & 0.632 & 0.647 & 0.669 & 0.703 & 0.777 & 0.851 & 0.852 & 0.856 & 0.860 & 0.868 & 0.879 & 0.914 \\
0.506 & 0.509 & 0.523 & 0.534 & 0.560 & 0.606 & 0.689 & 0.779 & 0.780 & 0.787 & 0.791 & 0.806 & 0.837 & 0.889 \\
0.564 & 0.566 & 0.573 & 0.590 & 0.615 & 0.658 & 0.748 & 0.827 & 0.828 & 0.835 & 0.846 & 0.857 & 0.877 & 0.921 \\
0.537 & 0.540 & 0.552 & 0.566 & 0.586 & 0.624 & 0.727 & 0.820 & 0.822 & 0.825 & 0.832 & 0.843 & 0.870 & 0.912 \\
\hline
\end{tabular}
\caption{Proportion of reduced scenarios for Normal distributed returns and $d = 10$}

  \end{table}
  \vspace{-0.6cm}
  \begin{table}[h!]
    \centering
    \begin{tabular}{|c|c|c|c|c|c|c|c|c|c|c|c|c|c|}
\hline
\multicolumn{7}{|c|}{$\beta = 0.95$} & \multicolumn{7}{|c|}{$\beta = 0.99$} \\
\hline
$x \leq 1.0$ & $x \leq 0.8$ & $x \leq 0.6$ & $x \leq 0.5$ & $x \leq 0.3$ & $x \leq 0.2$ & $x \leq 0.1$ & $x \leq 1.0$ & $x \leq 0.8$ & $x \leq 0.6$ & $x \leq 0.5$ & $x \leq 0.3$ & $x \leq 0.2$ & $x \leq 0.1$ \\
\hline
0.394 & 0.394 & 0.400 & 0.405 & 0.447 & 0.513 & 0.698 & 0.707 & 0.707 & 0.711 & 0.717 & 0.740 & 0.787 & 0.896 \\
0.325 & 0.326 & 0.332 & 0.342 & 0.392 & 0.457 & 0.635 & 0.653 & 0.653 & 0.655 & 0.662 & 0.696 & 0.740 & 0.851 \\
0.344 & 0.344 & 0.348 & 0.354 & 0.389 & 0.460 & 0.668 & 0.648 & 0.648 & 0.653 & 0.656 & 0.683 & 0.743 & 0.870 \\
0.384 & 0.385 & 0.390 & 0.401 & 0.440 & 0.507 & 0.708 & 0.695 & 0.695 & 0.698 & 0.704 & 0.740 & 0.782 & 0.896 \\
0.417 & 0.418 & 0.424 & 0.432 & 0.479 & 0.540 & 0.738 & 0.727 & 0.727 & 0.730 & 0.735 & 0.764 & 0.813 & 0.906 \\
\hline
\end{tabular}
\caption{Proportion of reduced scenarios for Normal distributed returns and $d = 20$}

  \end{table}
  \vspace{-0.6cm}
  \begin{table}[h!]
    \centering
    \begin{tabular}{|c|c|c|c|c|c|c|c|c|c|c|c|c|c|}
\hline
\multicolumn{7}{|c|}{$\beta = 0.95$} & \multicolumn{7}{|c|}{$\beta = 0.99$} \\
\hline
$x \leq 1.0$ & $x \leq 0.8$ & $x \leq 0.6$ & $x \leq 0.5$ & $x \leq 0.3$ & $x \leq 0.2$ & $x \leq 0.1$ & $x \leq 1.0$ & $x \leq 0.8$ & $x \leq 0.6$ & $x \leq 0.5$ & $x \leq 0.3$ & $x \leq 0.2$ & $x \leq 0.1$ \\
\hline
0.259 & 0.259 & 0.263 & 0.267 & 0.297 & 0.350 & 0.498 & 0.571 & 0.571 & 0.572 & 0.578 & 0.603 & 0.644 & 0.770 \\
0.264 & 0.266 & 0.269 & 0.272 & 0.299 & 0.347 & 0.511 & 0.587 & 0.587 & 0.589 & 0.591 & 0.616 & 0.661 & 0.790 \\
0.282 & 0.282 & 0.286 & 0.291 & 0.321 & 0.378 & 0.533 & 0.599 & 0.599 & 0.602 & 0.607 & 0.631 & 0.681 & 0.785 \\
0.247 & 0.247 & 0.251 & 0.257 & 0.281 & 0.333 & 0.502 & 0.555 & 0.555 & 0.556 & 0.558 & 0.586 & 0.630 & 0.769 \\
0.293 & 0.293 & 0.296 & 0.301 & 0.324 & 0.374 & 0.548 & 0.583 & 0.583 & 0.584 & 0.587 & 0.613 & 0.665 & 0.802 \\
\hline
\end{tabular}
\caption{Proportion of reduced scenarios for Normal distributed returns and $d = 30$}

  \end{table}
  \vspace{-0.6cm}
  \begin{table}[h!]
    \centering
    \begin{tabular}{|c|c|c|c|c|c|c|c|c|c|c|c|c|c|}
\hline
\multicolumn{7}{|c|}{$\beta = 0.95$} & \multicolumn{7}{|c|}{$\beta = 0.99$} \\
\hline
$x \leq 1.0$ & $x \leq 0.8$ & $x \leq 0.6$ & $x \leq 0.5$ & $x \leq 0.4$ & $x \leq 0.3$ & $x \leq 0.2$ & $x \leq 1.0$ & $x \leq 0.8$ & $x \leq 0.6$ & $x \leq 0.5$ & $x \leq 0.4$ & $x \leq 0.3$ & $x \leq 0.2$ \\
\hline
0.793 & 0.801 & 0.814 & 0.822 & 0.842 & 0.876 & 0.950 & 0.952 & 0.953 & 0.957 & 0.960 & 0.966 & 0.976 & 0.992 \\
0.775 & 0.782 & 0.796 & 0.812 & 0.837 & 0.877 & 0.946 & 0.949 & 0.950 & 0.954 & 0.956 & 0.961 & 0.972 & 0.988 \\
0.808 & 0.815 & 0.829 & 0.841 & 0.859 & 0.898 & 0.953 & 0.958 & 0.960 & 0.962 & 0.964 & 0.969 & 0.980 & 0.992 \\
0.799 & 0.808 & 0.819 & 0.828 & 0.855 & 0.882 & 0.950 & 0.949 & 0.951 & 0.954 & 0.957 & 0.965 & 0.977 & 0.990 \\
0.793 & 0.799 & 0.809 & 0.822 & 0.848 & 0.887 & 0.951 & 0.960 & 0.960 & 0.963 & 0.965 & 0.969 & 0.976 & 0.991 \\
\hline
\end{tabular}
\caption{Proportion of reduced scenarios for $t_{4.0}$ distributed returns and $d = 5$}

  \end{table}
  \vspace{-0.6cm}
  \begin{table}[h!]
    \centering
    \begin{tabular}{|c|c|c|c|c|c|c|c|c|c|c|c|c|c|}
\hline
\multicolumn{7}{|c|}{$\beta = 0.95$} & \multicolumn{7}{|c|}{$\beta = 0.99$} \\
\hline
$x \leq 1.0$ & $x \leq 0.8$ & $x \leq 0.6$ & $x \leq 0.5$ & $x \leq 0.4$ & $x \leq 0.3$ & $x \leq 0.2$ & $x \leq 1.0$ & $x \leq 0.8$ & $x \leq 0.6$ & $x \leq 0.5$ & $x \leq 0.4$ & $x \leq 0.3$ & $x \leq 0.2$ \\
\hline
0.689 & 0.691 & 0.700 & 0.709 & 0.720 & 0.750 & 0.804 & 0.916 & 0.916 & 0.917 & 0.919 & 0.926 & 0.931 & 0.949 \\
0.711 & 0.713 & 0.719 & 0.730 & 0.742 & 0.769 & 0.829 & 0.923 & 0.924 & 0.925 & 0.926 & 0.930 & 0.940 & 0.956 \\
0.616 & 0.617 & 0.630 & 0.640 & 0.656 & 0.677 & 0.754 & 0.895 & 0.896 & 0.898 & 0.900 & 0.905 & 0.915 & 0.935 \\
0.642 & 0.642 & 0.647 & 0.657 & 0.672 & 0.703 & 0.783 & 0.896 & 0.896 & 0.900 & 0.904 & 0.913 & 0.925 & 0.941 \\
0.652 & 0.655 & 0.666 & 0.675 & 0.690 & 0.723 & 0.785 & 0.905 & 0.905 & 0.907 & 0.908 & 0.913 & 0.924 & 0.944 \\
\hline
\end{tabular}
\caption{Proportion of reduced scenarios for $t_{4.0}$ distributed returns and $d = 10$}

  \end{table}
  \vspace{-0.6cm}
  \begin{table}[h!]
    \centering
    \begin{tabular}{|c|c|c|c|c|c|c|c|c|c|c|c|c|c|}
\hline
\multicolumn{7}{|c|}{$\beta = 0.95$} & \multicolumn{7}{|c|}{$\beta = 0.99$} \\
\hline
$x \leq 1.0$ & $x \leq 0.8$ & $x \leq 0.6$ & $x \leq 0.5$ & $x \leq 0.3$ & $x \leq 0.2$ & $x \leq 0.1$ & $x \leq 1.0$ & $x \leq 0.8$ & $x \leq 0.6$ & $x \leq 0.5$ & $x \leq 0.3$ & $x \leq 0.2$ & $x \leq 0.1$ \\
\hline
0.540 & 0.540 & 0.547 & 0.549 & 0.574 & 0.615 & 0.743 & 0.849 & 0.849 & 0.850 & 0.852 & 0.864 & 0.880 & 0.932 \\
0.461 & 0.463 & 0.467 & 0.475 & 0.509 & 0.560 & 0.703 & 0.835 & 0.836 & 0.840 & 0.844 & 0.858 & 0.870 & 0.919 \\
0.506 & 0.507 & 0.510 & 0.515 & 0.551 & 0.595 & 0.753 & 0.839 & 0.839 & 0.839 & 0.840 & 0.855 & 0.874 & 0.931 \\
0.511 & 0.511 & 0.514 & 0.519 & 0.562 & 0.612 & 0.753 & 0.860 & 0.860 & 0.862 & 0.865 & 0.876 & 0.894 & 0.939 \\
0.567 & 0.568 & 0.572 & 0.576 & 0.609 & 0.657 & 0.797 & 0.866 & 0.867 & 0.867 & 0.870 & 0.881 & 0.901 & 0.952 \\
\hline
\end{tabular}
\caption{Proportion of reduced scenarios for $t_{4.0}$ distributed returns and $d = 20$}

  \end{table}
  \vspace{-0.6cm}
  \begin{table}[h!]
    \centering
    \begin{tabular}{|c|c|c|c|c|c|c|c|c|c|c|c|c|c|}
\hline
\multicolumn{7}{|c|}{$\beta = 0.95$} & \multicolumn{7}{|c|}{$\beta = 0.99$} \\
\hline
$x \leq 1.0$ & $x \leq 0.8$ & $x \leq 0.6$ & $x \leq 0.5$ & $x \leq 0.3$ & $x \leq 0.2$ & $x \leq 0.1$ & $x \leq 1.0$ & $x \leq 0.8$ & $x \leq 0.6$ & $x \leq 0.5$ & $x \leq 0.3$ & $x \leq 0.2$ & $x \leq 0.1$ \\
\hline
0.434 & 0.434 & 0.436 & 0.439 & 0.459 & 0.491 & 0.612 & 0.806 & 0.806 & 0.807 & 0.808 & 0.823 & 0.840 & 0.891 \\
0.466 & 0.466 & 0.468 & 0.469 & 0.495 & 0.532 & 0.649 & 0.821 & 0.821 & 0.823 & 0.824 & 0.838 & 0.853 & 0.897 \\
0.443 & 0.443 & 0.445 & 0.448 & 0.474 & 0.512 & 0.637 & 0.821 & 0.822 & 0.822 & 0.824 & 0.834 & 0.854 & 0.898 \\
0.444 & 0.445 & 0.448 & 0.454 & 0.470 & 0.513 & 0.635 & 0.812 & 0.813 & 0.814 & 0.814 & 0.823 & 0.841 & 0.889 \\
0.417 & 0.417 & 0.419 & 0.421 & 0.444 & 0.487 & 0.617 & 0.808 & 0.808 & 0.810 & 0.811 & 0.823 & 0.844 & 0.891 \\
\hline
\end{tabular}
\caption{Proportion of reduced scenarios for $t_{4.0}$ distributed returns and $d = 30$}

  \end{table}
  \vspace{-0.6cm}
\end{landscape}

\newgeometry{left=1cm,bottom=2.0cm, right=2cm}
\section{Aggregation sampling tables}
\label{sec:agg-samp-tables}

The following tables list the relative reduction in the mean and standard deviation of optimality gaps
for aggregation sampling compared with sampling for a variety of distributions. See 
Section~\ref{sec:portfolio-numtests-optgap} for more details.

\begin{table}[h!]
  \centering
  \begin{tabular}{|c|c|c|c|c|c|}
\hline
\multicolumn{2}{|c|}{n = 100} & \multicolumn{2}{|c|}{n = 200} & \multicolumn{2}{|c|}{n = 500} \\
\hline
Mean Imp. & S.D. Imp. & Mean Imp. & S.D. Imp. & Mean Imp. & S.D. Imp. \\
\hline
2.747 & 2.542 & 3.226 & 3.321 & 3.697 & 2.871 \\
3.905 & 4.427 & 3.226 & 3.323 & 3.646 & 4.439 \\
3.803 & 2.993 & 4.889 & 3.538 & 4.567 & 3.927 \\
3.376 & 3.040 & 3.402 & 2.517 & 5.182 & 4.357 \\
3.240 & 3.257 & 3.432 & 2.246 & 4.807 & 4.708 \\
\hline
\end{tabular}
\caption{ Comparison for $d = 5$, $\beta = 0.95$, and Normal returns }

  \label{tab:portfolio-agg-sampling-normal_5}
\end{table}

\begin{table}[h!]
  \centering
  \begin{tabular}{|c|c|c|c|c|c|}
\hline
\multicolumn{2}{|c|}{n = 100} & \multicolumn{2}{|c|}{n = 200} & \multicolumn{2}{|c|}{n = 500} \\
\hline
Mean Imp. & S.D. Imp. & Mean Imp. & S.D. Imp. & Mean Imp. & S.D. Imp. \\
\hline
1.989 & 1.876 & 2.670 & 2.422 & 2.460 & 2.495 \\
2.018 & 2.494 & 2.711 & 2.227 & 3.126 & 2.864 \\
1.559 & 1.652 & 1.736 & 1.230 & 2.727 & 2.678 \\
1.869 & 2.089 & 2.275 & 2.181 & 2.551 & 2.731 \\
1.996 & 2.085 & 2.285 & 2.061 & 2.466 & 2.828 \\
\hline
\end{tabular}
\caption{ Comparison for $d = 10$, $\beta = 0.95$, and Normal returns }

  \label{tab:portfolio-agg-sampling-normal_10}
\end{table}

\begin{table}[h!]
  \centering
  \begin{tabular}{|c|c|c|c|c|c|}
\hline
\multicolumn{2}{|c|}{n = 500} & \multicolumn{2}{|c|}{n = 1000} & \multicolumn{2}{|c|}{n = 2000} \\
\hline
Mean Imp. & S.D. Imp. & Mean Imp. & S.D. Imp. & Mean Imp. & S.D. Imp. \\
\hline
2.357 & 2.124 & 2.890 & 3.039 & 3.026 & 2.809 \\
2.504 & 3.054 & 2.750 & 2.839 & 2.873 & 2.689 \\
2.308 & 1.963 & 2.546 & 2.854 & 2.803 & 2.791 \\
2.341 & 2.699 & 2.948 & 3.369 & 2.592 & 2.367 \\
2.802 & 2.657 & 3.421 & 2.494 & 3.725 & 3.547 \\
\hline
\end{tabular}
\caption{ Comparison for $d = 20$, $\beta = 0.99$, and Normal returns }

  \label{tab:portfolio-agg-sampling-normal_20}
\end{table}

\begin{table}[h!]
  \centering
  \begin{tabular}{|c|c|c|c|c|c|}
\hline
\multicolumn{2}{|c|}{n = 500} & \multicolumn{2}{|c|}{n = 1000} & \multicolumn{2}{|c|}{n = 2000} \\
\hline
Mean Imp. & S.D. Imp. & Mean Imp. & S.D. Imp. & Mean Imp. & S.D. Imp. \\
\hline
1.943 & 1.842 & 2.161 & 2.148 & 2.901 & 2.846 \\
1.779 & 2.195 & 2.197 & 2.067 & 2.590 & 2.483 \\
1.990 & 2.227 & 2.246 & 2.033 & 2.405 & 2.514 \\
2.019 & 2.012 & 2.076 & 2.057 & 2.010 & 1.891 \\
1.866 & 1.769 & 2.457 & 1.921 & 2.853 & 3.138 \\
\hline
\end{tabular}
\caption{ Comparison for $d = 30$, $\beta = 0.99$, and Normal returns }

  \label{tab:portfolio-agg-sampling-normal_30}
\end{table}

\begin{table}[h!]
  \centering
  \begin{tabular}{|c|c|c|c|c|c|}
\hline
\multicolumn{2}{|c|}{n = 100} & \multicolumn{2}{|c|}{n = 200} & \multicolumn{2}{|c|}{n = 500} \\
\hline
Mean Imp. & S.D. Imp. & Mean Imp. & S.D. Imp. & Mean Imp. & S.D. Imp. \\
\hline
2.857 & 2.661 & 2.762 & 1.981 & 3.500 & 3.709 \\
3.407 & 3.431 & 3.692 & 3.416 & 5.572 & 6.167 \\
4.335 & 3.062 & 3.872 & 4.195 & 3.244 & 3.149 \\
4.280 & 3.748 & 4.636 & 6.732 & 4.974 & 6.593 \\
2.578 & 1.773 & 3.664 & 3.500 & 4.019 & 4.160 \\
\hline
\end{tabular}
\caption{ Comparison for $d = 5$, $\beta = 0.95$, and $t_{4.0}$ returns }

  \label{tab:portfolio-agg-sampling-tdist_5}
\end{table}

\begin{table}[h!]
  \centering
  \begin{tabular}{|c|c|c|c|c|c|}
\hline
\multicolumn{2}{|c|}{n = 100} & \multicolumn{2}{|c|}{n = 200} & \multicolumn{2}{|c|}{n = 500} \\
\hline
Mean Imp. & S.D. Imp. & Mean Imp. & S.D. Imp. & Mean Imp. & S.D. Imp. \\
\hline
1.899 & 2.091 & 2.169 & 1.805 & 2.939 & 2.599 \\
2.078 & 1.910 & 2.358 & 2.229 & 2.982 & 2.340 \\
1.996 & 2.923 & 2.639 & 3.126 & 2.088 & 1.727 \\
2.658 & 2.958 & 2.436 & 2.222 & 2.357 & 2.312 \\
2.080 & 2.171 & 1.980 & 1.232 & 2.957 & 2.114 \\
\hline
\end{tabular}
\caption{ Comparison for $d = 10$, $\beta = 0.95$, and $t_{4.0}$ returns }

  \label{tab:portfolio-agg-sampling-tdist_10}
\end{table}

\begin{table}[h!]
  \centering
  \begin{tabular}{|c|c|c|c|c|c|}
\hline
\multicolumn{2}{|c|}{n = 500} & \multicolumn{2}{|c|}{n = 1000} & \multicolumn{2}{|c|}{n = 2000} \\
\hline
Mean Imp. & S.D. Imp. & Mean Imp. & S.D. Imp. & Mean Imp. & S.D. Imp. \\
\hline
4.142 & 5.028 & 4.215 & 4.383 & 5.571 & 5.221 \\
3.039 & 3.843 & 4.096 & 4.346 & 4.857 & 6.084 \\
3.378 & 3.831 & 4.020 & 4.267 & 5.007 & 5.617 \\
3.722 & 4.886 & 3.744 & 3.247 & 4.339 & 5.336 \\
3.616 & 3.524 & 4.999 & 3.739 & 5.116 & 6.277 \\
\hline
\end{tabular}
\caption{ Comparison for $d = 20$, $\beta = 0.99$, and $t_{4.0}$ returns }

  \label{tab:portfolio-agg-sampling-tdist_20}
\end{table}

\begin{table}[h!]
  \centering
  \begin{tabular}{|c|c|c|c|c|c|}
\hline
\multicolumn{2}{|c|}{n = 500} & \multicolumn{2}{|c|}{n = 1000} & \multicolumn{2}{|c|}{n = 2000} \\
\hline
Mean Imp. & S.D. Imp. & Mean Imp. & S.D. Imp. & Mean Imp. & S.D. Imp. \\
\hline
3.035 & 3.068 & 2.950 & 2.547 & 3.741 & 4.042 \\
2.359 & 1.983 & 3.513 & 5.068 & 3.384 & 3.029 \\
3.507 & 4.356 & 2.977 & 3.966 & 3.686 & 4.915 \\
2.950 & 3.005 & 3.079 & 1.964 & 3.936 & 4.240 \\
2.228 & 2.043 & 3.549 & 3.227 & 3.950 & 4.267 \\
\hline
\end{tabular}
\caption{ Comparison for $d = 30$, $\beta = 0.99$, and $t_{4.0}$ returns }

  \label{tab:portfolio-agg-sampling-tdist_30}
\end{table}

\begin{table}[h!]
  \centering
  \begin{tabular}{|c|c|c|c|c|c|}
\hline
\multicolumn{2}{|c|}{n = 100} & \multicolumn{2}{|c|}{n = 200} & \multicolumn{2}{|c|}{n = 500} \\
\hline
Mean Imp. & S.D. Imp. & Mean Imp. & S.D. Imp. & Mean Imp. & S.D. Imp. \\
\hline
1.917 & 1.601 & 2.766 & 3.020 & 3.352 & 2.644 \\
1.887 & 1.857 & 2.748 & 2.416 & 3.414 & 3.290 \\
3.171 & 3.489 & 4.433 & 3.427 & 3.949 & 3.774 \\
2.620 & 3.170 & 3.038 & 3.518 & 2.872 & 3.178 \\
2.391 & 2.408 & 2.027 & 1.891 & 3.466 & 3.434 \\
\hline
\end{tabular}
\caption{ Comparison for $d = 5$, $\beta = 0.95$, and Skew T returns }

  \label{tab:portfolio-agg-sampling-skewtdist_5}
\end{table}

\begin{table}[h!]
  \centering
  \begin{tabular}{|c|c|c|c|c|c|}
\hline
\multicolumn{2}{|c|}{n = 100} & \multicolumn{2}{|c|}{n = 200} & \multicolumn{2}{|c|}{n = 500} \\
\hline
Mean Imp. & S.D. Imp. & Mean Imp. & S.D. Imp. & Mean Imp. & S.D. Imp. \\
\hline
1.839 & 2.189 & 2.215 & 1.925 & 2.977 & 2.650 \\
1.631 & 2.021 & 2.203 & 2.087 & 2.150 & 2.554 \\
1.962 & 1.671 & 1.872 & 1.187 & 3.172 & 3.513 \\
1.627 & 1.868 & 1.661 & 2.136 & 1.775 & 1.439 \\
2.502 & 2.417 & 2.152 & 2.577 & 2.647 & 2.580 \\
\hline
\end{tabular}
\caption{ Comparison for $d = 10$, $\beta = 0.95$, and Skew T returns }

  \label{tab:portfolio-agg-sampling-skewtdist_10}
\end{table}

\begin{table}[h!]
  \centering
  \begin{tabular}{|c|c|c|c|c|c|}
\hline
\multicolumn{2}{|c|}{n = 500} & \multicolumn{2}{|c|}{n = 1000} & \multicolumn{2}{|c|}{n = 2000} \\
\hline
Mean Imp. & S.D. Imp. & Mean Imp. & S.D. Imp. & Mean Imp. & S.D. Imp. \\
\hline
4.646 & 5.803 & 4.921 & 4.384 & 5.843 & 6.268 \\
4.639 & 4.025 & 6.296 & 5.028 & 6.513 & 7.438 \\
3.355 & 3.840 & 3.655 & 3.163 & 3.305 & 3.359 \\
3.317 & 2.257 & 3.448 & 3.623 & 4.794 & 4.732 \\
3.395 & 3.365 & 3.164 & 3.145 & 4.351 & 4.306 \\
\hline
\end{tabular}
\caption{ Comparison for $d = 20$, $\beta = 0.99$, and Skew T returns }

  \label{tab:portfolio-agg-sampling-skewtdist_20}
\end{table}

\begin{table}[h!]
  \centering
  \begin{tabular}{|c|c|c|c|c|c|}
\hline
\multicolumn{2}{|c|}{n = 500} & \multicolumn{2}{|c|}{n = 1000} & \multicolumn{2}{|c|}{n = 2000} \\
\hline
Mean Imp. & S.D. Imp. & Mean Imp. & S.D. Imp. & Mean Imp. & S.D. Imp. \\
\hline
2.631 & 3.659 & 3.364 & 4.298 & 4.000 & 4.099 \\
2.285 & 2.809 & 2.667 & 3.201 & 3.482 & 2.882 \\
3.266 & 4.545 & 3.617 & 4.340 & 3.791 & 3.138 \\
2.923 & 3.334 & 3.750 & 3.796 & 4.304 & 5.492 \\
2.486 & 2.289 & 2.658 & 2.754 & 3.659 & 4.918 \\
\hline
\end{tabular}
\caption{ Comparison for $d = 30$, $\beta = 0.99$, and Skew T returns }

  \label{tab:portfolio-agg-sampling-skewtdist_30}
\end{table}

\newgeometry{left=1cm,bottom=2.0cm, right=0cm}
\begin{landscape}
\section{Reduction error tables}
\label{sec:red-err-tables}

The following tables list the mean error induced
by aggregating scenarios in the non-risk region for a variety of distributions. 
See Section~\ref{sec:portfolio-numtests-reduction} for details.

\centering
\vspace{0.5cm}
\begin{minipage}[b]{0.5\textwidth}
  \begin{tabular}{|c|c|c|c|c|c|}
\hline
\multicolumn{3}{|c|}{$\beta = 0.95$} & \multicolumn{3}{|c|}{$\beta = 0.99$} \\
\hline
$n = 100$ & $n = 200$ & $n = 500$ & $n = 100$ & $n = 200$ & $n = 500$ \\
\hline
0.000 & 0.000 & 0.000 & 0.008 & 0.002 & 0.000 \\
0.000 & 0.000 & 0.000 & 0.009 & 0.001 & 0.000 \\
0.000 & 0.000 & 0.000 & 0.005 & 0.002 & 0.000 \\
0.000 & 0.000 & 0.000 & 0.007 & 0.001 & 0.000 \\
0.000 & 0.000 & -0.000 & 0.007 & 0.001 & 0.000 \\
\hline
\end{tabular}
\captionof{table}{ Reduction error induced for d=5 Normal returns }

\end{minipage}
\hspace{0.8cm}
\begin{minipage}[b]{0.5\textwidth}
  \begin{tabular}{|c|c|c|c|c|c|}
\hline
\multicolumn{3}{|c|}{$\beta = 0.95$} & \multicolumn{3}{|c|}{$\beta = 0.99$} \\
\hline
$n = 100$ & $n = 200$ & $n = 500$ & $n = 100$ & $n = 200$ & $n = 500$ \\
\hline
0.000 & 0.000 & 0.000 & 0.006 & 0.000 & 0.000 \\
0.000 & 0.000 & 0.000 & 0.006 & 0.001 & 0.000 \\
0.000 & -0.000 & 0.000 & 0.006 & 0.001 & 0.000 \\
0.000 & 0.000 & -0.000 & 0.004 & 0.000 & 0.000 \\
0.000 & -0.000 & -0.000 & 0.005 & 0.000 & 0.000 \\
\hline
\end{tabular}
\captionof{table}{ Reduction error induced for d=10 Normal returns }

\end{minipage}

\vspace{0.2cm}

\begin{minipage}[b]{0.5\textwidth}
  \begin{tabular}{|c|c|c|c|c|c|}
\hline
\multicolumn{3}{|c|}{$\beta = 0.95$} & \multicolumn{3}{|c|}{$\beta = 0.99$} \\
\hline
$n = 100$ & $n = 200$ & $n = 500$ & $n = 100$ & $n = 200$ & $n = 500$ \\
\hline
0.000 & 0.000 & -0.000 & 0.003 & 0.000 & 0.000 \\
0.000 & -0.000 & -0.000 & 0.002 & 0.000 & 0.000 \\
0.000 & -0.000 & -0.000 & 0.002 & 0.000 & 0.000 \\
0.000 & 0.000 & 0.000 & 0.002 & 0.000 & -0.000 \\
0.000 & 0.000 & 0.000 & 0.003 & 0.000 & 0.000 \\
\hline
\end{tabular}
\captionof{table}{ Reduction error induced for d=20 Normal returns }

\end{minipage}
\hspace{0.8cm}
\begin{minipage}[b]{0.5\textwidth}
  \begin{tabular}{|c|c|c|c|c|c|}
\hline
\multicolumn{3}{|c|}{$\beta = 0.95$} & \multicolumn{3}{|c|}{$\beta = 0.99$} \\
\hline
$n = 100$ & $n = 200$ & $n = 500$ & $n = 100$ & $n = 200$ & $n = 500$ \\
\hline
0.000 & -0.000 & 0.000 & 0.001 & 0.000 & 0.000 \\
0.000 & -0.000 & 0.000 & 0.002 & 0.000 & 0.000 \\
-0.000 & 0.000 & -0.000 & 0.002 & 0.000 & -0.000 \\
-0.000 & -0.000 & 0.000 & 0.001 & 0.000 & -0.000 \\
0.000 & 0.000 & 0.000 & 0.001 & 0.000 & 0.000 \\
\hline
\end{tabular}
\captionof{table}{ Reduction error induced for d=30 Normal returns }

\end{minipage}

\vspace{0.2cm}

\begin{minipage}[b]{0.5\textwidth}
  \begin{tabular}{|c|c|c|c|c|c|}
\hline
\multicolumn{3}{|c|}{$\beta = 0.95$} & \multicolumn{3}{|c|}{$\beta = 0.99$} \\
\hline
$n = 100$ & $n = 200$ & $n = 500$ & $n = 100$ & $n = 200$ & $n = 500$ \\
\hline
0.001 & 0.000 & 0.000 & 0.014 & 0.005 & 0.000 \\
0.000 & 0.000 & 0.000 & 0.012 & 0.003 & 0.000 \\
0.000 & 0.000 & 0.000 & 0.017 & 0.002 & 0.001 \\
0.000 & 0.000 & 0.000 & 0.015 & 0.005 & 0.000 \\
0.001 & 0.000 & 0.000 & 0.015 & 0.004 & 0.001 \\
\hline
\end{tabular}
\captionof{table}{ Reduction error induced for d=5 $t_{4.0}$ returns }

\end{minipage}
\hspace{0.8cm}
\begin{minipage}[b]{0.5\textwidth}
  \begin{tabular}{|c|c|c|c|c|c|}
\hline
\multicolumn{3}{|c|}{$\beta = 0.95$} & \multicolumn{3}{|c|}{$\beta = 0.99$} \\
\hline
$n = 100$ & $n = 200$ & $n = 500$ & $n = 100$ & $n = 200$ & $n = 500$ \\
\hline
0.000 & 0.000 & 0.000 & 0.015 & 0.003 & 0.000 \\
0.000 & 0.000 & 0.000 & 0.015 & 0.004 & 0.000 \\
0.000 & -0.000 & 0.000 & 0.012 & 0.002 & -0.000 \\
0.000 & 0.000 & -0.000 & 0.017 & 0.004 & 0.000 \\
0.000 & -0.000 & -0.000 & 0.020 & 0.003 & 0.000 \\
\hline
\end{tabular}
\captionof{table}{ Reduction error induced for d=10 $t_{4.0}$ returns }

\end{minipage}

\vspace{0.2cm}

\begin{minipage}[b]{0.5\textwidth}
  \begin{tabular}{|c|c|c|c|c|c|}
\hline
\multicolumn{3}{|c|}{$\beta = 0.95$} & \multicolumn{3}{|c|}{$\beta = 0.99$} \\
\hline
$n = 100$ & $n = 200$ & $n = 500$ & $n = 100$ & $n = 200$ & $n = 500$ \\
\hline
0.000 & 0.000 & -0.000 & 0.013 & 0.003 & 0.000 \\
0.000 & 0.000 & -0.000 & 0.017 & 0.000 & 0.000 \\
0.000 & 0.000 & -0.000 & 0.016 & 0.003 & 0.000 \\
0.000 & 0.000 & 0.000 & 0.012 & 0.002 & 0.000 \\
0.000 & -0.000 & 0.000 & 0.016 & 0.002 & 0.000 \\
\hline
\end{tabular}
\captionof{table}{ Reduction error induced for d=20 $t_{4.0}$ returns }

\end{minipage}
\hspace{0.8cm}
\begin{minipage}[b]{0.5\textwidth}
  \begin{tabular}{|c|c|c|c|c|c|}
\hline
\multicolumn{3}{|c|}{$\beta = 0.95$} & \multicolumn{3}{|c|}{$\beta = 0.99$} \\
\hline
$n = 100$ & $n = 200$ & $n = 500$ & $n = 100$ & $n = 200$ & $n = 500$ \\
\hline
0.000 & -0.000 & 0.000 & 0.015 & 0.001 & 0.000 \\
0.000 & 0.000 & 0.000 & 0.015 & 0.004 & 0.000 \\
0.000 & -0.000 & 0.000 & 0.013 & 0.002 & 0.000 \\
0.000 & -0.000 & -0.000 & 0.015 & 0.004 & 0.000 \\
0.000 & 0.000 & 0.000 & 0.016 & 0.004 & 0.000 \\
\hline
\end{tabular}
\captionof{table}{ Reduction error induced for d=30 $t_{4.0}$ returns }

\end{minipage}

\begin{minipage}[b]{0.5\textwidth}
  \begin{tabular}{|c|c|c|c|c|c|}
\hline
\multicolumn{3}{|c|}{$\beta = 0.95$} & \multicolumn{3}{|c|}{$\beta = 0.99$} \\
\hline
$n = 100$ & $n = 200$ & $n = 500$ & $n = 100$ & $n = 200$ & $n = 500$ \\
\hline
0.000 & -0.000 & 0.000 & 0.001 & 0.000 & -0.000 \\
0.000 & 0.000 & 0.000 & 0.002 & 0.000 & 0.000 \\
0.000 & 0.000 & 0.000 & 0.000 & -0.000 & 0.000 \\
-0.000 & 0.000 & 0.000 & 0.000 & 0.000 & -0.000 \\
0.000 & 0.000 & 0.000 & 0.001 & 0.001 & 0.000 \\
\hline
\end{tabular}
\captionof{table}{ Reduction error induced for d=5 Moment Matching returns }

\end{minipage}
\hspace{0.8cm}
\begin{minipage}[b]{0.5\textwidth}
  \begin{tabular}{|c|c|c|c|c|c|}
\hline
\multicolumn{3}{|c|}{$\beta = 0.95$} & \multicolumn{3}{|c|}{$\beta = 0.99$} \\
\hline
$n = 100$ & $n = 200$ & $n = 500$ & $n = 100$ & $n = 200$ & $n = 500$ \\
\hline
-0.000 & -0.000 & -0.000 & 0.003 & 0.000 & 0.000 \\
-0.000 & -0.000 & 0.000 & 0.001 & -0.000 & -0.000 \\
0.000 & 0.000 & 0.000 & 0.509 & 0.001 & 0.001 \\
-0.000 & 0.000 & -0.000 & 0.003 & 0.000 & 0.000 \\
0.000 & 0.000 & 0.000 & 0.002 & 0.000 & 0.000 \\
\hline
\end{tabular}
\captionof{table}{ Reduction error induced for d=10 Moment Matching returns }

\end{minipage}

\vspace{0.2cm}

\begin{minipage}[b]{0.5\textwidth}
  \begin{tabular}{|c|c|c|c|c|c|}
\hline
\multicolumn{3}{|c|}{$\beta = 0.95$} & \multicolumn{3}{|c|}{$\beta = 0.99$} \\
\hline
$n = 100$ & $n = 200$ & $n = 500$ & $n = 100$ & $n = 200$ & $n = 500$ \\
\hline
0.000 & 0.000 & 0.000 & 0.089 & 0.000 & 0.000 \\
-0.000 & 0.000 & 0.000 & 0.407 & 0.003 & 0.000 \\
0.000 & -0.000 & -0.000 & 0.003 & 0.001 & 0.000 \\
0.000 & 0.000 & 0.000 & 0.231 & 0.001 & 0.000 \\
0.000 & 0.000 & 0.000 & 0.000 & 0.000 & 0.000 \\
\hline
\end{tabular}
\captionof{table}{ Reduction error induced for d=20 Moment Matching returns }

\end{minipage}
\hspace{0.8cm}
\begin{minipage}[b]{0.5\textwidth}
  \begin{tabular}{|c|c|c|c|c|c|}
\hline
\multicolumn{3}{|c|}{$\beta = 0.95$} & \multicolumn{3}{|c|}{$\beta = 0.99$} \\
\hline
$n = 100$ & $n = 200$ & $n = 500$ & $n = 100$ & $n = 200$ & $n = 500$ \\
\hline
0.000 & 0.000 & 0.000 & 0.205 & 0.000 & 0.000 \\
0.000 & 0.000 & 0.000 & 0.111 & 0.001 & 0.000 \\
0.000 & 0.000 & 0.000 & 0.206 & 0.001 & 0.000 \\
0.000 & 0.000 & 0.000 & 0.218 & 0.001 & 0.000 \\
0.000 & 0.000 & 0.000 & 0.071 & 0.001 & 0.000 \\
\hline
\end{tabular}
\captionof{table}{ Reduction error induced for d=30 Moment Matching returns }

\end{minipage}

\begin{table}[h!]
  \centering
  \begin{tabular}{|c|c|c|c|c|c|c|c|}
\hline
\multicolumn{2}{|c|}{$d=5$} & \multicolumn{2}{|c|}{$d=10$} & \multicolumn{2}{|c|}{$d=20$} & \multicolumn{2}{|c|}{$d=30$} \\
\hline
$\beta=0.95$ & $\beta=0.99$ & $\beta=0.95$ & $\beta=0.99$ & $\beta=0.95$ & $\beta=0.99$ & $\beta=0.95$ & $\beta=0.99$ \\
\hline
0.786 & 0.919 & 0.638 & 0.840 & 0.481 & 0.734 & 0.380 & 0.646 \\
0.743 & 0.900 & 0.623 & 0.827 & 0.477 & 0.741 & 0.365 & 0.647 \\
0.761 & 0.905 & 0.660 & 0.869 & 0.445 & 0.729 & 0.381 & 0.650 \\
0.770 & 0.907 & 0.625 & 0.847 & 0.455 & 0.716 & 0.366 & 0.655 \\
0.747 & 0.917 & 0.640 & 0.860 & 0.446 & 0.712 & 0.333 & 0.616 \\
\hline
\end{tabular}

  \caption{Proportions of scenarios reduced for moment matching scenario sets}
  \label{tab:portfolio-moments-proportions}
\end{table}

\end{landscape}


\end{document}